\begin{document}

\title{Optimised Maintenance of Datalog Materialisations}
\author{
    Pan Hu \and Boris Motik \and Ian Horrocks \\ 
	Department of Computer Science, University of Oxford \\
	Oxford, United Kingdom \\
	firstname.lastname@cs.ox.ac.uk
}

\maketitle

\begin{abstract}
To efficiently answer queries, datalog systems often materialise all
consequences of a datalog program, so the materialisation must be updated
whenever the input facts change. Several solutions to the materialisation
update problem have been proposed. The \emph{Delete/Rederive} (DRed) and the
\emph{Backward/Forward} (B/F) algorithms solve this problem for general
datalog, but both contain steps that evaluate rules `backwards' by matching
their heads to a fact and evaluating the partially instantiated rule bodies as
queries. We show that this can be a considerable source of overhead even on
very small updates. In contrast, the \emph{Counting} algorithm does not
evaluate the rules `backwards', but it can handle only nonrecursive rules. We
present two hybrid approaches that combine DRed and B/F with Counting so as to
reduce or even eliminate `backward' rule evaluation while still handling
arbitrary datalog programs. We show empirically that our hybrid algorithms are
usually significantly faster than existing approaches, sometimes by orders of
magnitude.
 
\end{abstract}

\section{Introduction}\label{sec:introduction}

Datalog \cite{abiteboul1995foundations} is a rule language that is widely used
in modern information systems. Datalog rules can declaratively specify tasks in
data analysis applications \cite{luteberget2016rule,piro2016semantic}, allowing
application developers to focus on the objective of the analysis---that is, on
specifying \emph{what} needs to be computed rather than \emph{how} to compute
it \cite{markl2014breaking}. Datalog can also capture OWL 2 RL
\cite{motik2009owl} ontologies possibly extended with SWRL rules
\cite{horrocks2004swrl}. It is implemented in systems such as WebPIE
\cite{urbani2012webpie}, VLog \cite{urbani2016column}, Oracle's RDF Store
\cite{wu2008implementing}, OWLIM \cite{bishop2011owlim}, and RDFox
\cite{nenov2015rdfox}, and it is extensively used in practice.

When performance is critical, datalog systems usually precompute the
\emph{materialisation} (i.e., the set of all consequences of a program and the
explicit facts) in a preprocessing step so that all queries can later be
evaluated directly over the materialisation. Recomputing the materialisation
from scratch whenever the explicit facts change can be expensive. Systems thus
typically use an \emph{incremental maintenance algorithm}, which aims to avoid
repeating most of the work. Fact insertion can be effectively handled using the
semina\"ive algorithm \cite{abiteboul1995foundations}, but deletion is much
more involved since one has to check whether deleted facts have derivations
that persist after the update. The \emph{Delete/Rederive} (DRed) algorithm
\cite{gupta1993maintaining,staudt1996incremental}, the \emph{Backward/Forward}
(B/F) algorithm \cite{motik2015incremental}, and the \emph{Counting} algorithm
\cite{gupta1993maintaining} are well-known solutions to this problem.

The DRed algorithm handles deletion by first overdeleting all facts that depend
on the removed explicit facts, and then rederiving the facts that still hold
after overdeletion. The rederivation stage further involves rederiving all
overdeleted facts that have alternative derivations, and then recomputing the
consequences of the rederived facts until a fixpoint is reached. The algorithm
and its variants have been extensively used in practice
\cite{DBLP:conf/semweb/UrbaniMJHB13,DBLP:conf/cikm/RenP11}. In contrast to
DRed, the B/F algorithm searches for alternative derivations immediately
(rather than after overdeletion) using a combination of backward and forward
chaining. This makes deletion exact and avoids the potential inefficiency of
overdeletion. In practice, B/F often, but not always, outperforms DRed
\cite{motik2015incremental}.

Both DRed and B/F search for derivations of deleted facts by evaluating rules
`backwards': for each rule whose head matches the fact being deleted, they
evaluate the partially instantiated rule body as a query; each query answer
thus corresponds to a derivation. This has two consequences. First, one can
examine rule instances that fire both before and after the update, which is
redundant. Second, evaluating rules `backwards' can be inherently more
difficult than matching the rules during initial materialisation: our
experiments show that this step can, in some cases, prevent effective
materialisation maintenance even for very small updates.

In contrast, the Counting algorithm \cite{gupta1993maintaining} does not
evaluate rules `backwards', but instead tracks the number of distinct
derivations of each fact: a counter is incremented when a new derivation for
the fact is found, and it is decremented when a derivation no longer holds. A
fact can thus be deleted when its counter drops to zero, without the
potentially costly `backward' rule evaluation. The algorithm can also be made
optimal in the sense that it considers precisely the rule instances that no
longer fire after the update and the rule instances that only fire after the
update. The main drawback of Counting is that, unlike DRed and B/F, it is
applicable only to nonrecursive rules \cite{nicolas1983outline}. Recursion is a
key feature of datalog, allowing one to express common properties such as
transitivity. Thus, despite its favourable properties, the Counting algorithm
does not provide us with a general solution to the materialisation maintenance
problem.

Towards the goal of developing efficient general-purpose maintenance
algorithms, in this paper we present two hybrid approaches that combine DRed
and B/F with Counting. The former tracks the nonrecursive and the recursive
derivations separately, which allows the algorithm to eliminate all `backward'
rule evaluation and also limit overdeletion. The latter tracks nonrecursive
derivations only, which eliminates `backward' rule evaluation for nonrecursive
rules; however, recursive rules can still be evaluated `backwards' to eagerly
identify alternative derivations. Both combinations can handle recursive rules,
and they exhibit `pay-as-you-go' behaviour in the sense that they become
equivalent to Counting on nonrecursive rules. Apart from the modest cost of
maintaining counters, our algorithms never involve more computation steps than
their unoptimised counterparts. Thus, our algorithms combine the best aspects
of DRed, B/F, and Counting: without incurring a significant cost, they
eliminate or reduce `backward' rule evaluation, are optimal for nonrecursive
rules, and can also handle recursive rules.

We have implemented our hybrid algorithms and have compared them with the
original DRed and B/F algorithms on several synthetic and real-life benchmarks.
Our experiments show that the cost of counter maintenance is negligible, and
that our hybrid algorithms typically outperform existing solutions, sometimes
by orders of magnitude. Our test system and datasets are available
online.\footnote{\url{http://krr-nas.cs.ox.ac.uk/2017/counting/}}

\section{Preliminaries}\label{sec:preliminaries}

We now introduce datalog with stratified negation. We fix countable, disjoint
sets of \emph{constants} and \emph{variables}. A \emph{term} is a constant or a
variable. A vector of terms is written $\vec t$, and we often treat it as a
set. A (\emph{positive}) \emph{atom} has the form $P(t_1,\dots,t_k)$ where $P$
is a $k$-ary \emph{predicate} and each $t_i$, ${1 \leq i \leq k}$, is a term. A
term or an atom is \emph{ground} if it does not contain variables. A
\emph{fact} is a ground atom, and a \emph{dataset} is a finite set of facts. A
rule $r$ has the form
\begin{displaymath}
    B_1 \wedge \dots \wedge B_m \wedge \stnot B_{m+1} \wedge \dots \wedge \stnot B_n \rightarrow H,
\end{displaymath}
where ${m \geq 0}$, ${n \geq 0}$, and $B_i$ and $H$ are atoms. The \emph{head}
$\head{r}$ of $r$ is the atom $H$, the \emph{positive body} $\pbody{r}$ of $r$
is the set of atoms ${B_1, \dots, B_m}$, and the \emph{negative body}
$\nbody{r}$ of $r$ is the set of atoms ${B_{m+1}, \dots, B_n}$. Rule $r$ must
be \emph{safe}: each variable occurring in $r$ must occur in a positive body
atom.

A \emph{substitution} $\sigma$ is a mapping of finitely many variables to
constants. For $\alpha$ a term, literal, rule, conjunction, or a vector or set
thereof, $\alpha\sigma$ is the result of replacing each occurrence of a
variable $x$ in $\alpha$ with $\sigma(x)$ (if the latter is defined).

A \emph{stratification} $\lambda$ of a program $\Pi$ maps each predicate of
$\Pi$ to a positive integer such that, for each rule ${r \in \Pi}$ with
${\head{r} = P(\vec t)}$, (i)~${\lambda(P) \geq \lambda(R)}$ for each atom
${R(\vec s) \in \pbody{r}}$, and (ii)~${\lambda(P) > \lambda(R)}$ for each atom
${R(\vec s) \in \nbody{r}}$. Program $\Pi$ is \emph{stratifiable} if a
stratification $\lambda$ of $\Pi$ exists. A rule $r$ with ${\head{r} = P(\vec
t)}$ is \emph{recursive} w.r.t.\ $\lambda$ if an atom ${R(\vec s) \in
\pbody{r}}$ exists such that ${\lambda(P) = \lambda(R)}$; otherwise, $r$ is
\emph{nonrecursive} w.r.t.\ $\lambda$. For each positive integer $s$, program
${\strat{\Pi}{s} = \{ r \in \Pi \mid \lambda(\head{r}) = s\}}$ is the
\emph{stratum} $s$ of $\Pi$, and programs $\rstrat{\Pi}{s}$ and
$\nrstrat{\Pi}{s}$ are the recursive and the nonrecursive subsets,
respectively, of $\strat{\Pi}{s}$. Finally, $\Out{s}$ is the set of all facts
that belong to stratum $s$---that is, ${\Out{s} = \{ P(\vec c) \mid \lambda(P)
= s \}}$.

Rule $r'$ is an \emph{instance} of a rule $r$ if a substitution $\sigma$ exists
mapping all variables of $r$ to constants such that ${r' = r\sigma}$. For $I$ a
dataset, the set $\ins{r}{I}$ of instances of $r$ obtained by applying a rule
$r$ to $I$, and the set $\apply{\Pi}{I}$ of facts obtained by applying a
program $\Pi$ to $I$ are defined as follows.
\begin{align}
    \ins{r}{I}      & = \{ r\sigma \mid \pbody{r}\sigma \subseteq I \text{ and } \nbody{r}\sigma \cap I = \emptyset \} \\
    \apply{\Pi}{I}  & = \bigcup_{r \in \Pi} \{ \head{r'} \mid r' \in \ins{r}{I} \}
\end{align}
We often say that each instance in $\ins{r}{I}$ \emph{fires} on $I$. We are now
ready to define the semantics of stratified datalog. Given a dataset $E$ of
\emph{explicit facts} and a stratification $\lambda$ of $\Pi$ with maximum
stratum index number $S$, we define the following sequence of datasets: let
${I^{0}_{\infty} = E}$; let ${I^s_0 = I^{s-1}_{\infty}}$ for index $s$ with ${1
\leq s \leq S}$; let ${I^s_i = I^s_{i-1} \cup \apply{\Pi^s}{I^{s}_{i-1}}}$ for
each integer $i > 0$; and let ${I^s_{\infty} = \bigcup_{i \geq 0}{I^s_i}}$. Set
$I^S_{\infty}$ is called the \emph{materialisation} of $\Pi$ w.r.t.\ $E$ and
$\lambda$. It is well known that $I^S_{\infty}$ does not depend on $\lambda$,
so we usually write it as $\mat{\Pi}{E}$. In this paper, we consider the
problem of maintaining $\mat{\Pi}{E}$: given $\mat{\Pi}{E}$ and datasets $E^-$
and $E^+$, our algorithm computes ${\mat{\Pi}{(E \setminus E^-) \cup E^+}}$
incrementally while minimising the amount of work.

\section{Motivation and Intuition}\label{sec:motivation}

As motivation for our work, we next discuss how evaluating rules `backwards'
can be a significant source of inefficiency during materialisation maintenance.
We base our discussion on the DRed algorithm for simplicity, but our
conclusions apply to the B/F algorithm as well.

\subsection{The DRed Algorithm}

To make our discussion precise, we first present the DRed algorithm
\cite{gupta1993maintaining,staudt1996incremental}. Let $\Pi$ be a program with
a stratification $\lambda$, let $E$ be a set of explicit facts, and assume that
the materialisation ${I = \mat{\Pi}{E}}$ of $\Pi$ w.r.t.\ $E$ has been
computed. Moreover, assume that $E$ should be updated by deleting $E^-$ and
inserting $E^+$. The DRed algorithm efficiently modifies the `old'
materialisation $I$ to the `new' materialisation ${I' = \mat{\Pi}{(E \setminus
E^-) \cup E^+}}$ by deleting some facts and adding others; we call such facts
\emph{affected} by the update.

Due to the update, some rule instances that fire on $I$ will no longer fire on
$I'$, and some rule instances that do not fire on $I$ will fire on $I'$; we
also call such rule instances \emph{affected} by the update. A key problem in
materialisation maintenance is to identify the affected rule instances.
Clearly, the body of each affected rule instance must contain an affected fact.
Based on this observation, the affected rule instances can be efficiently
identified by the following generalisation of the operators $\ins{r}{I}$ and
$\apply{\Pi}{I}$ from Section~\ref{sec:preliminaries}. In particular, let
$\ipos$, $\ineg$, $P$, and $N$ be datasets such that ${P \subseteq \ipos}$ and
${N \cap \ineg = \emptyset}$; then, let
\begin{align}
    &\begin{array}{@{}l@{\;}l@{}}
        \multicolumn{2}{@{}l@{}}{\ins{r}{\ipos, \ineg \appargs P, N} \; =} \\
        \quad \{ r\sigma \mid    & \pbody{r}\sigma \subseteq \ipos \text{ and } \nbody{r}\sigma \cap \ineg = \emptyset, \text{ and} \\
                                 & \pbody{r}\sigma \cap P \neq \emptyset \text{ or } \nbody{r}\sigma \cap N \neq \emptyset \}
    \end{array} \label{eq:ruleinstancepn}
\end{align}
and let
\begin{displaymath}
    \apply{\Pi}{\ipos, \ineg \appargs P, N} = \bigcup_{r \in \Pi} \{ \head{r'} \mid r' \in \ins{r}{\ipos, \ineg \appargs P, N} \}.
\end{displaymath}
Intuitively, the positive and the negative rule atoms are evaluated in $\ipos$
and $\ineg$; sets $P$ and $N$ identify the affected positive and negative
facts; ${\ins{r}{\ipos, \ineg \appargs P, N}}$ are the affected instances of
$r$; and ${\apply{\Pi}{\ipos, \ineg \appargs P, N}}$ are the affected
consequences of $\Pi$. We define ${\ins{r}{\ipos, \ineg}}$ and
${\apply{\Pi}{\ipos, \ineg}}$ analogously to above, but without the condition
`${\pbody{r}\sigma \cap P \neq \emptyset}$ or ${\nbody{r}\sigma \cap N \neq
\emptyset}$'. We omit for readability $\ineg$ whenever ${\ipos = \ineg}$, and
furthermore we omit $N$ when ${N = \emptyset}$. Sets ${\apply{\Pi}{\ipos,
\ineg}}$ and ${\apply{\Pi}{\ipos, \ineg \appargs P, N}}$ can be computed
efficiently in practice by evaluating the body of each rule ${r \in \Pi}$ as a
conjunctive query and instantiating the head as needed.

Algorithm~\ref{alg:dred} formalises DRed. The algorithm processes each stratum
$s$ and accumulates the necessary changes to $I$ in the set $D$ of
\emph{overdeleted} and the set $A$ of \emph{added} facts. The materialisation
is updated in line~\ref{alg:dred:update-I}, so, prior to that, $I$ and ${(I
\setminus D) \cup A}$ are the `old' and the `new' materialisation,
respectively. The computation proceeds in three phases.

In the \emph{overdeletion} phase, $D$ is extended with all facts that depend on
a deleted fact. In line~\ref{alg:dred:del:ND-init} the algorithm identifies the
facts that are explicitly deleted (${E^- \cap \Out{s}}$) or are affected by
deletions in the previous strata (${\apply{\strat{\Pi}{s}}{I \appargs D
\setminus A, A \setminus D}}$), and then in
lines~\ref{alg:dred:del:loop}--\ref{alg:dred:del:endloop} it computes their
consequences. It uses a form of the semina\"ive strategy, which ensures that
each rule instance is considered only once during overdeletion.

In the \emph{one-step rederivation} phase, $R$ is computed as the set of facts
that have been overdeleted, but that hold nonetheless. To this end, in
line~\ref{alg:dred:OneStepRederive} the algorithm considers each fact $F$ in
${D \cap \Out{s}}$, and it adds $F$ to $R$ if $F$ is explicit or it is
rederived by a rule instance. The latter involves evaluating rules `backwards':
the algorithm identifies each rule ${r \in \strat{\Pi}{s}}$ whose head can be
matched to $F$, and it evaluates over the `new' materialisation the body of $r$
as a query with the head variables bound; fact $F$ holds if the query returns
at least one answer. As we discuss shortly, this step can be a major source of
inefficiency in practice, and the main contribution of this paper is
eliminating `backward' rule evaluation and thus significantly improving the
performance.

In the \emph{insertion} step, in line~\ref{alg:dred:ins:NA-init} the algorithm
combines the one-step rederived facts ($R$) with the explicitly added facts
(${E^+ \cup \Out{s}}$) and the facts added due to the changes in the previous
strata ($\apply{\strat{\Pi}{s}}{(I \setminus D) \cup A \appargs A \setminus D,
D \setminus A}$), and then in
lines~\ref{alg:dred:ins:loop}--\ref{alg:dred:ins:loop:end} it computes all of
their consequences and adds them to $A$. Again, the semina\"ive strategy
ensures that each rule instance is considered only once during insertion.

\begin{algorithm}[t]
\fontsize{8}{10}\selectfont
\caption{\textsc{DRed}$(\Pi, \lambda, E, I, E^-, E^+)$} \label{alg:dred}
\begin{algorithmiccont}
    \State $D \defeq A \defeq \emptyset$, \quad $E^- = (E^- \cap E) \setminus E^+$, \quad $E^+ = E^+ \setminus E$                               \label{alg:dred:init}
    \For{each stratum index $s$ with ${1 \leq s \leq S}$}                                                                                       \label{alg:dred:stratum-loop}
        \State \Call{Overdelete}{}                                                                                                              \label{alg:dred:Overdelete}
        \State $R \defeq \{ F \in D \cap \Out{s} \mid F \in E \setminus E^- \text{ or there exist } r \in \strat{\Pi}{s}$ and
        \Statex \hspace{4em} $r' \in \ins{r}{I \setminus (D \setminus A), I \cup A} \text{ with } F = \head{r'} \}$                             \label{alg:dred:OneStepRederive}
        \State \Call{Insert}{}                                                                                                                  \label{alg:dred:Insert}
    \EndFor                                                                                                                                     \label{alg:dred:stratum-loop:end}
    \State $E \defeq (E \setminus E^-) \cup E^+$, \quad $I \defeq (I \setminus D) \cup A$                                                       \label{alg:dred:update-I}
    \Statex
    \Procedure{Overdelete}{}
        \State $N_D \defeq (E^- \cap \Out{s}) \cup \apply{\strat{\Pi}{s}}{I \appargs D \setminus A, A \setminus D}$                             \label{alg:dred:del:ND-init}
        \Loop                                                                                                                                   \label{alg:dred:del:loop}
            \State $\Delta_D \defeq N_D \setminus D$                                                                                            \label{alg:dred:del:DeltaD}
            \If{$\Delta_D = \emptyset$}                                                                                                         \label{alg:dred:del:if}
                \Break
            \EndIf
            \State $N_D \defeq \apply{\rstrat{\Pi}{s}}{I \setminus (D \setminus A), I \cup A \appargs \Delta_D}$                                \label{alg:dred:del:ND-update}
            \State $D \defeq D \cup \Delta_D$                                                                                                   \label{alg:dred:del:D-update}
        \EndLoop                                                                                                                                \label{alg:dred:del:endloop}
    \EndProcedure
    \Statex
    \Procedure{Insert}{}
        \State $N_A \defeq R \cup (E^+ \cap \Out{s}) \cup \apply{\strat{\Pi}{s}}{(I \setminus D) \cup A \appargs A \setminus D, D \setminus A}$ \label{alg:dred:ins:NA-init}
        \Loop                                                                                                                                   \label{alg:dred:ins:loop}
            \State $\Delta_A \defeq N_A \setminus ((I \setminus D) \cup A)$                                                                     \label{alg:dred:ins:DeltaA}
            \If{$\Delta_A = \emptyset$}
                \Break
            \EndIf
            \State $A \defeq A \cup \Delta_A$                                                                                                   \label{alg:dred:ins:A-update}
            \State $N_A \defeq \apply{\rstrat{\Pi}{s}}{(I \setminus D) \cup A \appargs \Delta_A}$                                               \label{alg:dred:ins:NA-update}
        \EndLoop                                                                                                                                \label{alg:dred:ins:loop:end}
    \EndProcedure
\end{algorithmiccont}
\end{algorithm}

\subsection{Problems with Evaluating Rules `Backwards'}

The one-step rederivation in line~\ref{alg:dred:OneStepRederive} of
Algorithm~\ref{alg:dred} evaluates rules `backwards'. In this section we
present two examples that demonstrate how this can be a major source of
inefficiency. Both examples are derived from datasets we used in our empirical
evaluation that we present in Section~\ref{sec:evaluation}; hence, these
problems actually arise in practice.

Our discussion depends on several details. In particular, we assume that all
facts are indexed so that all facts matching any given atom (possibly
containing constants) can be identified efficiently. Moreover, we assume that
conjunctive queries corresponding to rule bodies are evaluated left-to-right:
for each match of the first conjunct, we partially instantiate the rest of the
body and match it recursively. Finally, we assume that query atoms are
reordered prior to evaluation to obtain an efficient evaluation plan.

\begin{example}\label{ex1}
Let $\Pi$ and $E$ be the program and the dataset as specified in
\eqref{inefficientbackwardrule} and \eqref{inefficientbackwarddataset},
respectively.
\begin{align}
    & R(x,y_1) \wedge R(x,y_2) \rightarrow S(y_1,y_2)       \label{inefficientbackwardrule} \\
    & E = \{ R(a_i,b),R(a_i,c_i) \mid 1 \leq i \leq n \}    \label{inefficientbackwarddataset}
\end{align}
The materialisation $\mat{\Pi}{E}$ consists of $E$ extended with facts
$S(b,b)$, $S(b,c_i)$, $S(c_i,b)$, and $S(c_i,c_i)$ for ${1 \leq i \leq n}$.

During materialisation, the body of rule \eqref{inefficientbackwardrule} can be
evaluated efficiently left-to-right: we match $R(x,y_1)$ to either $R(a_i,b)$
or $R(a_i,c_i)$; this instantiates $R(x,y_2)$ as $R(a_i,y_2)$, and we use the
index to find the matching facts $R(a_i,b)$ and $R(a_i,c_i)$. Thus, $R(x,y_1)$
has $2n$ matches, each of which contributes to two matches of $R(x,y_2)$, so
the overall cost of rule matching is $O(n)$. The rule body is symmetric, so
reordering the body atoms has no effect.

Now assume that we delete all $R(a_i,c_i)$ with ${1 \leq i \leq n}$. DRed then
overdeletes all $S(b,c_i)$, $S(c_i,b)$ and $S(c_i,c_i)$ facts in
lines~\ref{alg:dred:del:ND-init}--\ref{alg:dred:del:endloop}, and this can be
done efficiently as in the previous paragraph. Next, in one-step rederivation,
the algorithm will match these facts to the head of the rule
\eqref{inefficientbackwardrule} and obtain queries ${R(x,b) \wedge R(x,c_i)}$,
${R(x,c_i) \wedge R(x,b)}$, and ${R(x,c_i) \wedge R(x,c_i)}$. All but the last
of these queries contain atom $R(x,b)$ and, no matter how we reorder the body
atoms of \eqref{inefficientbackwardrule}, we have $n$ queries where $R(x,b)$ is
evaluated first. Each of these $n$ queries identifies $n$ candidate matches
$R(a_i,b)$ using the index only to find out that the second atom cannot be
matched. Thus, $R(x,b)$ is matched to $n^2$ facts in total, so the cost of
one-step rederivation is $O(n^2)$---one degree higher than for materialisation.
\end{example}

Example~\ref{ex1} shows that evaluating a rule `backwards' can be inherently
more difficult than evaluating it during materialisation, thus giving rise to a
dominating source of inefficiency. In fact, evaluating a rule with $m$ body
atoms `backwards' can be seen as answering a query with $m+1$ atoms, where the
head of the rule is an extra query atom; since the number of atoms determines
the complexity of query evaluation, this extra atom increases the algorithm's
complexity.

Our next example shows that this problem is exacerbated if the space of
admissible plans for queries corresponding to rule bodies is further
restricted. This is common in systems that provide \emph{built-in functions}.
In particular, to facilitate manipulation of concrete values such as strings or
integers, datalog systems often allow rule bodies to contain \emph{built-in
atoms} of the form ${(t \defeq \mathit{exp})}$, where $t$ is a term and
$\mathit{exp}$ is an expression constructed using constants, variables,
functions, and operators as usual. For example, a built-in atom can have the
form ${(z \defeq z_1 + z_2)}$, and it assigns to $z$ the sum of $z_1$ and
$z_2$. The set of supported functions vary among implementations, but a common
feature is that all values in $\mathit{exp}$ must be bound by prior atoms
before the built-in atom can be evaluated. As we show next, this can be
problematic.

\begin{example}\label{ex2}
Let program $\Pi$ consist of rules \eqref{ex2rule1} and \eqref{ex2rule2}. If we
read $B(s,t,n)$ as saying that there is an edge from node $s$ to node $t$ of
length $n$, then the program entails $D(s,n)$ if there exists a path of length
$n$ from node $a$ to node $s$.
\begin{align}
    B(a,y,z)                                                & \rightarrow D(y,z)    \label{ex2rule1} \\
    D(x,z_1) \wedge B(x,y,z_2) \wedge (z \defeq z_1 + z_2)  & \rightarrow D(y,z)    \label{ex2rule2}
\end{align}
Let $E$ be the dataset as specified below.
\begin{align*}
    E = \{ B(a,b_1,1), B(a,c_i,1), B(b_i,d_j,1) \mid 1 \leq i,j \leq n \}
\end{align*}

During materialisation, rule \eqref{ex2rule1} first derives $D(b_1,1)$ and all
$D(c_i,1)$ with ${1 \leq i \leq n}$, so the cost of this step is $O(n)$. Next,
atom $D(x,z_1)$ in rule \eqref{ex2rule2} is matched to $n$ facts $D(c_i,1)$
without deriving anything. Atom $D(x,z_1)$ is also matched to $D(b_1,1)$ once,
so atom $B(x,y,z_2)$ is instantiated to $B(b_1,y,z_2)$ and matched to $n$ facts
$B(b_1,d_j,1)$, deriving $n$ facts $D(d_j,2)$. Thus, the cost of rule matching
is $O(n)$.

Now assume that $B(a,b_1,1)$ is deleted. Then, $D(b_1,1)$ and all $D(d_j,2)$
can be efficiently overdeleted as in the previous paragraph, but trying to
prove them is much more difficult. Matching each $D(d_j,2)$ to the head of
\eqref{ex2rule1} produces a query $B(a,d_j,2)$, which does not produce a rule
instance. Moreover, matching $D(d_j,2)$ to the head of \eqref{ex2rule2}
produces a query ${D(x,z_1) \wedge B(x,d_j,z_2) \wedge (2 \defeq z_1+z_2)}$.
Now, as we discussed earlier, $z_1$ and $z_2$ must both be bound before we can
evaluate the built-in atom ${(2 \defeq z_1+z_2)}$. If we evaluate
$B(x,d_j,z_2)$ first, then we try $n$ facts $B(b_i,d_j,1)$ with ${1 \leq i \leq
n}$; for each of them, atom $D(x,z_1)$ is instantiated as $D(b_i,z_1)$ and is
not matched in the surviving facts. In contrast, if we evaluate $D(x,z_1)$
first, then we try $n$ facts $D(c_i,1)$; for each of them, atom $B(x,d_j,z_2)$
is instantiated as $B(c_i,y,z_2)$ and is not matched. Thus, regardless of how
we reorder the body of \eqref{ex2rule2}, the first atom considers a total of
$n^2$ facts, so the cost of one-step rederivation is $O(n^2)$.

To overcome this, one might rewrite the built-in atom as ${(z_1 \defeq z -
z_1)}$ or ${(z_2 \defeq z - z_2)}$ so that it can be evaluated immediately
after $z$ and either $z_1$ or $z_2$ are bound. Either way, one-step
rederivation still takes $O(n^2)$ steps on our example. Also, built-in
expressions are often not invertible.
\end{example}

\section{Combining DRed with Counting}\label{sec:dredcnt}

We now address the inefficiencies we outlined in Section~\ref{sec:motivation}.
Towards this goal, in Section~\ref{sec:dredcnt:intuition} we first present the
intuitions, and then in Section~\ref{sec:dredcnt:formalisation} we formalise
our solution.

\subsection{Intuition}\label{sec:dredcnt:intuition}

As we already mentioned in Section~\ref{sec:introduction}, the Counting
algorithm \cite{gupta1993maintaining} does not evaluate rules `backwards';
instead, it tracks the number of derivations of each fact. The main drawback of
Counting is that it cannot handle recursive rules. We now illustrate the
intuition behind our DRed$^c$ algorithm, which combines DRed with Counting in a
way that eliminates `backward' rule evaluation, while still supporting
recursive rules.

The DRed$^c$ algorithm associates with each fact two counters that track the
derivations via the nonrecursive and the recursive rules separately. The
counters are decremented (resp.\ incremented) when the associated fact is
derived in overdeletion (resp.\ insertion), which allows for two important
optimisations. First, as in the Counting algorithm, the nonrecursive counter
always reflects the number of derivations from facts in earlier strata; hence,
a fact with a nonzero nonrecursive counter should never be overdeleted because
it clearly remains true after the update. This optimisation captures the
behaviour of Counting on nonrecursive rules and it also helps limit
overdeletion. Second, if we never overdelete facts with nonzero nonrecursive
counters, the only way for a fact to still hold after overdeletion is if its
recursive counter is nonzero; hence, we can replace `backward' rule evaluation
by a simple check of the recursive counter. Note, however, that the recursive
counters can be checked only after overdeletion finishes. This optimisation
extends the idea of Counting to recursive rules to completely avoid `backward'
rule evaluation. The following example illustrates these ideas and compares
them to DRed.

\begin{example}\label{ex3}
Let $\Pi$ be the program containing rule~\eqref{ex3rule1}.
\begin{align}
    A(x) \wedge B(x,y) \rightarrow A(y) \label{ex3rule1}
\end{align}
Moreover, let $E$ be defined as follows:
\begin{align*}
     E = \{ A(a), A(b), A(d), B(a,c), B(b,c), B(c,d), B(d,e) \}
\end{align*}
The materialisation $\mat{\Pi}{E}$ extends $E$ with $A(c)$ and $A(e)$.
Figure~\ref{fig:ex3} shows the dependencies between derivations using arrows.
For clarity, we do not show the $B$-facts.

Now assume that $A(a)$ is deleted. The standard DRed algorithm first
overdeletes $A(a)$, $A(c)$, $A(d)$, and $A(e)$; it rederives $A(d)$ since the fact 
is in $E \setminus E^-$; it rederives $A(c)$ by evaluating rule~\eqref{ex3rule1} 
`backwards'; and it derives $A(d)$ and $A(e)$ from the rederived facts.

Now consider applying the DRed$^c$ to the same update. For each fact,
Figure~\ref{fig:ex3} shows a pair consisting of the nonrecursive and the
recursive counter before the update (row T1), after overdeletion (row T2), and
after the update (row T3). Note that the presence of a fact in $E$ is akin to a
nonrecursive derivation, so facts $A(a)$, $A(b)$, and $A(d)$ have nonrecursive
derivation counts of one before the update. Now $A(c)$ is derived from $A(a)$
and $A(b)$ using the recursive rule \eqref{ex3rule1}, so the recursive counter
for $A(c)$ is two. Analogously, $A(d)$ and $A(e)$ have just one recursive
derivation each. During overdeletion, $A(a)$ is first removed from $E$, so the
nonrecursive counter of $A(a)$ is decremented to zero and the fact is deleted.
Since $A(a)$ derives $A(c)$ via rule \eqref{ex3rule1}, the recursive counter of
$A(c)$ is decremented; since the nonrecursive counter of $A(c)$ is zero, the
fact is overdeleted. Since $A(c)$ derives $A(d)$ via rule \eqref{ex3rule1}, the
recursive counter of $A(d)$ is decremented. Now the nonrecursive counter of
$A(d)$ is nonzero, so we know that $A(d)$ holds after the update; hence, the
fact is \emph{not} overdeleted, and the overdeletion phase stops. Thus, while
DRed overdeletes four facts, DRed$^c$ overdeletes only $A(a)$ and $A(c)$, and
does not `touch' $A(e)$.

Next, DRed$^c$ proceeds to one-step rederivation. The recursive counter of
$A(c)$ is nonzero, which means that the fact has a recursive derivation (from
$A(b)$ in this case) that is not affected. Thus, DRed$^c$ rederives $A(c)$
without any `backward' rule evaluation.

Finally, DRed$^c$ applies insertion. Since $A(c)$ derives $A(d)$ via
\eqref{ex3rule1}, the recursive counter of $A(d)$ is incremented. Fact $A(d)$,
however, was not overdeleted, so insertion stops.
\end{example}

\begin{figure}[tb]
\centering
\begin{tikzpicture}
    \tikzstyle{derived}=[font=\boldmath, line width=1pt]

    \node (at1) at (0,2.2)      {T1: (1,0)};
    \node (at2) at (0,1.8)      {T2: (0,0)};
    \node (at3) at (0,1.4)      {T3: (0,0)};
    \node (a) at (0,1)          {$A(a)$};
    \node (b) at (0,-1)         {$A(b)$};
    \node (bt1) at (0,-1.4)     {T1: (1,0)};
    \node (bt2) at (0,-1.8)     {T2: (1,0)};
    \node (bt3) at (0,-2.2)     {T3: (1,0)};
    \node (ct1) at (2,1.2)      {T1: (0,2)};
    \node (ct2) at (2,0.8)      {T2: (0,1)};
    \node (ct3) at (2,0.4)      {T3: (0,1)};
    \node (c) at (2,0)          {$A(c)$};
    \node (dt1) at (4,1.2)      {T1: (1,1)};
    \node (dt2) at (4,0.8)      {T2: (1,0)};
    \node (dt3) at (4,0.4)      {T3: (1,1)};
    \node (d) at (4,0)          {$A(d)$};
    \node (e) at (4,-1)         {};
    \node (et1) at (6,1.2)      {T1: (0,1)};
    \node (et2) at (6,0.8)      {T2: (0,1)};
    \node (et3) at (6,0.4)      {T3: (0,1)};
    \node (f) at (6,0)          {$A(e)$};
    \node (g) at (-1.2,1)       {};
    \node (h) at (-1.2,-1)      {};

    \draw[-latex] (a) to node[below] {} (c);
    \draw[-latex] (b) to node[below] {} (c);
    \draw[-latex] (c) to node[below] {} (d);
    \draw[-latex] (d) to node[below] {} (f);
    \draw[-latex] (e) to node[below] {} (d);
    \draw[-latex] (g) to node[below] {} (a);
    \draw[-latex] (h) to node[below] {} (b);

\end{tikzpicture}
\caption{Derivations for Example~\ref{ex3}}\label{fig:ex3}
\end{figure}
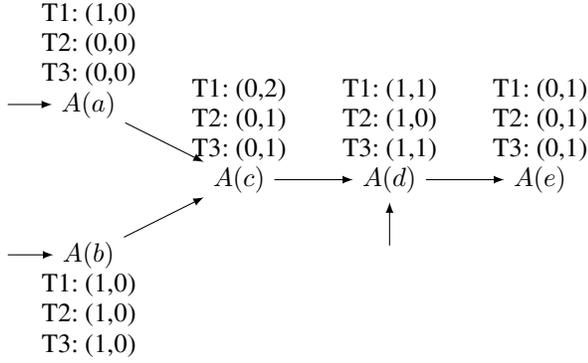

By avoiding `backward' rule evaluation, DRed$^c$ removes the dominating source
of inefficiency on Examples~\ref{ex1} and~\ref{ex2}. In fact, on the
nonrecursive program from Example~\ref{ex1}, the recursive counter is never
used and DRed$^c$ performs the same inferences as the Counting algorithm.

\subsection{Formalisation}\label{sec:dredcnt:formalisation}

We now formalise our DRed$^c$ algorithm. Our definitions use the standard
notion of \emph{multisets}---a generalisation of sets where each element is
associated with a positive integer called the \emph{multiplicity} specifying
the number of the element's occurrences in the multiset. Moreover, $\oplus$ is
the multiset union operator, which adds the elements' multiplicities. If an
operand of $\oplus$ is a set, it is treated as a multiset where all elements
have multiplicity one. Finally, we extend the notion of rule matching to
correctly reflect the number of times a fact is derived: for $\ipos$, $\ineg$,
$P$, and $N$ datasets with ${P \subseteq \ipos}$ and ${N \cap \ineg =
\emptyset}$, we define ${\mapply{\Pi}{\ipos, \ineg \appargs P, N}}$ as the
multiset containing a distinct occurrence of $\head{r'}$ for each rule ${r \in
\Pi}$ and its instance ${r' \in \ins{r}{\ipos, \ineg \appargs P, N}}$. This
multiset can be computed analogously to ${\apply{\Pi}{\ipos, \ineg \appargs P,
N}}$.

Just like DRed, the DRed$^c$ takes as input a program $\Pi$, a stratification
$\lambda$, a set of explicit facts $E$ and its materialisation ${I =
\mat{\Pi}{E}}$, and the sets of facts $E^-$ and $E^+$ to remove from and add to
$E$. Additionally, the algorithm also takes as input maps $\Cnr$ and $\Cr$ that
associate each fact $F$ with its nonrecursive and recursive counters $\Cnr[F]$
and $\Cr[F]$, respectively. These maps should correctly reflect the relevant
numbers of derivations. Formally, $\Cnr$ and $\Cr$ must be \emph{compatible}
with $\Pi$, $\lambda$, and $E$, which is the case if ${\Cnr[F] = \Cr[F] = 0}$
for each fact ${F \not\in I}$, and, for each fact ${F \in I}$ and $s$ the
stratum index such that ${F \in \Out{s}}$ (i.e., $s$ is the index of the
stratum that $F$ belongs to),
\begin{itemize}
    \item $\Cnr[F]$ is the multiplicity of $F$ in ${E \oplus
    \mapply{\nrstrat{\Pi}{s}}{I}}$, and
        
    \item $\Cr[F]$ is the multiplicity of $F$ in
    ${\mapply{\rstrat{\Pi}{s}}{I}}$.
\end{itemize}
For simplicity, we assumes that $\Cnr$ and $\Cr$ are defined on all facts, and
that ${\Cnr[F] = \Cr[F] = 0}$ holds for ${F \not\in I}$; thus, we can simply
increment the counters for each newly derived fact in procedure
\Call{Insert}{}. In practice, however, one can maintain counters only for the
derived facts and initialise the counters to zero for the freshly derived facts.

DRed$^c$ is formalised in Algorithm~\ref{alg:dredcnt}. Its structure is similar
to DRed, with the following main differences: instead of evaluating rules
`backwards', one-step rederivation simply checks the recursive counters
(line~\ref{alg:dredcnt:OneStepRederive}); a fact is overdeleted only if the
nonrecursive derivation counter is zero (line~\ref{alg:dredcnt:del:DeltaD});
and the derivation counters are decremented in overdeletion
(lines~\ref{alg:dredcnt:del:decnrloop}--\ref{alg:dredcnt:del:decrloop:end}
and~\ref{alg:dredcnt:del:ND-update}--\ref{alg:dredcnt:del:ND-updateend}) and
incremented in insertion
(lines~\ref{alg:dredcnt:ins:incnrloop}--\ref{alg:dredcnt:ins:incrloop:end}
and~\ref{alg:dredcnt:ins:incrsecondloop}--\ref{alg:dredcnt:ins:NA-update}). The
algorithm also accumulates changes to the materialisation in sets $D$ and $A$
by iteratively processing the strata of $\lambda$ in three phases.

\begin{algorithm}[t]
\fontsize{8}{10}\selectfont
\caption{\textsc{DRed$^c$}$(\Pi, \lambda, E, I, E^-, E^+, \Cnr, \Cr)$}\label{alg:dredcnt}
\begin{algorithmiccont}
    \State $D \defeq A \defeq \emptyset$, \quad $E^- = (E^- \cap E) \setminus E^+$, \quad $E^+ = E^+ \setminus E$                       \label{alg:dredcnt:init}
    \For{each stratum index $s$ with ${1 \leq s \leq S}$}                                                                               \label{alg:dredcnt:stratum-loop}
        \State \Call{Overdelete}{}                                                                                                      \label{alg:dredcnt:Overdelete}
        \State $R \defeq \{ F \in D \cap \Out{s} \mid \Cr[F] > 0 \}$                                                                    \label{alg:dredcnt:OneStepRederive}
        \State \Call{Insert}{}                                                                                                          \label{alg:dredcnt:Insert}
    \EndFor                                                                                                                             \label{alg:dredcnt:stratum-loop:end}
    \State $E \defeq (E \setminus E^-) \cup E^+$, \quad $I \defeq (I \setminus D) \cup A$                                               \label{alg:dredcnt:update-I}
    \Statex
    \Procedure{Overdelete}{}
        \State $N_D \defeq \emptyset$
        \For{$F \in (E^- \cap \Out{s}) \oplus \mapply{\nrstrat{\Pi}{s}}{I \appargs D \setminus A, A \setminus D}$}                      \label{alg:dredcnt:del:decnrloop}
            \State $N_D \defeq N_D \cup \{ F \}$, \quad $\Cnr[F] \defeq \Cnr[F] - 1$                                                    \label{alg:dredcnt:del:decnrloop:end}
        \EndFor
        \For{$F \in \mapply{\rstrat{\Pi}{s}}{I \appargs D \setminus A, A \setminus D}$}                                                 \label{alg:dredcnt:del:decrloop}
            \State $N_D \defeq N_D \cup \{ F \}$, \quad $\Cr[F] \defeq \Cr[F] - 1$                                                      \label{alg:dredcnt:del:decrloop:end}
        \EndFor
        \Loop                                                                                                                           \label{alg:dredcnt:del:loop}
            \State $\Delta_D \defeq \{F \in N_D \setminus D \mid \Cnr[F] = 0\}$                                                         \label{alg:dredcnt:del:DeltaD}
            \If{$\Delta_D = \emptyset$}                                                                                                 \label{alg:dredcnt:del:if}
                \Break
            \EndIf
            \For{$F \in \mapply{\rstrat{\Pi}{s}}{I \setminus (D \setminus A), I \cup A \appargs \Delta_D}$}                             \label{alg:dredcnt:del:ND-update}
                \State $N_D \defeq N_D \cup \{ F \}$, \quad $\Cr[F] \defeq \Cr[F] - 1$                                                  \label{alg:dredcnt:del:ND-updateend}
            \EndFor
            \State $D \defeq D \cup \Delta_D$                                                                                           \label{alg:dredcnt:del:D-update}
        \EndLoop                                                                                                                        \label{alg:dredcnt:del:endloop}
    \EndProcedure
    \Statex
    \Procedure{Insert}{}
        \State $N_A \defeq R$                                                                                                           \label{alg:dredcnt:ins:NAequalsR}
        \For{$F \in (E^+ \cap \Out{s}) \oplus \mapply{\nrstrat{\Pi}{s}}{(I \setminus D) \cup A \appargs A \setminus D, D \setminus A}$} \label{alg:dredcnt:ins:incnrloop}
            \State $N_A \defeq N_A \cup \{ F \}$, \quad $\Cnr[F] \defeq \Cnr[F] + 1$                                                    \label{alg:dredcnt:ins:incnrloop:end}
        \EndFor
        \For{$F \in \mapply{\rstrat{\Pi}{s}}{(I \setminus D) \cup A \appargs A \setminus D, D \setminus A}$}                            \label{alg:dredcnt:ins:incrloop}
            \State $N_A \defeq N_A \cup \{F\}$, \quad $\Cr[F] \defeq \Cr[F] + 1$                                                        \label{alg:dredcnt:ins:incrloop:end}
        \EndFor
        \Loop                                                                                                                           \label{alg:dredcnt:ins:loop}
            \State $\Delta_A \defeq N_A \setminus ((I \setminus D) \cup A)$                                                             \label{alg:dredcnt:ins:DeltaA}
            \If{$\Delta_A = \emptyset$}
                \Break
            \EndIf
            \State $A \defeq A \cup \Delta_A$                                                                                           \label{alg:dredcnt:ins:A-update}
            \For{$F \in \mapply{\rstrat{\Pi}{s}}{(I \setminus D) \cup A \appargs \Delta_A}$}                                            \label{alg:dredcnt:ins:incrsecondloop}
                \State $N_A \defeq N_A \cup \{ F \}$, \quad $\Cr[F] \defeq \Cr[F] + 1$
            \EndFor                                                                                                                     \label{alg:dredcnt:ins:NA-update}
        \EndLoop                                                                                                                        \label{alg:dredcnt:ins:loop:end}
    \EndProcedure
\end{algorithmiccont}
\end{algorithm}

In the overdeletion phase, DRed$^c$ first considers explicitly deleted facts or
facts affected by the changes in earlier strata
(lines~\ref{alg:dredcnt:del:decnrloop}--\ref{alg:dredcnt:del:decrloop:end}).
This is analogous to line~\ref{alg:dred:del:ND-init} of DRed, but DRed$^c$ must
distinguish $\nrstrat{\Pi}{s}$ from $\rstrat{\Pi}{s}$ so it can decrement the
appropriate counters. Next, DRed$^c$ identifies the set $\Delta_D$ of facts
that have not yet been deleted and whose nonrecursive counter is zero
(line~\ref{alg:dredcnt:del:DeltaD}): a fact with a nonzero nonrecursive counter
will always be part of the `new' materialisation. Note that recursive
derivations can be cyclic, so we cannot use the recursive counter to further
constrain overdeletion at this point. Then, in
lines~\ref{alg:dredcnt:del:if}--\ref{alg:dredcnt:del:endloop} the algorithm
propagates consequences of $\Delta_D$ just like Algorithm~\ref{alg:dred}, with
additionally decrementing the recursive counters in
line~\ref{alg:dredcnt:del:ND-updateend}.

In the one-step rederivation phase, instead of evaluating rules `backwards',
DRed$^c$ just checks the recursive counter of each fact ${F \in D \cap
\Out{s}}$ (line~\ref{alg:dredcnt:OneStepRederive}): if ${\Cr[F] \neq 0}$, then
some derivations of $F$ were not `touched' by overdeletion so $F$ holds in the
`new' materialisation. Conversely, if ${\Cr[F] = 0}$, then ${F \in D}$
guarantees that ${\Cnr[F] = 0}$ holds as well, so $F$ is not one-step
rederivable by a rule in $\Pi$.

The insertion phase of DRed$^c$ just uses the semina\"ive evaluation while
incrementing the counters appropriately.

Without recursive rules, DRed$^c$ becomes equivalent to Counting, and it is
optimal in the sense that only affected rule instances are considered during
the update. Moreover, the computational complexities of both DRed$^c$ and DRed
are the same as for the semi-naive materialisation algorithm: ExpTime in
combined and PTime in data complexity \cite{dantsin2001complexity}. Finally,
DRed$^c$ never performs more inferences than DRed and is thus more efficient.
Theorem~\ref{theoremcorrectness} shows that our algorithm is correct, and its
proof is given in \iftoggle{withappendix}{full in Appendix~\ref{sec:proof}}{an
extended technical report \cite{extended-version}}.

\begin{restatable}{theorem}{dredcntcorrectness}\label{theoremcorrectness}
    Algorithm~\ref{alg:dredcnt} correctly updates ${I = \mat{\Pi}{E}}$ to ${I'
    = \mat{\Pi}{E'}}$ for ${E' = (E \setminus E^-) \cup E^+}$, and it updates
    $\Cnr$ and $\Cr$ so they are compatible with $\Pi$, $\lambda$, and $E'$.
\end{restatable}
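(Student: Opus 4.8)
The plan is to argue by induction on the stratum index $s$ processed by the main loop in lines~\ref{alg:dredcnt:stratum-loop}--\ref{alg:dredcnt:stratum-loop:end}. Write $I = \mat{\Pi}{E}$ and $I' = \mat{\Pi}{E'}$; for a fact set $J$, let $J|_{<s}$ and $J|_s$ be its restrictions to facts of strata below, resp.\ exactly, $s$. The invariant to carry into the iteration for stratum $s$ is: (i)~$D$ and $A$ contain only facts of strata $<s$ and $((I \setminus D) \cup A)|_{<s} = I'|_{<s}$; (ii)~$D \setminus A$ and $A \setminus D$ are exactly the facts of strata $<s$ removed from, resp.\ added to, $I$ by the update; (iii)~$\Cnr$ and $\Cr$ restricted to strata $<s$ are compatible with $\Pi$, $\lambda$, $E'$, and on strata $\geq s$ they are still compatible with $\Pi$, $\lambda$, $E$. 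Because $\mat{\Pi}{\cdot}$ and the counters are defined stratum by stratum, the base case $s=1$ is immediate, and the invariant for $s = S+1$ together with line~\ref{alg:dredcnt:update-I} yields the theorem. Everything then reduces to a single iteration: assuming the invariant, show that processing stratum $s$ deletes and adds exactly the affected facts of $\Out{s}$ and makes $\Cnr$ and $\Cr$ on $\Out{s}$ compatible with $E'$, re-establishing (i)--(iii) for $s+1$.

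Next I would analyse the overdeletion phase. The key structural observation is that a nonrecursive rule of $\strat{\Pi}{s}$ has every positive body atom in a stratum $<s$, so the ``nonrecursive source multiset'' $(E' \cap \Out{s}) \oplus \mapply{\nrstrat{\Pi}{s}}{J}$ of a fact of $\Out{s}$ depends only on $J|_{<s}$, which the invariant already equates with $I'|_{<s}$; hence the contribution of nonrecursive rules to $\Cnr$ is fully fixed once the earlier strata are done. I would then state loop invariants for lines~\ref{alg:dredcnt:del:loop}--\ref{alg:dredcnt:del:endloop}: at the top of each pass, for $F \in \Out{s}$, $\Cnr[F]$ equals the number of nonrecursive derivations of $F$ that fire in both $I$ and $I'$ (its eventual correct value minus the brand-new nonrecursive derivations, which insertion will add); and $\Cr[F]$ equals the number of recursive instances with head $F$ that fire in $I$, minus those whose body meets a fact of strata $<s$ that changed, minus those whose stratum-$s$ body meets the part of $D$ already committed. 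The first two subtractions are done once in lines~\ref{alg:dredcnt:del:decrloop}--\ref{alg:dredcnt:del:decrloop:end}; the last is maintained because line~\ref{alg:dredcnt:del:ND-update} fires on each $\Delta_D$ and, since $A \cap \Out{s} = \emptyset$ during this phase, an instance is hit exactly when the first of its stratum-$s$ body facts enters $D$ (afterwards it no longer fires over $I \setminus (D \setminus A)$), so it is counted exactly once. The remaining, more delicate, claim is that every fact of $I|_s$ \emph{not} placed into $D$ indeed lies in $I'$. I would prove this by well-founded induction on the rank of a fact in the materialisation sequence $I^s_0 \subseteq I^s_1 \subseteq \cdots$ defining $I$: a kept fact is either in $E'$, or has $\Cnr[F] > 0$ (hence a surviving nonrecursive derivation, hence lies in $I'$), or all of its derivations are recursive and one of them --- a rank-witnessing one --- has its stratum-$s$ body facts of strictly smaller rank and, because that instance was never hit by the $\Delta_D$ propagation, none of them in $D$; the induction hypothesis then places that body, and hence $F$, in $I'$.

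With overdeletion understood, the check in line~\ref{alg:dredcnt:OneStepRederive} is justified as follows: for $F \in D \cap \Out{s}$ we have $\Cnr[F] = 0$, since a positive nonrecursive counter would have kept $F$ out of every $\Delta_D$; thus $F \in I'$ iff $F$ has a recursive derivation inside $I'$, which by the $\Cr$-invariant and the rank argument is exactly the case $\Cr[F] > 0$. For insertion I would use the standard seminaïve invariant --- $((I \setminus D) \cup A)|_{\leq s}$ under-approximates $I'|_{\leq s}$ and $N_A$ holds every head still to be (re)derived --- augmented with: for $F \in \Out{s}$, $\Cnr[F]$ has now additionally been incremented once per nonrecursive derivation of $F$ that fires in $I'$ but not in $I$, and $\Cr[F]$ once per recursive derivation of $F$ that fires in $I'$ but not in $I$. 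Adding the overdeletion and insertion contributions gives precisely the multiplicities required by compatibility with $E'$, while the loop's fixpoint yields $((I \setminus D) \cup A)|_{\leq s} = I'|_{\leq s}$, re-establishing (i)--(iii) for $s+1$ and closing the induction. (An alternative would be to take the correctness of plain DRed as given and show that DRed$^c$ produces the same final $(I \setminus D) \cup A$, but that equivalence itself rests on the counter semantics, so a direct argument seems no harder.)

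The principal obstacle is the interaction of recursion with the counter-based rederivation test: because recursive derivations may be cyclic, one cannot conclude ``$\Cr[F] > 0$ after overdeletion implies $F \in I'$'' by propagating loop invariants alone, and the auxiliary well-founded induction on materialisation rank must be arranged so that the kept stratum-$s$ facts and the surviving recursive instances are shown to support one another without circularity. A secondary, mechanical difficulty is keeping the multiset accounting exact --- checking that each affected rule instance is visited exactly once by the seminaïve strategy and that no counter is ever pushed below its true value --- so that the decrements of overdeletion and the increments of insertion compose to the intended multiplicities.
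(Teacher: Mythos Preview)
Your proposal is correct and matches the paper's approach: both argue by induction on the stratum index, establish via an inner induction on the materialisation sequence that overdeletion removes at least $I\setminus I'$ (you phrase this contrapositively, showing kept stratum-$s$ facts lie in $I'$), use the counter invariants to justify the rederivation test, and finish with a semina\"ive argument for insertion together with a multiset bookkeeping check for the counters. One small slip worth fixing: the ``iff'' in your rederivation paragraph is false---an overdeleted $F$ with $\Cnr[F]=0$ can lie in $I'$ via $E^+$ or a brand-new nonrecursive derivation while still having $\Cr[F]=0$---but only the direction $\Cr[F]>0\Rightarrow F\in I'$ is actually needed (this is the right-hand inclusion of the paper's property~\eqref{onesteprederivationproperty}), and the remaining cases are picked up by your insertion argument, so the overall proof is unaffected.
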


\section{Combining B/F with Counting}\label{sec:bfcnt}

The B/F algorithm by \citet{motik2015incremental} uses a combination of
backward and forward chaining that makes the deletion phase exact. More
specifically, when a fact ${F \in \Delta_D}$ is considered during deletion, the
algorithm uses a combination of backward and forward chaining to look for
alternative derivations of $F$, and it deletes $F$ only if no such derivation
can be found. Backward chaining allows B/F to be much more efficient than DRed
on many datasets, and this is particularly the case if a program contains many
recursive rules. Thus, we cannot hope to remove all `backward' rule evaluation
without eliminating the algorithm's main advantage.

Still, there is room for improvement: backward chaining involves `backward'
evaluation of both nonrecursive and recursive rules, and we can use
nonrecursive counters to eliminate the former. Algorithm~\ref{alg:bfcnt}
formalises B/F$^c$---our combination of the B/F algorithm by
\citet{motik2015incremental} with Counting. The main difference to the original
B/F algorithm is that B/F$^c$ associates with each fact a nonrecursive counter
that is maintained in
lines~\ref{alg:bfcnt:del:decnrcntloop}--\ref{alg:bfcnt:del:decnrcntloop:end}
and~\ref{alg:bfcnt:ins:incnrloop}--\ref{alg:bfcnt:ins:incnrloop:end}, and,
instead of evaluating nonrecursive rules `backwards' to explore alternative
derivations of a fact, it just checks in line~\ref{alg:bfcnt:saturate:base}
whether the nonrecursive counter is nonzero. We know that a fact holds if its
nonrecursive counter is nonzero; otherwise, we apply backward chaining to
\emph{recursive rules only}.We next describe the algorithm's steps in more
detail.

Procedure~\Call{DeleteUnproved}{} plays an analogous role to the overdeletion
step of DRed and DRed$^c$. The procedure maintains the nonrecursive counter for
each fact in the same way as DRed$^c$, and the main difference is that a fact
$F$ is deleted (i.e., added to $\Delta_D$) in line~\ref{alg:bfcnt:del:deltaD}
only if no alternative derivation can be found using a combination of backward
and forward chaining implemented in functions \Call{Check}{} and
\Call{Saturate}{}. If an alternative derivation is found, $F$ is added to the
set $P$ of \emph{proved} facts.

A call to \Call{Check}{$F$} searches for the alternative derivations of $F$
using backward chaining. The function maintains the set $C$ of \emph{checked}
facts, which ensures that each $F$ is checked only once
(line~\ref{alg:bfcnt:check:FinBC} and~\ref{alg:bfcnt:saturate:C}). The
procedure first calls \Call{Saturate}{$F$} to determine whether $F$ follows
from the facts considered thus far; we discuss this step in more detail
shortly. If $F$ is not proved, the procedure then examines in
lines~\ref{alg:bfcnt:check:backward-chaining}--\ref{alg:bfcnt:check:backward-chaining:end}
each instance $r'$ of a recursive rule that derives $F$ in the `old'
materialisation, and it tries to prove all body atoms of $r'$ from the current
stratum. This involves evaluating rules `backwards' and, as we already
discussed in Section~\ref{sec:bfcnt}, this is the main advantage of the B/F
algorithm over DRed on a number of complex inputs. The function terminates once
$F$ is successfully proved (line~\ref{alg:bfcnt:check:backward-chaining:end}).

Set $P$ accumulates facts that are checked and successfully proved, and it is
computed in function \Call{Saturate}{} using forward chaining. Given a fact $F$
that is being checked, it first verifies whether $F$ has a nonrecursive
derivation. In the original B/F algorithm, this is done by evaluating the
nonrecursive rules `backwards' in the same way as in
line~\ref{alg:dred:OneStepRederive} of DRed. In contrast, B/F$^c$ avoids this
by simply checking whether the nonrecursive counter is nonzero
(line~\ref{alg:bfcnt:saturate:base}): if that is the case, then $F$ is known to
have nonrecursive derivations and it is added to $P$ via
lines~\ref{alg:bfcnt:saturate:np:init}
and~\ref{alg:bfcnt:saturate:loop:DeltaY}. If $F$ is proved, the procedure
propagates its consequences
(line~\ref{alg:bfcnt:saturate:np:init}--\ref{alg:bfcnt:saturate:loop:end}). In
particular, the procedure ensures that each consequence $F'$ of $P$, the facts
in the `new' materialisation in the previous strata, and the recursive rules is
added to $P$ if ${F' \in C}$, or is added to the set $Y$ of \emph{delayed}
facts if ${F' \not\in C}$. Intuitively, set $Y$ contains facts that are proved
but that have not been checked yet. If a fact in $Y$ is checked at a later
point, it is proved in line~\ref{alg:bfcnt:saturate:base} without having to
apply the rules again.

Since the deletion step of B/F$^c$ is `exact' in the sense that it deletes
precisely those facts that no longer hold after the update, rederivation is not
needed. Thus, \Call{DeleteUnproved}{} is directly followed by \Call{Insert}{},
which is the same as in DRed and DRed$^c$, with the only difference that
B/F$^c$ maintains only the nonrecursive counters.

\begin{algorithm}[!tb]
\fontsize{8}{10}\selectfont
\caption{\textsc{B/F$^c$}$(\Pi, \lambda, E, I, E^-, E^+, \Cnr)$}\label{alg:bfcnt}
\begin{algorithmiccont}
    \State $D \defeq A \defeq \emptyset$, \quad $E^- = (E^- \cap E) \setminus E^+$, \quad $E^+ = E^+ \setminus E$                                               \label{alg:bfcnt:init}
    \For{each stratum index $s$ with ${1 \leq s \leq S}$}                                                                                                       \label{alg:bfcnt:stratum:start}
        \State $C \defeq P \defeq Y \defeq \emptyset$
        \State \Call{DeleteUnproved}{}                                                                                                                          \label{alg:bfcnt:DeleteUnproved}
        \State \Call{Insert}{}                                                                                                                                  \label{alg:bfcnt:Insert}
    \EndFor                                                                                                                                                     \label{alg:bfcnt:stratum:end}
    \State $E \defeq (E \setminus E^-) \cup E^+$, \quad $I \defeq (I \setminus D) \cup A$                                                                       \label{alg:bfcnt:update-I}
    \Statex
    \Procedure{DeleteUnproved}{}
        \State $N_D \defeq \emptyset$
        \For{$F \in (E^- \cap \Out{s}) \cup \mapply{\nrstrat{\Pi}{s}}{I \appargs D \setminus A, A \setminus D}$}                                                \label{alg:bfcnt:del:decnrcntloop}
            \State $N_D \defeq N_D \cup \{ F \}$, \quad $\Cnr[F] \defeq \Cnr[F] - 1$                                                                            \label{alg:bfcnt:del:decnrcntloop:end}
        \EndFor
        \State $N_D \defeq N_D \cup \apply{\rstrat{\Pi}{s}}{I \appargs D \setminus A, A \setminus D}$
        \Loop                                                                                                                                                   \label{alg:bfcnt:del:loop}
            \State $\Delta_D \defeq \emptyset$                                                                                                                  \label{alg:bfcnt:del:deltaD-empty}
            \For{$F \in N_D \setminus D$}                                                                                                                       \label{alg:bfcnt:del:foreach}
                \State $\Call{Check}{F}$                                                                                                                        \label{alg:bfcnt:del:check-init}
                \If{$F \not\in P$}
                    $\Delta_D \defeq \Delta_D \cup \{ F \}$                                                                                                     \label{alg:bfcnt:del:deltaD}
                \EndIf
            \EndFor                                                                                                                                             \label{alg:bfcnt:del:foreach:end}
            \If{$\Delta_D = \emptyset$}
                \Break
            \EndIf
            \State $N_D \defeq \apply{\rstrat{\Pi}{s}}{I \setminus (D \setminus A), I \cup A \appargs \Delta_D}$                                                \label{alg:bfcnt:del:ND-update}
            \State $D \defeq D \cup \Delta_D$                                                                                                                   \label{alg:bfcnt:del:D-update}
        \EndLoop                                                                                                                                                \label{alg:bfcnt:del:endloop}
    \EndProcedure
    \Statex
    \Function{Check}{$F$}
        \If{$F \not\in C$}                                                                                                                                      \label{alg:bfcnt:check:FinBC}
            \If{$\Call{Saturate}{F} = \false$}                                                                                                                  \label{alg:bfcnt:check:saturate}
                \For{each $r \in \rstrat{\Pi}{s}$ and each \Statex \hspace{1cm} $r' \in \ins{r}{I \setminus ((D \cup \Delta_D) \setminus A), I \cup A}$ s.t.\ $\head{r'} = F$}   \label{alg:bfcnt:check:backward-chaining}
                    \For{$G \in \pbody{r'} \cap \Out{s}$}                                                                                                       \label{alg:bfcnt:check:for}
                        \State \Call{Check}{$G$}                                                                                                                \label{alg:bfcnt:check:check}
                        \If{$F \in P$}
                            \Return                                                                                                                             \label{alg:bfcnt:check:lastreturn}
                        \EndIf
                    \EndFor                                                                                                                                     \label{alg:bfcnt:check:for:end}
                \EndFor                                                                                                                                         \label{alg:bfcnt:check:backward-chaining:end}
            \EndIf
        \EndIf
    \EndFunction
    \Statex
    \Function{Saturate}{$F$}
        \State $C \defeq C \cup \{ F \}$                                                                                                                        \label{alg:bfcnt:saturate:C}
        \If{$F \in Y$ or $\Cnr[F] > 0$}                                                                                                                         \label{alg:bfcnt:saturate:base}
            \State $N_P \defeq \{ F \}$                                                                                                                         \label{alg:bfcnt:saturate:np:init}
            \Loop                                                                                                                                               \label{alg:bfcnt:saturate:loop}
                \State $\Delta_P \defeq (N_P \cap C) \setminus P$, \quad $Y \defeq Y \cup N_P \setminus C$                                                      \label{alg:bfcnt:saturate:loop:DeltaY}
                \If{$\Delta_P = \emptyset$}                                                                                                                     \label{alg:bfcnt:saturate:loop:if}
                    \Return $\true$
                \EndIf
                \State $P \defeq P \cup \Delta_P$                                                                                                               \label{alg:bfcnt:saturate:P}
                \State $N_P \defeq \apply{\rstrat{\Pi}{s}}{P \cup (\Out{<s} \cap (I \setminus (D \setminus A))), I \cup A \appargs \Delta_P}$                   \label{alg:bfcnt:saturate:np:update}
            \EndLoop                                                                                                                                            \label{alg:bfcnt:saturate:loop:end}
        \Else
            \ \Return $\false$
        \EndIf
    \EndFunction
    \Statex
    \Procedure{Insert}{}
        \State $N_A \defeq \emptyset$
        \For{$F \in (E^+ \cap \Out{s}) \oplus \mapply{\nrstrat{\Pi}{s}}{(I \setminus D) \cup A \appargs A \setminus D, D \setminus A}$}                         \label{alg:bfcnt:ins:incnrloop}
            \State $N_A \defeq N_A \cup \{ F \}$, \quad $\Cnr[F] \defeq \Cnr[F] + 1$                                                                            \label{alg:bfcnt:ins:incnrloop:end}
        \EndFor
        \State $N_A \defeq N_A \cup \apply{\nrstrat{\Pi}{s}}{(I \setminus D) \cup A \appargs A \setminus D, D \setminus A}$
        \Loop                                                                                                                                                   \label{alg:bfcnt:ins:loop}
            \State $\Delta_A \defeq N_A \setminus ((I \setminus D) \cup A)$                                                                                     \label{alg:bfcnt:ins:DeltaA}
            \If{$\Delta_A = \emptyset$}
                \Break
            \EndIf
            \State $A \defeq A \cup \Delta_A$                                                                                                                   \label{alg:bfcnt:ins:A-update}
            \State $N_A \defeq N_A \cup \apply{\rstrat{\Pi}{s}}{(I \setminus D) \cup A \appargs \Delta_A}$                                                      \label{alg:bfcnt:ins:NA-update}
        \EndLoop                                                                                                                                                \label{alg:bfcnt:ins:loop:end}
    \EndProcedure
\end{algorithmiccont}
\end{algorithm}

Algorithm~\ref{alg:bfcnt} is correct in the same way as B/F since checking 
whether a fact has a nonzero nonrecursive counter is equivalent to checking 
whether a derivation of the fact exists by evaluating nonrecursve rules 
`backwards'.

\section{Evaluation}\label{sec:evaluation}

We have implemented the unoptimised and optimised variants of DRed and B/F and
have compared them empirically.

\subsubsection{Benchmarks}

We used the following benchmarks for our evaluation: UOBM~\cite{ma2006towards}
is a synthetic benchmark that extends the well known LUBM~\cite{guo2005lubm}
benchmark; Reactome~\cite{croft2013reactome} models biological pathways of
molecules in cells; Uniprot~\cite{bateman2015uniprot} describes protein
sequences and their functional information; ChemBL~\cite{gaulton2011chembl}
represents functional and chemical properties of bioactive compounds; and
Claros describes archeological artefacts. Each benchmark consists of a set of
facts and an OWL 2 DL ontology, which we transformed into datalog programs of
different levels of complexity and recursiveness. More specifically, the
\emph{upper bound} (U) programs were obtained using the complete but unsound
transformation by \citet{zhou2013making}, and they entail all consequences of
the original ontology but may also derive additional facts. The
\emph{recursive} (R) programs were obtained using the sound but incomplete
transformation by \citet{kaminski2016datalog}, and they tend to be highly
recursive. For Claros, the \emph{lower bound extended} (LE) program was
obtained by manually introducing several `hard' rules, and it was already used
by \citet{motik2015incremental} to compare DRed with B/F. Finally, to estimate
the effect of built-in literals on materialisation maintenance, we developed a
new synthetic benchmark SSPE (Single-Source Path Enumeration). Its dataset
consists of a randomly generated directed acyclic graph of 100~k nodes and 1~M
edges, and its program traverses paths from a single source analogously to
rules \eqref{ex2rule1}--\eqref{ex2rule2}. All the tested programs are
recursive, although the percentage of the recursive rules varies.
Table~\ref{t1} shows the numbers of facts ($|E|$), strata ($S$), the
nonrecursive rules ($|\nrstrat{\Pi}{}|$), and the recursive ones
($|\rstrat{\Pi}{}|$) for each benchmark.

\begin{table*}
\centering
\begin{tabular}{cc|r|r|r|r|r|r|r|r}
    \multicolumn{2}{c|}{Dataset}            & $\abs{E}$ & $S$  & $|\nrstrat{\Pi}{}|$   & $|\rstrat{\Pi}{}|$    & DRed$^c$  & DRed      & B/F$^c$   & B/F \\
    \hline
    \multicolumn{2}{c|}{UOBM-U}             & 254.8 M   & 5    & 135                   & 144                   & 179.24    & 1,185.87  & 1.18      & 31.96 \\
    \hline
    \multicolumn{2}{c|}{UOBM-R}             & 254.8 M   & 6    & 164                   & 2,215                 & 0.29      & 0.56      & 0.26      & 0.24 \\
    \hline
    \multicolumn{2}{c|}{Reactome-U}         & 12.5 M    & 9    & 814                   & 28                    & 0.06      & 1.03      & 0.05      & 1.00 \\
    \hline
    \multicolumn{2}{c|}{Reactome-R}         & 12.5 M    & 1    & 0                     & 21,385                & 26.57     & 62.46     & 0.89      & 0.90 \\
    \hline
    \multicolumn{2}{c|}{Uniprot-R}          & 123.1 M   & 5    & 9,312                 & 2,706                 & 4.31      & 8.71      & 4.13      & 4.36 \\
    \hline
    \multicolumn{2}{c|}{ChemBL-R}           & 289.2 M   & 3    & 1,766                 & 499                   & 7.91      & 12.22     & 1.27      & 1.26 \\
    \hline
    \multirow{3}{*}{Claros-LE}  & Best      &           &      &                       &                       & 0.33      & 5,788.75  & 0.43      & 8.03 \\
                                & Worst     & 18.8 M    & 11   & 1,031                 & 306                   & 5,720.57  & 5,759.92  & 2,802.86  & 3,227.05 \\
                                & Average   &           &      &                       &                       & 1,143.55  & 1,741.66  & 560.61    & 653.04 \\
    \hline
    \multicolumn{2}{c|}{SSPE}                & 3.0 M     & 1    & 1                     & 1                     & 10.53     & 1,684.97  & 247.00    & 252.97 \\
\end{tabular}
\caption{Average running times for deleting 1000 facts (seconds)}\label{t1}
\end{table*}

\begin{table*}
  \centering
  \newcommand{\pk}{\phantom{k}}
  \begin{tabular}{c|r|r|r|r|r|r|r|r}
    Dataset     & $\abs{E^-}/\abs{E}$   & DRed$^c$  & DRed      & B/F$^c$   & B/F       & Remat     & Remat-1C  & Remat-2C \\
    \hline
    UOBM-U      & 50\%                  & 1.54 k    & 3.66 k    & 1.64 k    & 3.11 k    & 1.56 k    & 1.60 k    & 1.61 k \\
    \hline
    UOBM-R      & 36\%                  & 3.28 k    & 5.75 k    & 2.76 k    & 2.87 k    & 4.14 k    & 4.16 k    & 4.19 k \\
    \hline
    Reactome-U  & 68\%                  & 30.70 \pk & 6.32 k    & 39.16 \pk & 6.33 k    & 30.90 \pk & 31.23 \pk & 31.32 \pk \\
    \hline
    Reactome-R  & 31\%                  & 1.07 k    & 1.78 k    & 0.92 k    & 0.93 k    & 0.91 k    &  0.91 k   & 0.92 k \\
    \hline
    Uniprot-R   & 47\%                  & 1.57 k    & 3.47 k    & 1.92 k    & 2.86 k    & 1.98 k    & 1.99 k    & 2.00 k \\
    \hline
    ChemBL-R    & 69\%                  & 4.74 k    & 7.22 k    & 3.25 k    & 4.18 k    & 2.56 k    & 2.57 k    & 2.59 k \\
    \hline
    Claros-LE   & 8\%                   & 5.01 k    & 17.81 k   & 3.49 k    & 16.74 k   & 3.36 k    & 3.49 k    & 3.60 k \\
    \hline
    SSPE         & 2\%                   & 74.36 \pk & 7.24 k    & 7.90 k    & 7.75 k    & 68.28 \pk & 70.18 \pk & 71.81 \pk \\
  \end{tabular}
  \caption{Running times for handling large deletions (seconds)}\label{t2}
\end{table*}

\subsubsection{Test Setup}

We conducted all experiments on a Dell PowerEdge R720 server with 256GB RAM and
two Intel Xeon E5-2670 2.6GHz processors, running Fedora 24, kernel version
4.8.12-200.fc24.x86\_64. All algorithms handle insertions using the semina\"ive
evaluation. The only overhead is in counter maintenance, which we measured
during initial materialisation (which also uses semina\"ive evaluation). Hence,
the main focus of our tests was on comparing the performance of our algorithms
on `small' and `large' deletions. In both cases, we first materialised the
relevant program on the explicit facts, and then we performed the following
tests.

To test small deletions, we measured the performance on ten randomly selected
subsets ${E^- \subseteq E}$ of 1,000 facts. In all apart from Claros-LE, the
running times did not depend significantly on the selected subset of $E$, so in
Table~\ref{t1} we report the average times across all ten runs. On Claros-LE,
however, the running times varied significantly, so we report in the table the
best, the worst, and the average times.

To test large deletions, we identified the largest subset ${E^- \subseteq E}$
on which either DRed$^c$ or B/F$^c$ takes roughly the same time as computing
the `new' materialisation from scratch. We measured the performance of all
algorithms on $E^-$, as well as the performance of rematerialisation with no
counters (Remat), just the nonrecursive counter (Remat-1C), and both counters
(Remat-2C). This test allows us to assess the scalability of our algorithms.
Table~\ref{t2} reports the running times and the percentages of the deleted
facts.

\subsubsection{Discussion}

DRed$^c$ outperformed DRed on all inputs for small deletions. In particular, on
SSPE, the average running time for DRed drops from 28 minutes to just over 10
seconds. On Reactome-U, the improvement is by several orders of magnitude,
albeit unoptimised DRed is already quite efficient. The improvement is also
significant in many other cases, including UOBM-U, Reactome-R, and ChemBL-R. In
fact, Reactome-U and SSPE exhibit data and rule patterns outlined in
Examples~\ref{ex1} and~\ref{ex2}, which clearly demonstrates the benefits of
eliminating `backward' rule evaluation. Moreover, the program of Claros-LE
contains a symmetric and transitive predicate $\mathit{relatedPlaces}$, so the
materialisation contains several large cliques of constants connected by this
predicate \cite{motik2015incremental}. When a fact
$\mathit{relatedPlaces}(a,b)$ is overdeleted, the DRed algorithm overdeletes
all $\mathit{relatedPlaces}(c,d)$ where $c$ and $d$ belong to the same clique
as $a$ and $b$, which requires a cubic number of derivations. However, DRed$^c$
can sometimes (but not always) prove that $\mathit{relatedPlaces}(a,b)$ holds
using the nonrecursive counter; as one can see, this can considerably improve
the performance by avoiding costly overdeletion.

B/F$^c$ also outperformed B/F for small deletions in many cases: B/F$^c$ was
more than 20 times faster for UOBM-U, Reactome-U, and the `best' case of
Claros-LE, which is in line with our observation that `backwards' rule
evaluation can be quite costly. In contrast, on the highly recursive datasets
(i.e., all R-datasets and SSPE), the performance of B/F$^c$ and B/F is roughly
the same: the main source of difficulty is due to the recursive rules, whose
evaluation is unaffected by the optimisations proposed in this paper.

B/F$^c$ outperformed DRed$^c$ on all datasets but SSPE. This is so because
B/F$^c$ eagerly identifies alternative derivations of facts, which is often
easy, and is beneficial since it can considerably reduce overdeletion. However,
as we discussed earlier in Section~\ref{sec:motivation}, backward rule
evaluation can be a dominant source of inefficiencty (e.g., on SSPE). In such
cases, DRed$^c$ is faster than B/F$^c$ since DRed$^c$ completely eliminates
backward rule evaluation, whereas B/F$^c$ only avoids backward evaluation on
nonrecursive rules.

The tests for large deletions show that our algorithms can efficiently delete
large subsets of the explicit facts on all but two benchmarks: Claros-LE and
SSPE. Claros-LE is difficult due to the presence of cliques as explained
earlier, and SSPE is difficult because deleting a small percentage of the
explicit facts leads to the deletion of about half of the inferred facts.
Nevertheless, DRed$^c$ always considerably outperforms DRed; the difference is
particularly significant on Reactome-U and SSPE, where DRed$^c$ is several
orders of magnitude faster. Similarly, B/F$^c$ consistently outperforms B/F on
all cases apart from SSPE, where the latter is due to the overhead of
maintaining the counters.

Finally, the rematerialisation times show that counter maintenance incurs only
modest overheads: Remat-2C was in the worst case only several percent slower
than Remat.

\section{Conclusion}\label{sec:conclusion}

We have presented two novel algorithms for the maintenance of datalog
materialisations, obtained by combining the well-known DRed and B/F algorithms
with Counting. Our evaluation shows that our algorithms are generally more
efficient than the original ones, often by orders of magnitude. Our algorithms
could handle both small and large updates efficiently, and have thus been shown
to be ready for practical use. In future, we shall develop a modular approach
to materialisation and its maintenance that tackles the difficult cases such as
Claros-LE using reasoning modules that can be `plugged into' the semina\"ive
evaluation to handle difficult rule combinations using custom algorithms.

\section*{Acknowledgments}

This work was supported by the EPSRC projects MaSI$^3$, DBOnto, and ED$^3$.

\bibliographystyle{aaai}
\bibliography{references}

\iftoggle{withappendix}{
    \newpage
    \onecolumn
    \appendix
    \section{Proof of Correctness for Algorithm~\ref{alg:dredcnt}}\label{sec:proof}

\dredcntcorrectness*

\begin{proof}
Let $\ominus$ be the multiset subtraction operator, and let $\mathsf{Occ}(F, M)$ be the multiplicity of $F$ in multiset $M$. Due to line~\ref{alg:dredcnt:init} of Algorithm~\ref{alg:dredcnt}, without loss of generality we assume that ${E^- \subseteq E}$ and ${E^+ \cap E = \emptyset}$. Now let ${\Old{E} = E}$ and let ${\Old{I^0} = \emptyset}$. Moreover, for each $1 \leq s \leq S$, let $\Old{I^s_0}, \Old{I^s_1}, \dots$ be the sequence of sets where ${\Old{I^s_0} = \Old{I^{s-1}} \cup (\Old{E} \cap \Out{s})}$, and for $i > 0$, ${\Old{I^s_i} = \Old{I^s_{i-1}} \cup \apply{\strat{\Pi}{s}}{\Old{I^s_{i-1}}}}$. Index $k$ clearly exists at which the sequence reaches the fixpoint (i.e., $\Old{I^s_k} = \Old{I^s_{k+1}}$), so let $\Old{I^s} = \Old{I^s_k}$. Finally, let $\Old{I} = \Old{I^S}$; we clearly have $\Old{I} = \mat{\Pi}{\Old{E}}$---that is, $\Old{I}$ is the `old' materialisation. Now let $\New{E} = (\Old{E} \setminus E^-) \cup E^+$, and let $\New{I^s_i}$, $\New{I^s}$, and $\New{I}$ be defined analogously, so $\New{I}$ is the `new' materialisation.

For each $1 \leq s \leq S$ and each $F \in \Old{I} \cap \Out{s}$, let $\Old{\Cnr[F]} = \mathsf{Occ}(F, \Old{E} \oplus \mapply{\nrstrat{\Pi}{s}}{\Old{I}})$ and $\Old{\Cr[F]} = \mathsf{Occ}(F, \mapply{\rstrat{\Pi}{s}}{\Old{I}})$; we define $\New{\Cnr[F]}$ and $\New{\Cr[F]}$ analogously using $\New{I}$ and $\New{E}$.

Now consider a run of Algorithm~\ref{alg:dredcnt} on $\Old{I}$, $E^-$, and $E^+$. Let $D^0 = A^0 = R^0 = \emptyset$, and for each $s$ with $1 \leq s \leq S$, let $D^s$, $A^s$, and $R^s$ be the values of $D$, $A$, and $R$, respectively, after the loop in lines~\ref{alg:dredcnt:stratum-loop}--\ref{alg:dredcnt:stratum-loop:end} finishes for stratum index $s$. Note that during the execution of Algorithm~\ref{alg:dredcnt}, the set $I$ is equal to $\Old{I}$ up to before line~\ref{alg:dredcnt:update-I}. Furthermore, for each fact $F \in \Old{I^s} \cap \Out{s}$, let $\Del{\Cnr[F]}$ and $\Del{\Cr[F]}$ be the values of $\Cnr[F]$ and $\Cr[F]$, respectively, at the point when $\Call{Overdelete}{}$ finishes for stratum index $s$; similarly, for each fact $F \in ((\Old{I^s} \setminus D^s) \cup A^s) \cap \Out{s}$, let $\Add{\Cnr[F]}$ and $\Add{\Cr[F]}$ be the values of $\Cnr[F]$ and $\Cr[F]$, respectively, at the point when $\Call{Insert}{}$ finishes for stratum index $s$.

\begin{align}
\begin{array}{@{}l@{\;}l@{\;}l@{}}
\mapply{\nrstrat{\Pi}{s}}{\Old{I^s}} = \mapply{\nrstrat{\Pi}{s}}{\Old{I^s} \appargs D^s \setminus A^{s-1}, A^{s-1} \setminus D^s} \oplus \mapply{\nrstrat{\Pi}{s}}{\Old{I^s} \setminus (D^s \setminus A^{s-1}), \Old{I^s} \cup A^{s-1}}
\end{array} & \label{nonrecursiveoldequals} \\
\begin{array}{@{}l@{\;}l@{\;}l@{}}
\mapply{\rstrat{\Pi}{s}}{\Old{I^s}} = \mapply{\rstrat{\Pi}{s}}{\Old{I^s} \appargs D^s \setminus A^{s-1}, A^{s-1} \setminus D^s} \oplus \mapply{\rstrat{\Pi}{s}}{\Old{I^s} \setminus (D^s \setminus A^{s-1}), \Old{I^s} \cup A^{s-1}}
\end{array} & \label{recursiveoldequals} \\
\begin{array}{@{}r@{\;}r@{\;}r@{}}
\Old{\Cnr[F]}-\Del{\Cnr[F]} = \mathsf{Occ}(F, (E^- \cap \Out{s}) \oplus \mapply{\nrstrat{\Pi}{s}}{\Old{I^s} \appargs D^s \setminus A^{s-1}, A^{s-1} \setminus D^s}) \\
\text{for each } F \in \Old{I^s} \cap \Out{s}
\end{array} & \label{nonrecursivecntdecrement}\\
\begin{array}{@{}r@{\;}r@{\;}r@{}}
\Old{\Cr[F]}-\Del{\Cr[F]} = \mathsf{Occ}(F, \mapply{\rstrat{\Pi}{s}}{\Old{I^s} \appargs D^s \setminus A^{s-1}, A^{s-1} \setminus D^s}) \\
\text{for each } F \in \Old{I^s} \cap \Out{s}
\end{array} & \label{recursivecntdecrement}\\
\begin{array}{@{}l@{\;}l@{\;}l@{}}
\Old{I^s} \setminus \New{I^s} \subseteq D^s \subseteq \Old{I^s}
\end{array} & \label{deletionproperty} \\
\begin{array}{@{}l@{\;}l@{\;}l@{}}
\Out{s} \cap D^s \cap \apply{\rstrat{\Pi}{s}}{\Old{I^s} \setminus (D^s \setminus A^{s-1}), \Old{I^s} \cup A^{s-1}} \subseteq R^s \subseteq \New{I^s}
\end{array} & \label{onesteprederivationproperty} \\
\begin{array}{@{}l@{\;}l@{\;}l@{}}
\Old{I^s} \cap A^s \subseteq D^s
\end{array} & \label{ioldcapaind} \\
\begin{array}{@{}l@{\;}l@{\;}l@{}}
(\Old{I^s} \setminus D^s) \cup A^s = \New{I^s}
\end{array} & \label{insertionproperty} \\
\begin{array}{@{}r@{\;}r@{\;}r@{}}
\Add{\Cnr[F]}-\Del{\Cnr[F]} = \mathsf{Occ}(F, (E^+ \cap \Out{s}) \oplus \mapply{\nrstrat{\Pi}{s}}{(\Old{I^s} \setminus D^s) \cup A^s} \ominus \mapply{\nrstrat{\Pi}{s}}{\Old{I^s} \setminus (D^s \setminus A^{s-1}), \Old{I^s} \cup A^{s-1}}) \\
\text{for each } F \in ((\Old{I^s} \setminus D^s) \cup A^s) \cap \Old{I^s} \cap \Out{s}
\end{array} & \label{nonrecursivecntincrement}\\
\begin{array}{@{}r@{\;}r@{\;}r@{}}
\Add{\Cnr[F]} = \mathsf{Occ}(F, ((\Old{E} \setminus E^-) \cup E^+) \oplus \mapply{\nrstrat{\Pi}{s}}{(\Old{I^s} \setminus D^s) \cup A^s}) \\
\text{for each } F \in (((\Old{I^s} \setminus D^s) \cup A^s) \setminus \Old{I^s}) \cap \Out{s}
\end{array} & \label{nonrecursivenew}\\
\begin{array}{@{}r@{\;}r@{\;}r@{}}
\Add{\Cr[F]}-\Del{\Cr[F]} = \mathsf{Occ}(F, \mapply{\rstrat{\Pi}{s}}{(\Old{I^s} \setminus D^s) \cup A^s} \ominus \mapply{\rstrat{\Pi}{s}}{\Old{I^s} \setminus (D^s \setminus A^{s-1}), \Old{I^s} \cup A^{s-1}}) \\
\text{for each } F \in ((\Old{I^s} \setminus D^s) \cup A^s) \cap \Old{I^s} \cap \Out{s}
\end{array} & \label{recursivecntincrement}\\
\begin{array}{@{}r@{\;}r@{\;}r@{}}
\Add{\Cr[F]} = \mathsf{Occ}(F, \mapply{\rstrat{\Pi}{s}}{(\Old{I^s} \setminus D^s) \cup A^s}) \\
\text{for each } F \in (((\Old{I^s} \setminus D^s) \cup A^s) \setminus \Old{I^s}) \cap \Out{s}
\end{array} & \label{recursivenew}
\end{align}

We shall prove that properties~(\ref{deletionproperty}), (\ref{ioldcapaind}), and (\ref{insertionproperty}) hold for each $s$ with $0 \leq s \leq S$, and that the other properties hold for each $s$ with $1 \leq s \leq S$. Then, property~(\ref{insertionproperty}) for $s=S$ and line~\ref{alg:dredcnt:update-I} of Algorithm~\ref{alg:dredcnt} imply that $I$ is correctly updated from $\Old{I}$ to $\New{I}$. Moreover, ${\Old{\Cnr[F]} = \mathsf{Occ}(F, \Old{E} \oplus \mapply{\nrstrat{\Pi}{s}}{\Old{I}})}$ and properties~(\ref{insertionproperty}), (\ref{nonrecursiveoldequals}), (\ref{nonrecursivecntdecrement}) and (\ref{nonrecursivecntincrement}) for $1 \leq s \leq S$ jointly imply the following:
\begin{align}
\begin{array}{@{}r@{\;}r@{\;}r@{}}
\Add{\Cnr[F]} = \Old{\Cnr[F]} - (\Old{\Cnr[F]} - \Del{\Cnr[F]}) + (\Add{\Cnr[F]} - \Del{\Cnr[F]})  = \mathsf{Occ}(F, \New{E} \oplus \mapply{\nrstrat{\Pi}{s}}{\New{I}}) = \New{\Cnr[F]} \\
\text{for each } F \in \New{I^s} \cap \Old{I^s} \cap \Out{s}
\end{array}
\end{align}
In addition, (\ref{insertionproperty}) and (\ref{nonrecursivenew}) imply $\Add{\Cnr[F]} = \New{\Cnr[F]}$ for each $F \in (\New{I^s} \setminus \Old{I^s}) \cap \Out{s}$. Therefore, we have $\Add{\Cnr[F]} = \New{\Cnr[F]}$ for each $F \in \New{I^s} \cap \Out{s}$, which means the nonrecursive counts are correctly updated after the execution of the algorithm. $\Add{\Cr[F]} = \New{\Cr[F]}$ can be shown analogously using properties~(\ref{insertionproperty}), (\ref{recursiveoldequals}), (\ref{recursivecntdecrement}), (\ref{recursivecntincrement}), and (\ref{recursivenew}).

We prove properties~(\ref{nonrecursiveoldequals})--(\ref{recursivenew}) by induction on $s$. The base case where $s=0$ is trivial since all relevant sets in (\ref{deletionproperty}), (\ref{ioldcapaind}), and (\ref{insertionproperty}) are empty. For the inductive step, we consider an arbitrary $s$ with $1 \leq s \leq S$ such that (\ref{deletionproperty}), (\ref{ioldcapaind}), and (\ref{insertionproperty}) hold for $s-1$, and we show that properties (\ref{nonrecursiveoldequals})--(\ref{recursivenew}) hold for $s$. The proof is lengthy so we break it into several claims.
\end{proof}

\begin{claim} \label{claim-oldequals}
Properties~(\ref{nonrecursiveoldequals}) and (\ref{recursiveoldequals}) hold.
\end{claim}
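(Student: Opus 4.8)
The plan is to reduce the two multiset identities to a single combinatorial statement about sets of rule instances, proved uniformly for every rule $r \in \strat{\Pi}{s}$. Recall that $\mapply{\Pi'}{\cdots}$ is, by definition, the multiset collecting one occurrence of $\head{r'}$ for each pair $(r,r')$ with $r \in \Pi'$ and $r' \in \ins{r}{\cdots}$. Hence it suffices to establish, for each $r \in \strat{\Pi}{s}$, the set identity
\[
\ins{r}{\Old{I^s}} = \ins{r}{\Old{I^s} \appargs D^s \setminus A^{s-1}, A^{s-1} \setminus D^s} \;\cup\; \ins{r}{\Old{I^s} \setminus (D^s \setminus A^{s-1}), \Old{I^s} \cup A^{s-1}}
\]
together with the disjointness of the two sets on the right; summing this over all $r \in \nrstrat{\Pi}{s}$ then yields \eqref{nonrecursiveoldequals}, and over all $r \in \rstrat{\Pi}{s}$ it yields \eqref{recursiveoldequals}.

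Two of the three set-theoretic obligations are routine. Both sets on the right are contained in $\ins{r}{\Old{I^s}}$: dropping the clause ``$\pbody{r}\sigma \cap P \neq \emptyset$ or $\nbody{r}\sigma \cap N \neq \emptyset$'' gives this for the first, and weakening $\ipos$ from $\Old{I^s} \setminus (D^s \setminus A^{s-1})$ to $\Old{I^s}$ and $\ineg$ from $\Old{I^s} \cup A^{s-1}$ to $\Old{I^s}$ gives it for the second. Disjointness is also immediate: every instance $r\sigma$ in the first set meets $D^s \setminus A^{s-1}$ with its positive body or meets $A^{s-1} \setminus D^s$ with its negative body, whereas every $r\sigma$ in the second set has $\pbody{r}\sigma \subseteq \Old{I^s} \setminus (D^s \setminus A^{s-1})$ and $\nbody{r}\sigma \cap (\Old{I^s} \cup A^{s-1}) = \emptyset$, and the latter entails $\nbody{r}\sigma \cap (A^{s-1} \setminus D^s) = \emptyset$ because $A^{s-1} \setminus D^s \subseteq A^{s-1}$.

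The substantive part is the remaining inclusion: every $r\sigma \in \ins{r}{\Old{I^s}}$ belongs to one of the two sets. If $r\sigma$ meets $D^s \setminus A^{s-1}$ positively or $A^{s-1} \setminus D^s$ negatively, it is in the first set. Otherwise $\pbody{r}\sigma \cap (D^s \setminus A^{s-1}) = \emptyset$, so $\pbody{r}\sigma \subseteq \Old{I^s} \setminus (D^s \setminus A^{s-1})$, and since $\nbody{r}\sigma \cap \Old{I^s} = \emptyset$ holds by $r\sigma \in \ins{r}{\Old{I^s}}$, the instance lies in the second set as soon as we verify the key fact $\nbody{r}\sigma \cap A^{s-1} = \emptyset$. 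I would prove this by contradiction: suppose $G \in \nbody{r}\sigma \cap A^{s-1}$. Since $r\sigma$ does not meet $A^{s-1} \setminus D^s$ negatively, $G \in D^s$. As $\head{r}$ lies in stratum $s$, stratification forces every atom of $\nbody{r}$ to have a predicate in a stratum strictly below $s$, so $G \notin \Out{s}$. Inspecting the stratum-$s$ iteration of Algorithm~\ref{alg:dredcnt} shows that $D$ grows only by facts of $\Out{s}$ during that iteration (every set fed into $N_D$ and $\Delta_D$ is either $E^- \cap \Out{s}$ or a $\mapply{\cdot}{}$ of a subprogram of $\strat{\Pi}{s}$, hence a set of facts from $\Out{s}$), so $G \in D^s$ with $G \notin \Out{s}$ gives $G \in D^{s-1}$. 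The inductive hypothesis provides property \eqref{deletionproperty} at $s-1$, whence $D^{s-1} \subseteq \Old{I^{s-1}} \subseteq \Old{I^s}$ and therefore $G \in \Old{I^s}$ — contradicting $\nbody{r}\sigma \cap \Old{I^s} = \emptyset$.

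I expect this last contradiction to be the only delicate point: it is precisely where stratification and the induction hypothesis are needed, in order to rule out the possibility that a negative body atom of a surviving instance had been overdeleted (and hence was temporarily tracked in $D$) already in an earlier stratum. Everything else amounts to unwinding the definitions of $\ins{r}{\cdot}$ and $\mapply{\cdot}{}$, and the same argument applies verbatim to both $\nrstrat{\Pi}{s}$ and $\rstrat{\Pi}{s}$, establishing \eqref{nonrecursiveoldequals} and \eqref{recursiveoldequals} simultaneously.
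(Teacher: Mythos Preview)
Your argument is correct and follows essentially the same route as the paper: both hinge on showing that $A^{s-1}\cap D^s\subseteq D^{s-1}\subseteq \Old{I^{s-1}}\subseteq \Old{I^s}$ via stratification and the induction hypothesis for property~\eqref{deletionproperty}. The paper packages this more tersely by first establishing the set identity $\Old{I^s}\cup(A^{s-1}\setminus D^s)=\Old{I^s}\cup A^{s-1}$ and then appealing directly to the generic disjoint decomposition implicit in definition~\eqref{eq:ruleinstancepn}, whereas you unfold the same reasoning at the level of individual rule instances.
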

\begin{proof}
The way sets $D$ and $A$ are constructed ensures that $(D^s \setminus D^{s-1}) \cap \Out{<s} = \emptyset$ and $A^{s-1} \subseteq \Out{<s}$ hold, and so ${A^{s-1} \setminus D^s = A^{s-1} \setminus D^{s-1}}$ holds as well, which in turn implies $\Old{I^s} \cup (A^{s-1} \setminus D^s) = \Old{I^s} \cup (A^{s-1} \setminus D^{s-1})$. Moreover, the induction assumption $D^{s-1} \subseteq \Old{I^{s-1}}$ ensures $\Old{I^s} \cup (A^{s-1} \setminus D^{s-1}) = \Old{I^s} \cup A^{s-1}$. Therefore, we have $\Old{I^s} \cup (A^{s-1} \setminus D^s) = \Old{I^s} \cup A^{s-1}$, which together with the definition of rule application ensures the correctness of the two properties.
\end{proof}

\begin{claim}
Property~(\ref{nonrecursivecntdecrement}) holds.
\end{claim}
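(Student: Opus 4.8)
The plan is to pinpoint the single place in \textsc{Overdelete} where the nonrecursive counter changes while stratum $s$ is processed, and then to perform a routine index shift from $s-1$ to $s$. First I would note that, at the moment the \textbf{for} loop in lines~\ref{alg:dredcnt:del:decnrloop}--\ref{alg:dredcnt:del:decnrloop:end} is executed, the variables $D$ and $A$ still hold the values $D^{s-1}$ and $A^{s-1}$, and $I$ equals $\Old{I}$ throughout this part of the run. This loop is the only point in \textsc{Overdelete} that modifies $\Cnr$ (every other loop touches $\Cr$ only), and no counter of a fact in $\Out{s}$ is modified while strata $1,\dots,s-1$ are processed, since all counter updates in strata $t<s$ concern heads of rules in $\strat{\Pi}{t}$ or facts in $E^{\pm}\cap\Out{t}$, all of which lie in $\Out{t}\subseteq\Out{<s}$. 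Hence $\Cnr[F]$ equals its input value $\Old{\Cnr[F]}$ when \textsc{Overdelete} begins for stratum $s$. Because the loop iterates once per occurrence of a fact in ${(E^-\cap\Out{s})\oplus\mapply{\nrstrat{\Pi}{s}}{\Old{I}\appargs D^{s-1}\setminus A^{s-1}, A^{s-1}\setminus D^{s-1}}}$ and decrements that fact's $\Cnr$ by one, I obtain ${\Old{\Cnr[F]}-\Del{\Cnr[F]} = \mathsf{Occ}(F, (E^-\cap\Out{s})\oplus\mapply{\nrstrat{\Pi}{s}}{\Old{I}\appargs D^{s-1}\setminus A^{s-1}, A^{s-1}\setminus D^{s-1}})}$ for each ${F\in\Old{I^s}\cap\Out{s}}$.

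It then remains to show ${\mapply{\nrstrat{\Pi}{s}}{\Old{I}\appargs D^{s-1}\setminus A^{s-1}, A^{s-1}\setminus D^{s-1}} = \mapply{\nrstrat{\Pi}{s}}{\Old{I^s}\appargs D^s\setminus A^{s-1}, A^{s-1}\setminus D^s}}$. The key observation is that every rule ${r\in\nrstrat{\Pi}{s}}$ has all of its positive and negative body atoms over predicates of stratum index strictly below $s$: the stratification conditions give ${\lambda(R)\leq s}$ for ${R(\vec s)\in\pbody{r}}$ and ${\lambda(R)<s}$ for ${R(\vec s)\in\nbody{r}}$, and nonrecursiveness excludes ${\lambda(R)=s}$ for the positive ones. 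Consequently every instance of such a rule depends only on facts in $\Out{<s}$, so ${\mapply{\nrstrat{\Pi}{s}}{\cdots}}$ is unaffected by how $\ipos$, $\ineg$, $P$, and $N$ behave outside $\Out{<s}$. Using the induction hypotheses for $s-1$ together with the facts established in Claim~\ref{claim-oldequals}, namely ${(D^s\setminus D^{s-1})\cap\Out{<s}=\emptyset}$, ${A^{s-1}\subseteq\Out{<s}}$, and ${D^{s-1}\subseteq\Old{I^{s-1}}}$, I would check that $\Old{I}$ and $\Old{I^s}$ agree on $\Out{<s}$, that ${D^s\setminus A^{s-1}}$ and ${D^{s-1}\setminus A^{s-1}}$ agree on $\Out{<s}$, and that ${A^{s-1}\setminus D^s = A^{s-1}\setminus D^{s-1}}$; the desired multiset identity then follows, and with it property~(\ref{nonrecursivecntdecrement}).

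The step I expect to be the main obstacle is this last one: it is conceptually trivial but requires care with strata bookkeeping. In particular, the operator ${\mapply{\nrstrat{\Pi}{s}}{\ipos,\ineg\appargs P,N}}$ is defined only under ${P\subseteq\ipos}$ and ${N\cap\ineg=\emptyset}$, so one must verify these side conditions for the target expression---for example, ${(A^{s-1}\setminus D^s)\cap\Old{I^s}=\emptyset}$ follows from property~(\ref{ioldcapaind}) for $s-1$ and ${A^{s-1}\subseteq\Out{<s}}$---and one must argue that the two multisets agree on \emph{multiplicities}, not merely on supports, which is where restricting attention to $\Out{<s}$-only rule bodies does the real work.
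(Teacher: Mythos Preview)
Your proposal is correct and follows essentially the same approach as the paper: both arguments identify lines~\ref{alg:dredcnt:del:decnrloop}--\ref{alg:dredcnt:del:decnrloop:end} as the unique place where $\Cnr$ is decremented, and both use nonrecursiveness of $\nrstrat{\Pi}{s}$ together with ${A^{s-1}\setminus D^s = A^{s-1}\setminus D^{s-1}}$ (from Claim~\ref{claim-oldequals}) to pass between the index $s-1$ appearing in the algorithm's state and the index $s$ required by property~(\ref{nonrecursivecntdecrement}). The paper is terser---it performs the index shift on the target expression and then matches it against the loop, whereas you start from the loop and shift towards the target---and it silently identifies $\Old{I}$ with $\Old{I^s}$ for nonrecursive rule bodies, a point you spell out; but the substance is the same.
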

\begin{proof}
Consider an arbitrary $F \in \Old{I^s} \cap \Out{s}$, we have two cases here.

If $F \not\in (E^- \cap \Out{s}) \oplus \mapply{\nrstrat{\Pi}{s}}{\Old{I^s} \appargs D^s \setminus A^{s-1}, A^{s-1} \setminus D^s}$, then the right-hand side of the equation in (\ref{nonrecursivecntdecrement}) equals zero. Moreover, $\nrstrat{\Pi}{s}$ contains only nonrecursive rules, and $A^{s-1} \setminus D^s = A^{s-1} \setminus D^{s-1}$ holds for the same reason as explained in the proof of claim~\ref{claim-oldequals}; thus we have $F \not\in (E^- \cap \Out{s}) \oplus \mapply{\nrstrat{\Pi}{s}}{\Old{I^s} \appargs D^{s-1} \setminus A^{s-1}, A^{s-1} \setminus D^{s-1}}$; but then, line~\ref{alg:dredcnt:del:decnrloop} of Algorithm~\ref{alg:dredcnt} ensures that the nonrecursive count of $F$ is not decremented during the execution of the $\Call{Overdelete}{}$ procedure, so the left-hand side of the equation is equal to zero as well. Therefore the property holds in this case.

If $F \in (E^- \cap \Out{s}) \oplus \mapply{\nrstrat{\Pi}{s}}{\Old{I^s} \appargs D^s \setminus A^{s-1}, A^{s-1} \setminus D^s} = (E^- \cap \Out{s}) \oplus \mapply{\nrstrat{\Pi}{s}}{\Old{I^s} \appargs D^{s-1} \setminus A^{s-1}, A^{s-1} \setminus D^{s-1}}$, then line~\ref{alg:dredcnt:del:decnrloop} and line~\ref{alg:dredcnt:del:decnrloop:end} guarantee that each distinct occurrence of $F$ in $(E^- \cap \Out{s}) \oplus \mapply{\nrstrat{\Pi}{s}}{\Old{I^s} \appargs D^{s-1} \setminus A^{s-1}, A^{s-1} \setminus D^{s-1}}$ results in decrementing the nonrecursive count of $F$ by one. Thus the property also holds in this case.
\end{proof}

\begin{claim}
Property~(\ref{recursivecntdecrement}) holds.
\end{claim}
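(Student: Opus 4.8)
The plan is, for a fixed $F \in \Old{I^s} \cap \Out{s}$, to count every decrement applied to $\Cr[F]$ by lines~\ref{alg:dredcnt:del:decrloop:end} and~\ref{alg:dredcnt:del:ND-updateend} during the call to $\Call{Overdelete}{}$ for stratum~$s$, and to show that their total equals the right-hand side of~(\ref{recursivecntdecrement}). I first observe that these are the only two lines where $\Cr$ is modified during overdeletion, and that while a stratum $s' \neq s$ is processed the heads produced by $\rstrat{\Pi}{s'}$ lie in $\Out{s'}$, so $\Cr[F]$ still equals $\Old{\Cr[F]}$ when the algorithm enters stratum~$s$; hence $\Old{\Cr[F]} - \Del{\Cr[F]}$ is exactly this total. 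As in the proof of the previous claim, I may replace $\Old{I}$ by $\Old{I^s}$ inside every $\mapply{\rstrat{\Pi}{s}}{\cdot}$ occurring in the algorithm, since rules in $\rstrat{\Pi}{s}$ only mention predicates of stratum $\leq s$ and $\Old{I}$ and $\Old{I^s}$ agree on $\Out{\leq s}$.

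Next I would account for the initial loop in lines~\ref{alg:dredcnt:del:decrloop}--\ref{alg:dredcnt:del:decrloop:end}: at that point $D = D^{s-1}$ and $A = A^{s-1}$, so it contributes $\mathsf{Occ}(F, \mapply{\rstrat{\Pi}{s}}{\Old{I^s} \appargs D^{s-1} \setminus A^{s-1}, A^{s-1} \setminus D^{s-1}})$ decrements. Using $A^{s-1} \setminus D^{s-1} = A^{s-1} \setminus D^s$ (established in the proof of Claim~\ref{claim-oldequals}) and $D^{s-1} \setminus A^{s-1} \subseteq D^s \setminus A^{s-1}$, this multiset is exactly the submultiset $M_1$ of the right-hand side of~(\ref{recursivecntdecrement}) consisting of those instances $r\sigma$ with $\pbody{r}\sigma \cap (D^{s-1} \setminus A^{s-1}) \neq \emptyset$ or $\nbody{r}\sigma \cap (A^{s-1} \setminus D^s) \neq \emptyset$, i.e.\ the recursive instances affected by a change in an earlier stratum. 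This step is the direct analogue of the argument used for property~(\ref{nonrecursivecntdecrement}); the only new point is that a recursive rule may have a body atom of stratum~$s$, which is why the relevant ``affected positive'' set grows from $D^{s-1} \setminus A^{s-1}$ to $D^s \setminus A^{s-1}$.

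It then remains to show that the semina\"ive loop (line~\ref{alg:dredcnt:del:ND-updateend}) contributes precisely the complementary submultiset $M_2$, namely the instances $r\sigma$ in the right-hand side with $\pbody{r}\sigma \cap (D^{s-1} \setminus A^{s-1}) = \emptyset$ and $\nbody{r}\sigma \cap (A^{s-1} \setminus D^s) = \emptyset$ (which forces $\pbody{r}\sigma \cap (D^s \setminus D^{s-1}) \neq \emptyset$). Let $\Delta_D^1, \dots, \Delta_D^k$ be the successive values taken by $\Delta_D$ and let $D^{s-1} = D_0 \subseteq \dots \subseteq D_k = D^s$ be the corresponding values of $D$; by construction the $\Delta_D^j$ are pairwise disjoint, contained in $\Out{s}$ (hence disjoint from $A^{s-1} \subseteq \Out{<s}$), and together cover $D^s \setminus D^{s-1}$, and $A$ is not changed during $\Call{Overdelete}{}$. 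Using also that $\Old{I}$ and $\Old{I^s}$ agree on $\Out{\leq s}$, in iteration~$j$ line~\ref{alg:dredcnt:del:ND-update} considers an instance $r\sigma$ iff $\pbody{r}\sigma \subseteq \Old{I^s} \setminus (D_{j-1} \setminus A^{s-1})$, $\nbody{r}\sigma \cap (\Old{I^s} \cup A^{s-1}) = \emptyset$, and $\pbody{r}\sigma \cap \Delta_D^j \neq \emptyset$. I would then verify: (i)~every $r\sigma \in M_2$ satisfies this for exactly one $j$, namely the first iteration at which one of its body atoms enters $D$ --- it cannot match earlier, since then no body atom is yet in $\Delta_D$, nor later, since some body atom then already lies in $D_{j-1} \setminus A^{s-1}$ (here $\Delta_D^j \cap A^{s-1} = \emptyset$ is essential); and (ii)~conversely every instance considered at line~\ref{alg:dredcnt:del:ND-update} lies in $M_2$. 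Since $M_1$ and $M_2$ have mutually exclusive defining conditions and $M_1 \oplus M_2$ is the whole right-hand side of~(\ref{recursivecntdecrement}), summing the two contributions gives the claim.

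The step I expect to be the main obstacle is~(i): establishing the ``each affected recursive instance is considered exactly once'' property of the semina\"ive overdeletion loop when a single rule body may contain several stratum-$s$ atoms that become overdeleted at different iterations. The crux is that the premise $\pbody{r}\sigma \subseteq I \setminus (D \setminus A)$ of line~\ref{alg:dredcnt:del:ND-update} fails as soon as any body atom has entered $D$ in a previous iteration --- which is exactly why $\Delta_D^j \subseteq \Out{s}$ being disjoint from $A^{s-1}$ matters --- so the instance is caught precisely at the first iteration that deletes one of its body atoms and never again.
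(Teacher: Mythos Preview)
Your proposal is correct and follows essentially the same reasoning as the paper's proof. The only organisational difference is that the paper phrases the argument as an induction on the loop iteration index, maintaining the invariant
\[
\Old{\Cr[F]}-\Del{\Cr[F]}^i = \mathsf{Occ}\bigl(F, \mapply{\rstrat{\Pi}{s}}{\Old{I^s} \appargs D^s_i \setminus A^{s-1}, A^{s-1} \setminus D^{s-1}}\bigr),
\]
whereas you partition the full right-hand side of~(\ref{recursivecntdecrement}) up front into $M_1$ (handled by the initial loop) and $M_2$ (handled by the semina\"ive loop) and then verify the ``exactly once'' property for $M_2$ directly. Both arguments hinge on the same facts---pairwise disjointness of the $\Delta_D^j$, $\Delta_D^j \subseteq \Out{s}$ disjoint from $A^{s-1}$, and the observation that once a body atom enters $D$ the premise $\pbody{r}\sigma \subseteq I \setminus (D \setminus A)$ fails---so the content is the same.
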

\begin{proof}
Line~\ref{alg:dredcnt:del:DeltaD} and line~\ref{alg:dredcnt:del:D-update} ensure that $\Delta_D$ used in line~\ref{alg:dredcnt:del:ND-update} is different between two iterations of the loop in lines~\ref{alg:dredcnt:del:loop}-\ref{alg:dredcnt:del:endloop}, so the rule instances considered in line~\ref{alg:dredcnt:del:ND-update} are different between iterations. The total number of these rule instances is finite and is bounded by the number of rule instances in $\bigcup_{r' \in \rstrat{\Pi}{s}}\ins{r'}{\Old{I^s}}$. Thus the loop must terminate, and we let $T$ be the total number of iterations. Moreover, for each $1 \leq i \leq T$, let $D^s_i$ be the value of $D$ at the beginning of the $i$th iteration of the loop, and let $\Del{\Cr[F]}^i$ be the value of $\Cr[F]$ for each $F \in \Old{I^s} \cap \Out{s}$ at the same time point. We prove by induction on $i$ that (\ref{recursivecntdecrementinduction}) holds for $1 \leq i \leq T$. Then (\ref{recursivecntdecrementinduction}) for $i=T$ and $A^{s-1} \setminus D^{s-1} = A^{s-1} \setminus D^s$ ensure the correctness of property~(\ref{recursivecntdecrement}).
\begin{align}
\Old{\Cr[F]}-\Del{\Cr[F]}^i = \mathsf{Occ}(F, \mapply{\rstrat{\Pi}{s}}{\Old{I^s} \appargs D^s_i \setminus A^{s-1}, A^{s-1} \setminus D^{s-1}}) \text{ for each } F \in \Old{I^s} \cap \Out{s} \label{recursivecntdecrementinduction}
\end{align}

For the base case, consider an arbitrary $F \in \Old{I^s} \cap \Out{s}$. It is easy to see that the recursive count of $F$ has never been changed and should be equal to $\Old{\Cr[F]}$ before line~\ref{alg:dredcnt:del:decrloop} of procedure \Call{Overdelete}{} for stratum $s$. But then, line~\ref{alg:dredcnt:del:decrloop} and line~\ref{alg:dredcnt:del:decrloop:end} ensure that if $F \in \mapply{\rstrat{\Pi}{s}}{\Old{I^s} \appargs D^{s-1} \setminus A^{s-1}, A^{s-1} \setminus D^{s-1}}$, then each distinct occurrence of $F$ in the multiset results in decrementing the corresponding recursive count by one; moreover, if $F \not\in \mapply{\rstrat{\Pi}{s}}{\Old{I^s} \appargs D^{s-1} \setminus A^{s-1}, A^{s-1} \setminus D^{s-1}}$, then the corresponding recursive count will not be changed; either way, $D^s_1 = D^{s-1}$ implies $\Old{\Cr[F]}-\Del{\Cr[F]}^1 = \mathsf{Occ}(F, \mapply{\rstrat{\Pi}{s}}{\Old{I^s} \appargs D^s_1 \setminus A^{s-1}, A^{s-1} \setminus D^{s-1}})$, as required.

For the inductive step, assume that (\ref{recursivecntdecrementinduction}) holds for $i-1$ where $1 < i \leq T$, and consider arbitrary $F \in \Old{I^s} \cap \Out{s}$. Lines~\ref{alg:dredcnt:del:DeltaD}, \ref{alg:dredcnt:del:ND-update} and \ref{alg:dredcnt:del:D-update} jointly imply (\ref{deltadinduction}).
\begin{align}
\Del{\Cr[F]}^{i-1} - \Del{\Cr[F]}^{i} = \mathsf{Occ}(F, \mapply{\rstrat{\Pi}{s}}{\Old{I^s} \setminus (D^s_{i-1} \setminus A^{s-1}), \Old{I^s} \cup A^{s-1} \appargs D^s_{i} \setminus D^s_{i-1}}) \label{deltadinduction}
\end{align}
We now show that the following holds:
\begin{align}
\begin{array}{@{}r@{\;}r@{\;}r@{}}
\mathsf{Occ}(F, \mapply{\rstrat{\Pi}{s}}{\Old{I^s} \appargs D^s_{i-1} \setminus A^{s-1}, A^{s-1} \setminus D^{s-1}}) + \mathsf{Occ}(F, \mapply{\rstrat{\Pi}{s}}{\Old{I^s} \setminus (D^s_{i-1} \setminus A^{s-1}), \Old{I^s} \cup A^{s-1} \appargs D^s_{i} \setminus D^s_{i-1}}) \\
= \mathsf{Occ}(F, \mapply{\rstrat{\Pi}{s}}{\Old{I^s} \appargs D^s_i \setminus A^{s-1}, A^{s-1} \setminus D^{s-1}}) \label{cntdecrementinduction}
\end{array}
\end{align}
Between $\bigcup_{r' \in \rstrat{\Pi}{s}}\ins{r'}{\Old{I^s} \appargs D^s_{i-1} \setminus A^{s-1}, A^{s-1} \setminus D^{s-1}}$ and $\bigcup_{r' \in \rstrat{\Pi}{s}}\ins{r'}{\Old{I^s} \setminus (D^s_{i-1} \setminus A^{s-1}), \Old{I^s} \cup A^{s-1} \appargs D^s_{i} \setminus D^s_{i-1}}$ there is no rule instance repetition, so it is sufficient to show that (\ref{instanceequivalence}) holds.
\begin{align}
\begin{array}{@{}r@{\;}r@{\;}r@{}}
\bigcup_{r' \in \rstrat{\Pi}{s}}\ins{r'}{\Old{I^s} \appargs D^s_{i-1} \setminus A^{s-1}, A^{s-1} \setminus D^{s-1}} \cup \ins{r'}{\Old{I^s} \setminus (D^s_{i-1} \setminus A^{s-1}), \Old{I^s} \cup A^{s-1} \appargs D^s_{i} \setminus D^s_{i-1}} \\
= \bigcup_{r' \in \rstrat{\Pi}{s}}\ins{r'}{\Old{I^s} \appargs D^s_i \setminus A^{s-1}, A^{s-1} \setminus D^{s-1}}\label{instanceequivalence}
\end{array}
\end{align}
The $\subseteq$ direction of (\ref{instanceequivalence}) trivially holds. Now consider the $\supseteq$ direction, let $r''$ be an arbitrary rule instance contained in the right-hand side of (\ref{instanceequivalence}), then there exists rule $r' \in \rstrat{\Pi}{s}$ such that $r'' \in \ins{r'}{\Old{I^s} \appargs D^s_i \setminus A^{s-1}, A^{s-1} \setminus D^{s-1}}$ holds. If we have ${r'' \in \ins{r'}{\Old{I^s} \appargs D^s_{i-1} \setminus A^{s-1}, A^{s-1} \setminus D^{s-1}}}$, then clearly $r''$ is also contained in the left-hand side of (\ref{instanceequivalence}). Otherwise, $r''$ has all positive atoms in $\Old{I^s}$ but no positive atom in $D^s_{i-1} \setminus A^{s-1}$; moreover, all negative body atoms of $r''$ have to be matched in $\Old{I^s} \cup (A^{s-1} \setminus D^{s-1}) = \Old{I^s} \cup A^{s-1}$; furthermore, $r''$ has at least one positive body atom in $D^s_{i} \setminus D^s_{i-1}$; therefore, we have $r'' \in \ins{r'}{\Old{I^s} \setminus (D^s_{i-1} \setminus A^{s-1}), \Old{I^s} \cup A^{s-1} \appargs D^s_{i} \setminus D^s_{i-1}}$, so $r''$ is contained in the left-hand side of (\ref{instanceequivalence}) in this case as well.

(\ref{deltadinduction}), (\ref{cntdecrementinduction}) and the induction assumption that (\ref{recursivecntdecrementinduction}) holds for $i-1$ imply the correctness of (\ref{recursivecntdecrementinduction}) for $i$, and this completes our proof.
\end{proof}

\begin{claim} \label{claimrighthanddeletionproperty}
The right-hand inclusion of property~(\ref{deletionproperty}) holds.
\end{claim}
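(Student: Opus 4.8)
The right-hand inclusion of~(\ref{deletionproperty}) is the statement ${D^s \subseteq \Old{I^s}}$. The plan is to establish the stronger loop invariant that, throughout the execution of $\Call{Overdelete}{}$ for stratum index $s$, both ${N_D \subseteq \Old{I^s}}$ and ${D \subseteq \Old{I^s}}$ hold; since ${D^s}$ is the value of $D$ when $\Call{Overdelete}{}$ terminates, the claim follows.

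The real work is an auxiliary property of stratified datalog. Since $\lambda$ is a stratification, every positive body atom of a rule in ${\strat{\Pi}{s}}$ has a predicate of stratum at most $s$ and every negative body atom has a predicate of stratum strictly below $s$. A routine induction on the stratum index, using the definition of the ${\Old{I^{s'}}}$ sequence, then shows that the facts of $\Old{I}$ whose predicate has stratum at most $s$ are exactly those of $\Old{I^s}$, and that the facts of $\Old{I^s}$ whose predicate has stratum strictly below $s$ are exactly those of $\Old{I^{s-1}}$. Hence, for every ${r \in \strat{\Pi}{s}}$ and every instance $r'$ of $r$ with ${\pbody{r'} \subseteq \Old{I}}$ and ${\nbody{r'} \cap \Old{I} = \emptyset}$, we get ${\pbody{r'} \subseteq \Old{I^s}}$ and ${\nbody{r'} \cap \Old{I^s} = \emptyset}$, so ${r' \in \ins{r}{\Old{I^s}}}$ and therefore ${\head{r'} \in \apply{\strat{\Pi}{s}}{\Old{I^s}} \subseteq \Old{I^s}}$, the last inclusion holding because $\Old{I^s}$ is a fixpoint of ${\strat{\Pi}{s}}$. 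Every multiset of the form ${\mapply{\nrstrat{\Pi}{s}}{\dots}}$ or ${\mapply{\rstrat{\Pi}{s}}{\dots}}$ occurring in $\Call{Overdelete}{}$ evaluates its positive atoms inside a subset of $\Old{I}$ and uses a superset of $\Old{I}$ as its negative dataset, so each of its elements is the head of such an $r'$ and thus lies in $\Old{I^s}$.

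It then remains to read off the invariant from the procedure. Before the loop, ${D = D^{s-1} \subseteq \Old{I^{s-1}} \subseteq \Old{I^s}}$ by the induction hypothesis~(\ref{deletionproperty}) for ${s-1}$, and $N_D$ is filled in lines~\ref{alg:dredcnt:del:decnrloop}--\ref{alg:dredcnt:del:decrloop:end} from ${E^- \cap \Out{s} \subseteq \Old{E} \cap \Out{s} \subseteq \Old{I^s}}$ (using ${E^- \subseteq \Old{E}}$) together with the two multisets above. In each iteration, line~\ref{alg:dredcnt:del:DeltaD} gives ${\Delta_D \subseteq N_D \subseteq \Old{I^s}}$, so line~\ref{alg:dredcnt:del:D-update} preserves ${D \subseteq \Old{I^s}}$, and line~\ref{alg:dredcnt:del:ND-update} adds to $N_D$ only heads of instances of rules in ${\rstrat{\Pi}{s}}$ with positive body in $\Old{I}$ and negative body disjoint from $\Old{I}$, which again lie in $\Old{I^s}$. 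This closes the induction and yields ${D^s \subseteq \Old{I^s}}$. I expect the only delicate point to be the auxiliary stratification property and its corollary that a stratum-$s$ rule instance firing on any subset of $\Old{I}$ already has its head in $\Old{I^s}$; once that is made precise, everything else is a mechanical inspection of the three places where $N_D$ and $D$ are updated.
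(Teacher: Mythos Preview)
Your proposal is correct and follows essentially the same approach as the paper: both argue by induction on the construction of $D$ during $\Call{Overdelete}{}$ that every rule instance considered in lines~\ref{alg:dredcnt:del:decnrloop}, \ref{alg:dredcnt:del:decrloop}, and~\ref{alg:dredcnt:del:ND-update} already belongs to $\ins{r}{\Old{I^s}}$, so its head is in $\Old{I^s}$. The paper's proof is a two-sentence sketch that takes the stratification property (that facts in $\Old{I}$ of stratum at most $s$ coincide with $\Old{I^s}$) as implicit, whereas you spell out this auxiliary lemma and the loop invariant ${N_D \subseteq \Old{I^s}}$, ${D \subseteq \Old{I^s}}$ explicitly; this is a difference in level of detail, not in strategy.
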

\begin{proof}
For each rule $r \in \strat{\Pi}{s}$, procedure \Call{Overdelete}{} for stratum index $s$ considers in lines~\ref{alg:dredcnt:del:decnrloop}, \ref{alg:dredcnt:del:decrloop}, and \ref{alg:dredcnt:del:ND-update} only instances of $r$ that are contained in $\ins{r}{\Old{I^s}}$, so the facts derived by these rule instances are in $\Old{I^s}$. Thus, the claim holds by a straightforward induction on the construction of the set $D^s$.
\end{proof}

\begin{claim}
The left-hand inclusion of property~(\ref{deletionproperty}) holds.
\end{claim}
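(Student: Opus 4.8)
The goal is to show $\Old{I^s} \setminus \New{I^s} \subseteq D^s$, i.e.\ that every fact of stratum $\leq s$ that was in the old materialisation but is \emph{not} in the new one has indeed been overdeleted. The plan is to argue by contraposition combined with an induction on the stratified fixpoint construction of $\New{I^s}$. Concretely, I would fix $F \in \Out{\leq s}$ with $F \notin D^s$ and show $F \in \New{I^s}$. The strata below $s$ are handled by the induction hypothesis: property~(\ref{insertionproperty}) for $s-1$ gives $(\Old{I^{s-1}} \setminus D^{s-1}) \cup A^{s-1} = \New{I^{s-1}}$, and by the structural facts used in Claim~\ref{claim-oldequals} ($(D^s \setminus D^{s-1}) \cap \Out{<s} = \emptyset$ and $A^{s-1} \subseteq \Out{<s}$) a fact $F \in \Out{<s}$ with $F \notin D^s$ satisfies $F \notin D^{s-1}$, so either $F \in A^{s-1} \subseteq \New{I^{s-1}} \subseteq \New{I^s}$, or $F \in \Old{I^{s-1}} \setminus D^{s-1} \subseteq \New{I^{s-1}} \subseteq \New{I^s}$. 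The remaining work is for $F \in \Out{s}$.

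For $F \in \Out{s} \setminus D^s$ I would run an inner induction on the least index $i$ such that $F \in \New{I^s_i}$ (recall $\New{I^s_0} = \New{I^{s-1}} \cup (\New{E} \cap \Out{s})$ and $\New{I^s_i} = \New{I^s_{i-1}} \cup \apply{\strat{\Pi}{s}}{\New{I^s_{i-1}}}$), simultaneously strengthening the claim so that for facts in $\Out{<s}$ the membership statement is already available from the outer induction. In the base case $i=0$: if $F \in \New{E} \cap \Out{s} = (E \setminus E^-) \cap \Out{s}$ then $F$ is an explicit fact that was not deleted, and I must show it is not overdeleted; this is exactly where the nonrecursive counter bookkeeping pays off. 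Using property~(\ref{nonrecursivecntdecrement}) and $\Old{\Cnr[F]} = \mathsf{Occ}(F, \Old{E} \oplus \mapply{\nrstrat{\Pi}{s}}{\Old{I^s}}) \geq 1$ (since $F \in \Old{E}$), together with the fact that $F \notin E^-$ so $F$ contributes nothing to the multiset $(E^- \cap \Out{s}) \oplus \dots$, one gets $\Del{\Cnr[F]} \geq 1$; but line~\ref{alg:dredcnt:del:DeltaD} of Algorithm~\ref{alg:dredcnt} only ever moves a fact with \emph{zero} nonrecursive counter into $\Delta_D$, and the counter only decreases during \Call{Overdelete}{}, so $F$ is never added to $D^s$ — contradicting nothing, and confirming $F \in \New{I^s}$ holds consistently. (The case $F \in \New{I^{s-1}}$ is covered by the outer induction as above.)

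For the inductive step $i > 0$: there is an instance $r' \in \ins{r}{\New{I^s_{i-1}}}$ of some rule $r \in \strat{\Pi}{s}$ with $\head{r'} = F$, so $\pbody{r'} \subseteq \New{I^s_{i-1}}$ and $\nbody{r'} \cap \New{I^s_{i-1}} = \emptyset$. Every positive body atom $G$ of $r'$ lies in $\Out{\leq s}$; I claim each such $G$ also lies outside $D^s$. Indeed if some $G \in \pbody{r'}$ were in $D^s$, then $r'$ — being an instance over $\Old{I^s}$ as well, since $\New{I^s_{i-1}} \subseteq \New{I^s} = \dots$ needs care; more robustly, $G \in D^s$ would mean $G$ was overdeleted, and I would trace through \Call{Overdelete}{} (lines~\ref{alg:dredcnt:del:decnrloop}--\ref{alg:dredcnt:del:endloop}): when the last positive body atom of $r'$ among $\Delta_D$-members is processed, the recursive-rule application in line~\ref{alg:dredcnt:del:ND-update} fires $r'$ and forces $F \in N_D$, hence (once its counters reach zero, which they do because $F \notin \New{I^s}$ would then follow) $F \in D^s$ — contradiction; the subtlety is that $r'$ must be a \emph{recursive} instance for line~\ref{alg:dredcnt:del:ND-update} to apply, and if $r$ is nonrecursive then instead line~\ref{alg:dredcnt:del:decnrloop} handles it via the counter, and I argue $\Cnr[F]$ stays positive exactly as in the base case. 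Granting that all positive body atoms of $r'$ avoid $D^s$, the induction hypothesis (inner for atoms in $\Out{s}$ with smaller index, outer for atoms in $\Out{<s}$) gives $\pbody{r'} \subseteq \New{I^s}$; and since $\nbody{r'}$ involves only strata $< s$, property~(\ref{insertionproperty}) for $s-1$ pins down $\New{I^{s-1}}$ exactly, so $\nbody{r'} \cap \New{I^s} = \emptyset$ as well, whence $F = \head{r'} \in \apply{\strat{\Pi}{s}}{\New{I^s}} \subseteq \New{I^s}$, as required.

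The main obstacle I anticipate is the case analysis for the body atom $G \in D^s$ step: one has to show that an overdeleted positive premise of a surviving derivation of $F$ would itself force $F$ into $D^s$, and this requires carefully matching the semina\"ive loop structure of \Call{Overdelete}{} (the $\Delta_D$ increments in line~\ref{alg:dredcnt:del:D-update}, the "one positive atom in $\Delta_D$" firing condition of $\apply{\rstrat{\Pi}{s}}{\cdot \appargs \Delta_D}$ in line~\ref{alg:dredcnt:del:ND-update}) against the definition~(\ref{eq:ruleinstancepn}) of affected rule instances, while keeping straight that a fact enters $D$ only after its nonrecursive counter hits zero. Splitting on whether the witnessing rule $r$ is recursive or nonrecursive, and for the recursive case doing a side-induction on which positive body atom of $r'$ is overdeleted first, should make this manageable; everything else is routine propagation of the two induction hypotheses.
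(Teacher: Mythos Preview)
Your proposal contains a circularity that undermines the whole argument. You set up the contrapositive correctly (assume $F \in \Old{I^s}$ and $F \notin D^s$, conclude $F \in \New{I^s}$), but then you induct on ``the least index $i$ such that $F \in \New{I^s_i}$''. The existence of such an index is exactly the statement $F \in \New{I^s}$ that you are trying to prove, so the induction never gets off the ground. The paper instead inducts on the \emph{old} fixpoint sequence, proving $\Old{I^s_i} \setminus \New{I^s} \subseteq D^s$ for each $i$ directly (no contraposition); the rule instances that witness membership in $\Old{I^s_i}$ are precisely the ones that \textsc{Overdelete} traverses, so this direction lines up naturally with the algorithm.

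The same confusion reappears in your inductive step. You take a rule instance $r'$ that fires on $\New{I^s_{i-1}}$ and then need $\pbody{r'} \subseteq \Old{I^s}$ for \textsc{Overdelete} to see it---which you yourself note ``needs care'' and never supply. More damagingly, to push $F$ from $N_D$ into $D^s$ you must have $\Del{\Cnr[F]} = 0$, and your justification (``which they do because $F \notin \New{I^s}$ would then follow'') invokes the negation of the very conclusion you are establishing. In the paper's direct argument this step is clean: one \emph{starts} from the hypothesis $F \notin \New{I^s}$, which gives $\mathsf{Occ}(F, \mapply{\nrstrat{\Pi}{s}}{\New{I^{s-1}}}) = 0$ and hence, via properties~(\ref{nonrecursiveoldequals}) and~(\ref{nonrecursivecntdecrement}) together with the outer induction hypothesis~(\ref{insertionproperty}) for $s-1$, that $\Del{\Cnr[F]} = 0$. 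A repaired contrapositive argument is possible, but it requires a different case split: from $F \in N_D$ and $F \notin D^s$ one deduces $\Del{\Cnr[F]} > 0$, and \emph{that} (rather than your inner induction on $i$) is what delivers $F \in \New{I^s}$, via a surviving nonrecursive derivation from $\New{I^{s-1}}$ or membership in $\Old{E} \setminus E^-$.
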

\begin{proof}
We show by induction that \eqref{eq:fbf:proof:DcontainsDel:i} holds for each
$i$.
\begin{align}
    \traceOld[I]{s}{i} \setminus \traceNew[I]{s}{} \subseteq \trace[D]{s}{} \label{eq:fbf:proof:DcontainsDel:i}
\end{align}

For the base case, note that ${\traceOld[I]{s}{0} = \traceOld[I]{s-1}{} \cup (\Old{E} \cap \Out{s})}$ and that ${\traceOld[I]{s-1}{} \setminus \traceNew[I]{s-1}{} \subseteq \trace[D]{s-1}{}}$ holds for $s-1$ by the induction assumption. Now consider an arbitrary fact ${F \in \Old{E} \cap \Out{s}}$ such that ${F \not\in \traceNew[I]{s}{}}$ holds. Then, the latter ensures ${F \not\in \Old{E} \setminus E^-}$, which implies ${F \in E^-}$ and so $F$ is added to $N_D$ in line~\ref{alg:dredcnt:del:decnrloop:end}. We next show that $F$ is added to $D$ in line~\ref{alg:dredcnt:del:D-update}: ${F \not\in \traceNew[I]{s}{}}$ and ${\traceNew[I]{s}{} = \traceNew[I]{s}{} \cup \apply{\strat{\Pi}{s}}{\traceNew[I]{s}{}}}$ imply ${\mathsf{Occ}(F, \mapply{\nrstrat{\Pi}{s}}{\traceNew[I]{s}{}}) = 0}$, which in turn implies ${\mathsf{Occ}(F, \mapply{\nrstrat{\Pi}{s}}{\traceNew[I]{s-1}{}}) = 0}$; but then, ${\traceOld[I]{s-1}{} \setminus (D^{s-1} \setminus A^{s-1}) \subseteq (\traceOld[I]{s-1}{} \setminus D^{s-1}) \cup A^{s-1}}$ and the induction assumption ${(\traceOld[I]{s-1}{} \setminus D^{s-1}) \cup A^{s-1} = \traceNew[I]{s-1}{}}$ ensure ${\mathsf{Occ}(F, \mapply{\nrstrat{\Pi}{s}}{\traceOld[I]{s-1}{} \setminus (D^{s-1} \setminus A^{s-1})), \traceOld[I]{s-1}{} \cup A^{s-1}} = 0}$; together with (\ref{nonrecursiveoldequals}), (\ref{nonrecursivecntdecrement}), and the definition of $\Old{\Cnr[F]}$ this implies $\Del{\Cnr[F]} = 0$, so the condition in line~\ref{alg:dredcnt:del:DeltaD} is satisfied. Hence, ${F \in \trace[D]{s}{}}$ holds, as required.

For the inductive step, assume that $\traceOld[I]{s}{i-1}$ satisfies \eqref{eq:fbf:proof:DcontainsDel:i} for ${i > 0}$, and consider arbitrary ${F \in \traceOld[I]{s}{i} \setminus \traceNew[I]{s}{}}$. If ${F \in \traceOld[I]{s}{i-1}}$, then ${F \in \trace[D]{s}{}}$ holds by the induction assumption. Otherwise, there exist a rule ${r \in \strat{\Pi}{s}}$ and its instance ${r' \in \ins{r}{\traceOld[I]{s}{i-1}}}$ such that ${F \in \head{r'}}$. Definition \eqref{eq:ruleinstancepn} ensures ${\pbody{r'} \subseteq \traceOld[I]{s}{i-1} \subseteq \traceOld[I]{s}{}}$ and ${\nbody{r'} \cap \traceOld[I]{s}{i-1} = \emptyset}$, and ${\nbody{r'} \subseteq \Out{<s}}$ implies ${\nbody{r'} \cap \traceOld[I]{s-1}{} = \nbody{r'} \cap \traceOld[I]{s}{} = \emptyset}$. Finally, ${F \not\in \traceNew[I]{s}{}}$ implies ${r' \not\in \ins{r}{\traceNew[I]{s}{}}}$, so by definition \eqref{eq:ruleinstancepn} we have one of the following two possibilities.
\begin{itemize}
    \item ${\pbody{r'} \not\subseteq \traceNew[I]{s}{}}$. Thus, a fact ${G \in \pbody{r'}}$ exists such that ${G \in \traceOld[I]{s}{i-1} \setminus \traceNew[I]{s}{}}$ holds. The induction assumption for \eqref{eq:fbf:proof:DcontainsDel:i} implies ${G \in \trace[D]{s}{}}$, and ${G \not\in \traceNew[I]{s}{}}$ implies ${G \not\in \traceNew[I]{s-1}{}}$, so the induction assumption for \eqref{insertionproperty} ensures ${G \not\in \trace[A]{s-1}{}}$; hence, ${G \in \trace[D]{s}{} \setminus \trace[A]{s-1}{}}$.

    \item ${\nbody{r'} \cap \traceNew[I]{s}{} = \nbody{r'} \cap \traceNew[I]{s-1}{} \neq \emptyset}$. Thus, a fact ${G \in \nbody{r'}}$ exists such that ${G \in \traceNew[I]{s-1}{} \setminus \traceOld[I]{s-1}{}}$ holds; but then, the right-hand inclusion of \eqref{deletionproperty} implies ${G \not\in \trace[D]{s-1}{}}$, and the induction assumption for \eqref{insertionproperty} implies ${G \in \trace[A]{s-1}{}}$; therefore, we have ${G \in \trace[A]{s-1}{} \setminus \trace[D]{s-1}{} = \trace[A]{s-1}{} \setminus \trace[D]{s}{}}$.
\end{itemize}
Either way, the right-hand side of \eqref{nonrecursivecntdecrement} or \eqref{recursivecntdecrement} is larger than zero. Hence, the count for $F$ is decremented in $\Call{Overdelete}{}$ and $F$ is added to $N_D$. Then, in the same way as in the proof of the base case we have ${F \in \trace[D]{s}{}}$, as required.
\end{proof}

\begin{claim} \label{RcontainedinNewI}
The right-hand inclusion of property~\eqref{onesteprederivationproperty} holds.
\end{claim}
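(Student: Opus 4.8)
The plan is to expand the definition of $R^s$ from line~\ref{alg:dredcnt:OneStepRederive}, rewrite the recursive counter $\Del{\Cr[F]}$ in terms of the rule instances that \emph{survive} overdeletion using the identities established above, and then show that any such surviving recursive instance still fires on the new materialisation. Concretely, when line~\ref{alg:dredcnt:OneStepRederive} runs for stratum $s$ the set $D$ equals $D^s$ and each $\Cr[F]$ holds the value $\Del{\Cr[F]}$ reached when \Call{Overdelete}{} returns, so $R^s = \{ F \in D^s \cap \Out{s} \mid \Del{\Cr[F]} > 0 \}$. Now for $F \in \Out{s}$ stratification gives $\Old{\Cr[F]} = \mathsf{Occ}(F, \mapply{\rstrat{\Pi}{s}}{\Old{I^s}})$, since rules of $\rstrat{\Pi}{s}$ see only facts of strata $t \leq s$, on which $\Old{I}$ and $\Old{I^s}$ agree; combining property~\eqref{recursivecntdecrement} with the split~\eqref{recursiveoldequals} then yields
\[
  \Del{\Cr[F]} = \mathsf{Occ}\bigl( F, \mapply{\rstrat{\Pi}{s}}{\Old{I^s} \setminus (D^s \setminus A^{s-1}), \Old{I^s} \cup A^{s-1}} \bigr).
\]
Hence $F \in R^s$ supplies a recursive rule $r \in \rstrat{\Pi}{s}$ and an instance $r' \in \ins{r}{\Old{I^s} \setminus (D^s \setminus A^{s-1}), \Old{I^s} \cup A^{s-1}}$ with $\head{r'} = F$.

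The core step is to show $r' \in \ins{r}{\New{I^s}}$, since then $F = \head{r'} \in \apply{\strat{\Pi}{s}}{\New{I^s}} \subseteq \New{I^s}$ because $\New{I^s}$ is a fixpoint of $\strat{\Pi}{s}$. For the positive body I would take $G \in \pbody{r'} \subseteq \Old{I^s} \setminus (D^s \setminus A^{s-1})$ and split on the stratum of $G$: if $G \in \Out{s}$ then $G \notin A^{s-1}$ as $A^{s-1} \subseteq \Out{<s}$, so $G \in \Old{I^s} \setminus D^s$ and the left-hand inclusion of~\eqref{deletionproperty} gives $G \in \New{I^s}$; if $G \in \Out{<s}$ then overdeletion at stratum $s$ leaves $G$ untouched, so $G \notin D^{s-1} \setminus A^{s-1}$, and the inductive hypothesis $(\Old{I^{s-1}} \setminus D^{s-1}) \cup A^{s-1} = \New{I^{s-1}}$ gives $G \in \New{I^{s-1}} \subseteq \New{I^s}$. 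For the negative body I would take $G \in \nbody{r'}$, so $G \notin \Old{I^s} \cup A^{s-1}$; stratification forces $G \in \Out{<s}$, hence $G \notin \Old{I^{s-1}}$ and $G \notin A^{s-1}$, so $G \notin \New{I^{s-1}}$ by the inductive hypothesis, and therefore $G \notin \New{I^s}$ since every fact of $\Out{<s}$ occurring in $\New{I^s}$ already occurs in $\New{I^{s-1}}$. Thus $r'$ fires on $\New{I^s}$ and $F \in \New{I^s}$.

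The part I expect to be most delicate is this stratum-based case analysis on the atoms of $r'$: one must track precisely which of $D^s$, $D^{s-1}$, and $A^{s-1}$ a given atom can belong to and invoke exactly the right earlier fact --- the just-proved left-hand inclusion of~\eqref{deletionproperty} for same-stratum positive atoms, and the inductive hypothesis~\eqref{insertionproperty} at stratum $s-1$ for atoms of lower strata. The only other point needing a word of justification is the identification $\Old{\Cr[F]} = \mathsf{Occ}(F, \mapply{\rstrat{\Pi}{s}}{\Old{I^s}})$, which is what lets property~\eqref{recursivecntdecrement} be turned into the displayed formula for $\Del{\Cr[F]}$.
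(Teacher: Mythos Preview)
Your argument is correct and follows the same route as the paper: expand $R^s$ via line~\ref{alg:dredcnt:OneStepRederive}, combine \eqref{recursiveoldequals} and \eqref{recursivecntdecrement} to identify $\Del{\Cr[F]}$ with the multiplicity of $F$ in $\mapply{\rstrat{\Pi}{s}}{\Old{I^s} \setminus (D^s \setminus A^{s-1}),\, \Old{I^s} \cup A^{s-1}}$, and then show that any such surviving instance fires on $\New{I^s}$. The only stylistic difference is that the paper handles the last step by a single set-level inclusion $\Old{I^s} \setminus (D^s \setminus A^{s-1}) \subseteq (\Old{I^s} \setminus D^s) \cup A^{s-1} \subseteq \New{I^s}$ together with $\New{I^{s-1}} \subseteq \Old{I^s} \cup A^{s-1}$, whereas you unfold this into an explicit per-atom stratum case split; the two are equivalent, and one minor point you leave implicit is that $F \in D^s$ already gives $F \in \Old{I^s}$ by the right-hand inclusion of~\eqref{deletionproperty}, which is needed before invoking~\eqref{recursivecntdecrement}.
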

\begin{proof}
Consider an arbitrary $F \in R^s$. $F$ can only be added to $R^s$ in line~\ref{alg:dredcnt:OneStepRederive}, so we have $F \in D^s \cap \Out{s}$ and $\Del{\Cr[F]} > 0$; $F \in D^s \cap \Out{s}$ and property~\eqref{deletionproperty} ensure $F \in \Old{I^s} \cap \Out{s}$; but then, properties~\eqref{recursivecntdecrement}, \eqref{recursiveoldequals} and $\Old{\Cr[F]} = \mathsf{Occ}(F, \mapply{\rstrat{\Pi}{s}}{\Old{I^s}})$ jointly imply $\Del{\Cr[F]} = \mathsf{Occ}(F, \mapply{\rstrat{\Pi}{s}}{\Old{I^s} \setminus (D^s \setminus A^{s-1}), \Old{I^s} \cup A^{s-1}})$; now $\Del{\Cr[F]} > 0$ implies $\mathsf{Occ}(F, \mapply{\rstrat{\Pi}{s}}{\Old{I^s} \setminus (D^s \setminus A^{s-1}), \Old{I^s} \cup A^{s-1}}) > 0$. Next we show that $\mapply{\rstrat{\Pi}{s}}{\Old{I^s} \setminus (D^s \setminus A^{s-1}), \Old{I^s} \cup A^{s-1}} \subseteq \mapply{\rstrat{\Pi}{s}}{\New{I^s}}$: due to stratification negative body atoms will only be matched against facts from lower strata, so we have $\mapply{\rstrat{\Pi}{s}}{\New{I^s}} = \mapply{\rstrat{\Pi}{s}}{\New{I^s}, \New{I^{s-1}}}$; moreover, the left-hand inclusion of property~\eqref{deletionproperty} implies $\Old{I^s} \setminus D^s \subseteq \New{I^s}$, which together with the induction assumption for property~\eqref{insertionproperty} implies $\Old{I^s} \setminus (D^s \setminus A^{s-1}) \subseteq (\Old{I^s} \setminus D^s) \cup A^{s-1} \subseteq \New{I^s}$; furthermore, the induction assumption for property~\eqref{insertionproperty} ensures $\New{I^{s-1}} = (\Old{I^{s-1}} \setminus D^{s-1}) \cup A^{s-1} \subseteq \Old{I^s} \cup A^{s-1}$; therefore we clearly have $\mapply{\rstrat{\Pi}{s}}{\Old{I^s} \setminus (D^s \setminus A^{s-1}), \Old{I^s} \cup A^{s-1}} \subseteq \mapply{\rstrat{\Pi}{s}}{\New{I^s}}$. But then, $\mathsf{Occ}(F, \mapply{\rstrat{\Pi}{s}}{\Old{I^s} \setminus (D^s \setminus A^{s-1}), \Old{I^s} \cup A^{s-1}}) > 0$ implies $\mathsf{Occ}(F, \mapply{\rstrat{\Pi}{s}}{\New{I^s}}) > 0$, so we have $F \in \New{I^s}$, as required.
\end{proof}

\begin{claim}
The left-hand inclusion of property~\eqref{onesteprederivationproperty} holds.
\end{claim}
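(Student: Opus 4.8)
The plan is to show directly that every fact on the left-hand side meets the membership condition for $R^s$ imposed in line~\ref{alg:dredcnt:OneStepRederive}, namely that it lies in $D^s \cap \Out{s}$ and has a strictly positive recursive counter at the moment \Call{Overdelete}{} returns for stratum $s$ (this value being exactly $\Del{\Cr[F]}$, since line~\ref{alg:dredcnt:OneStepRederive} is executed immediately after \Call{Overdelete}{}, when $D = D^s$). So first I would fix an arbitrary $F$ with $F \in \Out{s}$, $F \in D^s$, and ${F \in \apply{\rstrat{\Pi}{s}}{\Old{I^s} \setminus (D^s \setminus A^{s-1}), \Old{I^s} \cup A^{s-1}}}$. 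By Claim~\ref{claimrighthanddeletionproperty} we have $D^s \subseteq \Old{I^s}$, hence $F \in \Old{I^s} \cap \Out{s}$; this is what is needed for the counting properties of this stratum to apply to $F$.

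Next I would pin down the exact value of $\Del{\Cr[F]}$. Property~\eqref{recursivecntdecrement} gives ${\Old{\Cr[F]} - \Del{\Cr[F]} = \mathsf{Occ}(F, \mapply{\rstrat{\Pi}{s}}{\Old{I^s} \appargs D^s \setminus A^{s-1}, A^{s-1} \setminus D^s})}$, and by definition $\Old{\Cr[F]} = \mathsf{Occ}(F, \mapply{\rstrat{\Pi}{s}}{\Old{I^s}})$; subtracting the former from the latter and rewriting the difference with property~\eqref{recursiveoldequals} yields ${\Del{\Cr[F]} = \mathsf{Occ}(F, \mapply{\rstrat{\Pi}{s}}{\Old{I^s} \setminus (D^s \setminus A^{s-1}), \Old{I^s} \cup A^{s-1}})}$. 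Since $F$ was assumed to belong to the underlying set of exactly this multiset, its multiplicity there is at least one, so $\Del{\Cr[F]} > 0$. Combined with $F \in D^s \cap \Out{s}$, the definition of $R$ in line~\ref{alg:dredcnt:OneStepRederive} then gives $F \in R^s$, which establishes the inclusion.

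I do not expect a genuine obstacle in this claim: it is essentially a bookkeeping computation that converts the derivability assumption on $F$ into a positive multiplicity and then into a positive counter. The only two points requiring care are (i) making sure $F \in \Old{I^s}$ before invoking properties~\eqref{recursiveoldequals} and~\eqref{recursivecntdecrement}, which is handled by Claim~\ref{claimrighthanddeletionproperty}, and (ii) identifying the counter value tested in line~\ref{alg:dredcnt:OneStepRederive} with the quantity $\Del{\Cr[F]}$ used throughout the induction, which follows from the control flow of the stratum loop.
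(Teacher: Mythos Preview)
Your proposal is correct and mirrors the paper's own argument almost exactly: both fix an arbitrary $F$ in the left-hand set, use $D^s \subseteq \Old{I^s}$ to place $F$ in $\Old{I^s} \cap \Out{s}$, combine properties~\eqref{recursiveoldequals} and~\eqref{recursivecntdecrement} with the definition of $\Old{\Cr[F]}$ to obtain $\Del{\Cr[F]} = \mathsf{Occ}(F, \mapply{\rstrat{\Pi}{s}}{\Old{I^s} \setminus (D^s \setminus A^{s-1}), \Old{I^s} \cup A^{s-1}})$, and then conclude $\Del{\Cr[F]} > 0$ from membership in the corresponding set. Your explicit remark about identifying the counter read in line~\ref{alg:dredcnt:OneStepRederive} with $\Del{\Cr[F]}$ is a nice clarification that the paper leaves implicit.
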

\begin{proof}
Consider arbitrary $F \in \Out{s} \cap D^s \cap \apply{\rstrat{\Pi}{s}}{\Old{I^s} \setminus (D^s \setminus A^{s-1}), \Old{I^s} \cup A^{s-1}}$, and we show that $F \in R^s$ holds. ${F \in \Out{s} \cap D^s}$ implies $\Del{\Cr[F]} = \mathsf{Occ}(F, \mapply{\rstrat{\Pi}{s}}{\Old{I^s} \setminus (D^s \setminus A^{s-1}), \Old{I^s} \cup A^{s-1}})$ in the same way as in the proof of claim~\ref{RcontainedinNewI}; but then, the fact that ${F \in \apply{\rstrat{\Pi}{s}}{\Old{I^s} \setminus (D^s \setminus A^{s-1}), \Old{I^s} \cup A^{s-1}}}$ holds imply $\Del{\Cr[F]} > 0$, so line~\ref{alg:dredcnt:OneStepRederive} of Algorithm~\ref{alg:dredcnt} ensures that $F$ is added to $R^s$, as required.
\end{proof}

\begin{claim} \label{claimforioldcapaind}
Property~\eqref{ioldcapaind} holds.
\end{claim}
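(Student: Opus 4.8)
The plan is to argue directly from the control flow of Algorithm~\ref{alg:dredcnt}, without invoking the other invariants: facts enter $A$ through exactly one line, and reading off the guard that governs that line immediately forces membership in $D$. For $s=0$ the property is trivial because $A^0 = \emptyset$, so I would fix $s$ with $1 \le s \le S$ and take an arbitrary $F \in \Old{I^s} \cap A^s$.

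First I would pin down when $F$ entered $A$. The only assignment to $A$ is the update $A \defeq A \cup \Delta_A$ in line~\ref{alg:dredcnt:ins:A-update}, executed inside the loop of \Call{Insert}{} (and $A$ is initialised to $\emptyset$ in line~\ref{alg:dredcnt:init}); since the calls to \Call{Insert}{} for strata larger than $s$ have not yet occurred at the point where $A = A^s$, there is a stratum index $s'$ with $1 \le s' \le s$ and an iteration of \Call{Insert}{} for stratum $s'$ at which $F$ is added to $A$. Note also that $I$ still holds the `old' materialisation $\Old{I}$ at that moment (it is modified only in line~\ref{alg:dredcnt:update-I}, after the stratum loop), and $\Old{I^s} \subseteq \Old{I}$ by the monotonicity of the stratified construction; hence $F \in \Old{I}$. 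Now I would read off the guard: at the moment $F$ is added, $F \in \Delta_A = N_A \setminus ((I \setminus D) \cup A)$ by line~\ref{alg:dredcnt:ins:DeltaA}, so in particular $F \notin I \setminus D$; combined with $F \in I = \Old{I}$ this yields $F \in D$ at that point. Finally, \Call{Insert}{} never touches $D$, so the value of $D$ throughout \Call{Insert}{} for stratum $s'$ is exactly $D^{s'}$; and since $D$ is monotonically non-decreasing across strata and $s' \le s$, we get $F \in D^{s'} \subseteq D^s$, which is the required inclusion $\Old{I^s} \cap A^s \subseteq D^s$.

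I do not expect a genuine obstacle: the only points needing care are the bookkeeping observations that $A$ is modified solely in line~\ref{alg:dredcnt:ins:A-update}, that $D$ is left unchanged by \Call{Insert}{} (so its value there coincides with $D^{s'}$), and the monotonicities $\Old{I^s} \subseteq \Old{I}$ and $D^{s'} \subseteq D^s$ — none of which requires the inductive hypotheses for the earlier properties, so this claim is essentially independent of the rest of the induction. If a fully inductive presentation is preferred for uniformity with the other claims, the same reasoning can be recast as an induction on the iterations of the loop in lines~\ref{alg:dredcnt:ins:loop}--\ref{alg:dredcnt:ins:loop:end}, maintaining the invariant $\Old{I^s} \cap A \subseteq D$, but the one-shot argument above already suffices.
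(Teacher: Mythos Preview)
Your argument is correct and rests on the same core observation as the paper: a fact can enter $A$ only via line~\ref{alg:dredcnt:ins:A-update} as part of $\Delta_A$, and the guard in line~\ref{alg:dredcnt:ins:DeltaA} then forces $F \notin I \setminus D$, hence $F \in D$ whenever $F \in \Old{I}$. The paper's proof, however, is organised differently: it splits on whether $F \in A^{s-1}$ or $F \in A^s \setminus A^{s-1}$, dispatching the first case by the induction hypothesis for property~\eqref{ioldcapaind} at $s-1$ and the second by the $\Delta_A$ argument applied at stratum $s$ only. Your version instead unrolls this induction, tracking $F$ back to the stratum $s'$ at which it entered $A$ and using the monotonicity $D^{s'} \subseteq D^s$; this yields a self-contained argument that does not rely on the outer induction hypothesis for~\eqref{ioldcapaind}, which is a small but genuine simplification over the paper's presentation.
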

\begin{proof}
Consider an arbitrary fact $F \in \Old{I^s} \cap A^s$: if $F \in \Old{I^s} \cap A^{s-1}$ holds, then the induction assumption ensures that we have ${F \in D^{s-1} \subseteq D^s}$; if $F \in \Old{I^s} \cap (A^s \setminus A^{s-1})$ holds, then $F$ is added to $\Delta_A$ in line~\ref{alg:dredcnt:ins:DeltaA} of procedure~\Call{Insert}{} for stratum index $s$, which ensures $F \not\in (\Old{I^s} \setminus D^s) \cup A^{s-1}$; but then, $F \in \Old{I^s}$ implies $F \in D^s$, as required.
\end{proof}

\begin{claim}
The $\subseteq$ direction of property~\eqref{insertionproperty} holds.
\end{claim}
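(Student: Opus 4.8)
The plan is to split the inclusion $(\Old{I^s} \setminus D^s) \cup A^s \subseteq \New{I^s}$ into the two inclusions $\Old{I^s} \setminus D^s \subseteq \New{I^s}$ and $A^s \subseteq \New{I^s}$. The first one is immediate: it is the contrapositive of the left-hand inclusion of property~\eqref{deletionproperty}, since a fact in $\Old{I^s}$ that is not in $D^s$ cannot belong to $\Old{I^s} \setminus \New{I^s}$ and hence lies in $\New{I^s}$. The real work is the second inclusion $A^s \subseteq \New{I^s}$, which I would prove by induction on the iterations of the loop in lines~\ref{alg:dredcnt:ins:loop}--\ref{alg:dredcnt:ins:loop:end} of \Call{Insert}{} for stratum index $s$; write $A^s_0 = A^{s-1}$ and let $A^s_j$ be the value of $A$ after the $j$th iteration (termination of this loop is argued exactly as for the overdeletion loop in the proof of property~\eqref{recursivecntdecrement}, since the $\rstrat{\Pi}{s}$-instances visible in line~\ref{alg:dredcnt:ins:NA-update} are finite in number and are consumed without repetition).

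The key auxiliary observation I would isolate first concerns the \emph{working set} $W = (I \setminus D) \cup A$ that appears in lines~\ref{alg:dredcnt:ins:incnrloop}, \ref{alg:dredcnt:ins:incrloop}, and~\ref{alg:dredcnt:ins:NA-update}. Because $I = \Old{I}$, because overdeletion and insertion for stratum $s$ only ever add facts of $\Out{s}$ to $D$ and $A$ (the heads of rules in $\strat{\Pi}{s}$ lie in $\Out{s}$ by stratification), and because $D^s$ and the current $A$ therefore agree with $D^{s-1}$ and $A^{s-1}$ on $\Out{<s}$, the induction hypothesis $(\Old{I^{s-1}} \setminus D^{s-1}) \cup A^{s-1} = \New{I^{s-1}}$ of property~\eqref{insertionproperty} for $s-1$ gives $W \cap \Out{<s} = \New{I^{s-1}}$ throughout \Call{Insert}{}; moreover $W \cap \Out{s} = (\Old{I^s} \setminus D^s) \cup (A \cap \Out{s})$, which is contained in $\New{I^s}$ as soon as the current $A$ is, by the first inclusion above. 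Consequently, whenever the current $A$ already satisfies $A \subseteq \New{I^s}$, any instance $r'$ of a rule $r \in \strat{\Pi}{s}$ contributing a head fact in one of those lines has $\pbody{r'} \subseteq W \subseteq \New{I^s}$, while $\nbody{r'} \subseteq \Out{<s}$ is disjoint from $W$ and hence from $W \cap \Out{<s} = \New{I^{s-1}} = \New{I^s} \cap \Out{<s}$; since $\New{I^s}$ is closed under $\strat{\Pi}{s}$, this yields $r' \in \ins{r}{\New{I^s}}$ and therefore $\head{r'} \in \New{I^s}$.

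With this observation the induction is routine. For the base case, $A^s_0 = A^{s-1} \subseteq \New{I^{s-1}} \subseteq \New{I^s}$ by the hypothesis for property~\eqref{insertionproperty} at $s-1$, and the facts initially placed into $N_A$ all lie in $\New{I^s}$: $R^s \subseteq \New{I^s}$ by Claim~\ref{RcontainedinNewI}; $E^+ \cap \Out{s}$ consists of explicit facts of $\New{E} = (\Old{E} \setminus E^-) \cup E^+$ and hence lies in $\New{I^s}$; and the one-step nonrecursive and recursive consequences computed in lines~\ref{alg:dredcnt:ins:incnrloop} and~\ref{alg:dredcnt:ins:incrloop} while $A = A^s_0$ lie in $\New{I^s}$ by the previous paragraph. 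For the inductive step, $\Delta_A$ formed in line~\ref{alg:dredcnt:ins:DeltaA} is a subset of the current $N_A$, and $N_A$ has grown only by head facts of $\rstrat{\Pi}{s}$-instances evaluated over working sets built from strictly earlier values $A^s_{j'}$ with $j' < j$, each $\subseteq \New{I^s}$ by the inductive hypothesis; hence $\Delta_A \subseteq \New{I^s}$ and $A^s_j = A^s_{j-1} \cup \Delta_A \subseteq \New{I^s}$. Taking $j$ to be the last iteration gives $A^s \subseteq \New{I^s}$.

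The main obstacle I anticipate is the stratification bookkeeping of the second paragraph: one must make precise that the set the algorithm actually iterates over coincides with $\New{I^{s-1}}$ on the lower strata, so that negative-body conditions checked against it transfer to $\New{I^s}$, and that it never exceeds $\New{I^s}$ on stratum $s$. This last point is the only place where circularity could arise, and it is avoided by phrasing the loop induction so that, when iteration $j$ derives new facts, the working set it uses refers only to $A^s_{j'}$ with $j' < j$, which line~\ref{alg:dredcnt:ins:NA-update} guarantees. Everything else amounts to unwinding the definitions of $\ins{r}{\cdot}$, $\apply{\strat{\Pi}{s}}{\cdot}$, and $\mapply{\strat{\Pi}{s}}{\cdot}$ together with the elementary facts $\New{I^{s-1}} \subseteq \New{I^s}$ and $\New{E} \cap \Out{s} \subseteq \New{I^s}$.
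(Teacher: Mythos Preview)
Your proof is correct and follows essentially the same approach as the paper: both argue by induction on the growth of $A$ during \Call{Insert}{}, use the left-hand inclusion of property~\eqref{deletionproperty} for the $\Old{I^s}\setminus D^s$ part, invoke the right-hand inclusion of property~\eqref{onesteprederivationproperty} for $R^s$, and appeal to the inductive hypothesis for property~\eqref{insertionproperty} at $s-1$ together with stratification to handle the negative bodies. Your version is slightly more explicit about the stratification bookkeeping (isolating $W\cap\Out{<s}=\New{I^{s-1}}$ as a separate observation), whereas the paper folds this into a single case analysis on how each fact enters $A$, but the substance is the same.
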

\begin{proof}
We prove by induction on the construction of $A$ in $\textsc{Insert}$ that ${(\traceOld[I]{s}{} \setminus \trace[D]{s}{}) \cup A \subseteq \traceNew[I]{s}{}}$ holds. We first consider the base case. Set $A$ is equal to ${\trace[A]{s-1}{}}$ before the loop in lines~\ref{alg:dredcnt:ins:loop}--\ref{alg:dredcnt:ins:loop:end}; thus, property \eqref{insertionproperty} is equivalent to ${(\traceOld[I]{s}{} \setminus \trace[D]{s}{}) \cup \trace[A]{s-1}{} \subseteq \traceNew[I]{s}{}}$; now ${\traceOld[I]{s}{} \setminus \trace[D]{s}{} \subseteq \traceNew[I]{s}{}}$ is implied by the left-hand inclusion of property~\eqref{deletionproperty}, whereas $A^{s-1} \subseteq \New{I^s}$ holds by the induction assumption.

For the inductive step, we assume that $(\Old{I^s} \setminus D^s) \cup A \subseteq \New{I^s}$ holds, and we consider ways in which Algorithm~\ref{alg:dredcnt} can add a fact $F$ to $A$. If $F \in E^+ \cap \Out{s}$, then $F \in \New{I^s}$ clearly holds. Moreover, if $F \in R^s$ holds, then $F \in \New{I^s}$ holds by \eqref{onesteprederivationproperty}. Otherwise, $F$ is derived in line~\ref{alg:dredcnt:ins:incnrloop}, \ref{alg:dredcnt:ins:incrloop}, or \ref{alg:dredcnt:ins:incrsecondloop}, so a rule $r \in \strat{\Pi}{s}$ and its instance $r' = \ins{r}{(\Old{I^s} \setminus D^s) \cup A}$ exist such that $F \in \head{r'}$ holds. But then, definition~\eqref{eq:ruleinstancepn} ensures $\pbody{r'} \subseteq (\Old{I^s} \setminus D^s) \cup A \subseteq \New{I^s}$ and $\nbody{r'} \cap ((\Old{I^s} \setminus D^s) \cup A) = \emptyset$, which together with $\nbody{r'} \subseteq \Out{<s}$ and the induction assumption for \eqref{insertionproperty} implies $\nbody{r'} \cap \New{I^s} = \nbody{r'} \cap \New{I^{s-1}} = \emptyset$. Consequently, we have $r' \in \ins{r}{\New{I^s}}$, so $F \in \New{I^s}$ holds, as required.
\end{proof}

\begin{claim} \label{claiminewininsertion}
The $\supseteq$ direction of property~\eqref{insertionproperty} holds.
\end{claim}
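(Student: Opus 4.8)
The plan is to establish the $\supseteq$ direction of \eqref{insertionproperty}, i.e.\ $\New{I^s} \subseteq (\Old{I^s} \setminus D^s) \cup A^s$, by induction on $i$, proving that $\New{I^s_i} \subseteq M$ for every $i \geq 0$, where $M = (\Old{I^s} \setminus D^s) \cup A^s$; since $\New{I^s} = \bigcup_{i \geq 0} \New{I^s_i}$, this yields the claim. Throughout I would use the outer induction hypotheses \eqref{deletionproperty}, \eqref{ioldcapaind}, and \eqref{insertionproperty} for $s-1$; the already-established $\subseteq$ direction of \eqref{insertionproperty} for $s$, which gives $M \subseteq \New{I^s}$; the facts $(D^s \setminus D^{s-1}) \cap \Out{<s} = \emptyset$ and $A^{s-1},\Old{I^{s-1}},D^{s-1} \subseteq \Out{<s}$ already noted in the proof of Claim~\ref{claim-oldequals}; and the counter characterisations \eqref{nonrecursiveoldequals}, \eqref{nonrecursivecntdecrement}, \eqref{recursiveoldequals}, \eqref{recursivecntdecrement}.

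For the base case $\New{I^s_0} = \New{I^{s-1}} \cup (\New{E} \cap \Out{s})$, the part $\New{I^{s-1}}$ is handled by \eqref{insertionproperty} for $s-1$ together with $A^{s-1} \subseteq A^s$ and the observation that any $G \in \Old{I^{s-1}} \setminus D^{s-1}$ lies in $\Out{<s}$ and hence cannot enter $D^s$, so $G \in \Old{I^s} \setminus D^s$. For $\New{E} = (\Old{E} \setminus E^-) \cup E^+$, if $F \in E^+ \cap \Out{s}$ then $F$ is placed into $N_A$ in line~\ref{alg:dredcnt:ins:incnrloop}, and the termination condition of the \Call{Insert}{} loop forces $F \in M$; if $F \in (\Old{E} \setminus E^-) \cap \Out{s}$ then $F \in \Old{I^s}$ and $F$ is never overdeleted, because the only decrements of $\Cnr[F]$ during \Call{Overdelete}{} occur in line~\ref{alg:dredcnt:del:decnrloop}, and combining \eqref{nonrecursiveoldequals} with \eqref{nonrecursivecntdecrement} (using $F \notin E^-$) shows $\Cnr[F]$ never drops below $\mathsf{Occ}(F,\Old{E}) \geq 1$, so the guard $\Cnr[F]=0$ in line~\ref{alg:dredcnt:del:DeltaD} is never met and $F \in \Old{I^s} \setminus D^s \subseteq M$.

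For the inductive step, take $F \in \New{I^s_i} \setminus \New{I^s_{i-1}}$, so $F = \head{r'}$ for some $r \in \strat{\Pi}{s}$ and $r' \in \ins{r}{\New{I^s_{i-1}}}$. The inner hypothesis gives $\pbody{r'} \subseteq M$, and since $\nbody{r'} \subseteq \Out{<s}$, $\New{I^s} \cap \Out{<s} = \New{I^{s-1}} \subseteq \New{I^s_{i-1}}$, and $M \subseteq \New{I^s}$, we get $\nbody{r'} \cap M = \emptyset$, hence $r' \in \ins{r}{M} = \ins{r}{\New{I^s}}$. If $r'$ is \emph{unaffected}, i.e.\ $r' \in \ins{r}{\Old{I^s} \setminus (D^s \setminus A^{s-1}), \Old{I^s} \cup A^{s-1}}$, then $r' \in \ins{r}{\Old{I^s}}$ and so $F \in \Old{I^s}$; if $F \notin D^s$ we are done, and if $F \in D^s$ then either $r \in \rstrat{\Pi}{s}$ and $F \in R^s$ by \eqref{onesteprederivationproperty}, or $r \in \nrstrat{\Pi}{s}$ and then, by \eqref{nonrecursiveoldequals} and \eqref{nonrecursivecntdecrement}, $\Del{\Cnr[F]} \geq 1$, contradicting $F \in D^s$. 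Otherwise $r'$ is \emph{affected}: some positive body atom of $r'$ lies in $A^s \setminus D^s$ (possibly a genuinely new fact), or some negative body atom lies in $D^s \setminus A^{s-1}$. A case analysis on the rule type and on whether the `changed' atom already lies in $A^{s-1}$, lies in $D^s \setminus A^{s-1}$, or is first added in some iteration of the \Call{Insert}{} loop then shows that $r'$ is one of the instances enumerated in line~\ref{alg:dredcnt:ins:incnrloop}, line~\ref{alg:dredcnt:ins:incrloop}, or line~\ref{alg:dredcnt:ins:incrsecondloop}; for line~\ref{alg:dredcnt:ins:incrsecondloop} one uses the standard semina\"ive argument of picking, among the positive body atoms of $r'$ lying in $A^s$, the one added to $A$ last, which guarantees that all remaining body atoms of $r'$ are already available at that iteration. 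Hence $F$ is added to $N_A$, and the termination condition of the loop gives $F \in M$.

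The main obstacle is the affected case of the inductive step: one must reconcile the semina\"ive enumeration of affected rule instances in \Call{Insert}{} with the fact that a stratum-$s$ fact may have been overdeleted during \Call{Overdelete}{} and yet still hold in $\New{I^s}$. Such a fact must re-enter the computation either through the one-step rederivation set $R^s$ (via the lower bound of \eqref{onesteprederivationproperty}, which itself relies on the recursive-counter invariants \eqref{recursiveoldequals} and \eqref{recursivecntdecrement}) or because some surviving rule instance deriving it is itself affected; determining precisely which of the sets $A^{s-1} \setminus D^s$, $D^s \setminus A^{s-1}$, or the running $\Delta_A$ a `changed' body atom belongs to — and verifying that the remaining body atoms of an instance are available at the iteration where it is enumerated — is the delicate bookkeeping the proof has to carry out carefully.
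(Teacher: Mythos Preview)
Your proposal is correct and follows essentially the same strategy as the paper: induction on $i$ showing $\New{I^s_i} \subseteq (\Old{I^s} \setminus D^s) \cup A^s$, with the same base-case reasoning and essentially the same inductive case split into ``unaffected'' versus ``affected'' rule instances (the paper merely orders the split the other way round, first peeling off $\pbody{r'}\cap(A^s\setminus A^{s-1})\neq\emptyset$, then $\pbody{r'}\cap(A^{s-1}\setminus D^s)\neq\emptyset$ or $\nbody{r'}\cap(D^s\setminus A^{s-1})\neq\emptyset$, leaving the unaffected case last).

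Two small remarks. First, your stated characterisation of ``affected'' as ``some positive body atom lies in $A^s\setminus D^s$'' is not quite the complement of ``unaffected'': a positive body atom can lie in $(A^s\setminus A^{s-1})\cap D^s$ (for instance any fact in $R^s$), which is outside $A^s\setminus D^s$; your subsequent case option ``first added in some iteration of the \textsc{Insert} loop'' does cover such atoms, so the argument still goes through, but the intermediate sentence is inaccurate. Second, your explicit handling of the nonrecursive sub-case in the ``unaffected'' branch (arguing $\Del{\Cnr[F]}\geq 1$, hence $F\notin D^s$) is a genuine addition: the paper's proof invokes \eqref{onesteprederivationproperty} for $F\in D^s$ without noting that this property concerns only $\rstrat{\Pi}{s}$, so your version actually closes a small gap there.
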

\begin{proof}
We show by induction that \eqref{eq:fbf:proof:INewInInsertion:i} holds for each $i$.
\begin{align}
     (\Old{I^s} \setminus D^s) \cup A^s \supseteq \New{I^s_i} \label{eq:fbf:proof:INewInInsertion:i}
\end{align}

For the base case, we have $\New{I^s_0} = \New{I^{s-1}} \cup (\New{E} \cap \Out{s}) = (\Old{I^{s-1}} \setminus D^{s-1}) \cup A^{s-1} \cup (\New{E} \cap \Out{s})$ by the induction assumption for~\eqref{insertionproperty}. $\Old{I^{s-1}} \setminus D^{s-1} \subseteq \Old{I^s} \setminus D^s$ and $A^{s-1} \subseteq A^s$ clearly hold. Now consider arbitrary $F \in \New{E} \cap \Out{s}$. If $F \in E^+$, then lines~\ref{alg:dredcnt:ins:incnrloop}, \ref{alg:dredcnt:ins:incnrloop:end}, \ref{alg:dredcnt:ins:DeltaA}, and \ref{alg:dredcnt:ins:A-update} ensure $F \in (\Old{I^s} \setminus D^s) \cup A^s$. If $F \in \New{E} \setminus E^+ = \Old{E} \setminus E^-$, then we clearly have $F \in \Old{I^s}$; but then, $\Old{\Cnr[F]} = \mathsf{Occ}(F, \Old{E} \oplus \mapply{\nrstrat{\Pi}{s}}{\Old{I}})$ and property~\eqref{nonrecursivecntdecrement} imply $\Del{\Cnr[F]} \geq 1$; thus, line~\ref{alg:dredcnt:del:DeltaD} ensures $F \not\in D^s$.

For the inductive step, assume that $\New{I^s_{i-1}}$ satisfies \eqref{eq:fbf:proof:INewInInsertion:i} for $i > 0$, and consider arbitrary $F \in \New{I^s_i}$. If $F \in \New{I^s_{i-1}}$, then \eqref{eq:fbf:proof:INewInInsertion:i} holds by the induction assumption. Otherwise, by definition a rule $r$ and its instance $r' \in \ins{r}{\New{I^s_{i-1}}}$ exist where $\head{r'} = F$. Definition~\eqref{eq:ruleinstancepn} and the induction assumption for \eqref{eq:fbf:proof:INewInInsertion:i} ensure $\pbody{r'} \subseteq \New{I^s_{i-1}} \subseteq (\Old{I^{s}} \setminus D^s) \cup A^s$. Moreover, definition~\eqref{eq:ruleinstancepn} also ensures $\nbody{r'} \cap \New{I^s_{i-1}} = \emptyset$, which together with the induction assumption that property~\eqref{insertionproperty} holds for $s-1$ and the definition of $\New{I^s_{i-1}}$ implies $\nbody{r'} \cap ((\Old{I^{s-1}} \setminus D^{s-1}) \cup A^{s-1}) = \emptyset$; but then, $\nbody{r'} \subseteq \Out{<s}$ implies $\nbody{r'} \cap ((\Old{I^{s}} \setminus D^{s}) \cup A^{s}) = \emptyset$. Now we consider the following cases.
\begin{itemize}
    \item $\pbody{r'} \cap (A^s \setminus A^{s-1}) \neq \emptyset$. Facts in $A^s \setminus A^{s-1}$ are added to $A$ via lines~\ref{alg:dredcnt:ins:DeltaA} and \ref{alg:dredcnt:ins:A-update}, so there is a point in the execution of the algorithm where $\pbody{r'} \cap (A^s \setminus A^{s-1}) \cap \Delta_A \neq \emptyset$ holds in line~\ref{alg:dredcnt:ins:incrsecondloop} for the last time for $\Delta_A$. Since $\Delta_A \subseteq A$ holds at this point, we clearly have $\pbody{r'} \subseteq (\Old{I^s} \setminus D^s) \cup A$; moreover, $A \subseteq A^s$ and $\nbody{r'} \cap ((\Old{I^{s}} \setminus D^{s}) \cup A^{s}) = \emptyset$ ensure $\nbody{r'} \cap ((\Old{I^{s}} \setminus D^{s}) \cup A) = \emptyset$. But then, $r' \in \ins{r}{(\Old{I^s} \setminus D^s) \cup A \appargs \Delta_A}$ holds; $F = \head{r'}$ will be added to $N_A$ in line~\ref{alg:dredcnt:ins:NA-update}; and lines~\ref{alg:dredcnt:ins:DeltaA} and \ref{alg:dredcnt:ins:A-update} ensure $F \in (\Old{I^s} \setminus D^s) \cup A^s$.

    \item $\pbody{r'} \cap (A^s \setminus A^{s-1}) = \emptyset$, so $\pbody{r'} \subseteq (\Old{I^s} \setminus D^s) \cup A^{s-1}$ holds. We have the following two possibilities.
    \begin{itemize}
        \item $\pbody{r'} \cap (A^{s-1} \setminus D^s)$ or $\nbody{r'} \cap (D^s \setminus A^{s-1})$. But then, by the definition~\eqref{eq:ruleinstancepn} of rule matching, we clearly have ${r' \in \ins{r}{(\Old{I^s} \setminus D^s) \cup A^{s-1} \appargs A^{s-1} \setminus D^s, D^s \setminus A^{s-1}}}$. But then, lines~\ref{alg:dredcnt:ins:incnrloop}--\ref{alg:dredcnt:ins:incrloop:end} ensure that $F = \head{r'}$ is added to $N_A$; and lines~\ref{alg:dredcnt:ins:DeltaA} and \ref{alg:dredcnt:ins:A-update} ensure $F \in (\Old{I^s} \setminus D^s) \cup A^s$.

        \item $\pbody{r'} \cap (A^{s-1} \setminus D^s) = \nbody{r'} \cap (D^s \setminus A^{s-1}) = \emptyset$. But then, $\nbody{r'} \cap (D^s \setminus A^{s-1}) = \emptyset$ and $\nbody{r'} \cap (\Old{I^s} \setminus D^s) \cap A^s = \emptyset$ imply $\nbody{r'} \cap (\Old{I^s} \cup A^{s-1}) = \nbody{r'} \cap (\Old{I^s} \cup A^s) = \emptyset$. Moreover, we argue that each fact $G \in \pbody{r'} \subseteq (\Old{I^s} \setminus D^s) \cup A^{s-1}$ satisfies $G \in \Old{I^s} \setminus (D^s \setminus A^{s-1})$. This clearly holds if $G \in \Old{I^s} \setminus D^s$. If $G \in A^{s-1}$, then $G \not \in A^{s-1} \setminus D^s$ implies $G \in D^s$, which in turn implies $G \in \Old{I^s}$ by claim~\ref{claimrighthanddeletionproperty}; thus, $G \in \Old{I^s} \setminus (D^s \setminus A^{s-1})$ holds. Now, definition~\eqref{eq:ruleinstancepn} ensures $r' \in \ins{r}{\Old{I^s} \setminus (D^s \setminus A^{s-1}), \Old{I^s} \cup A^{s-1}}$, which implies $F = \head{r'} \in \Old{I^s}$. If $F \not\in D^s$, then $F \in (\Old{I^s} \setminus D^s) \cup A^s$ trivially holds. If $F \in D^s$, property~\eqref{onesteprederivationproperty} implies $F \in R^s$; then, lines~\ref{alg:dredcnt:ins:NAequalsR}, \ref{alg:dredcnt:ins:DeltaA}, and \ref{alg:dredcnt:ins:A-update} ensure $F \in (\Old{I^s} \setminus D^s) \cup A^s$.
    \end{itemize}
\end{itemize}
\end{proof}

\begin{claim} \label{claimforidaequals}
The following two properties hold.
\begin{align}
&\begin{array}{@{}l@{\;}l@{\;}l@{}} \label{nonrecursiveidaequals}
\mapply{\nrstrat{\Pi}{s}}{(\Old{I^s} \setminus D^s) \cup A^s} & = & \mapply{\nrstrat{\Pi}{s}}{\Old{I^s} \setminus (D^s \setminus A^{s-1}), \Old{I^s} \cup A^{s-1}} \\
& \oplus & \mapply{\nrstrat{\Pi}{s}}{(\Old{I^s} \setminus D^s) \cup A^s \appargs A^s \setminus D^{s-1}, D^{s-1} \setminus A^{s-1}}
\end{array} \\
&\begin{array}{@{}l@{\;}l@{\;}l@{}} \label{recursiveidaequals}
\mapply{\rstrat{\Pi}{s}}{(\Old{I^s} \setminus D^s) \cup A^s} & = & \mapply{\rstrat{\Pi}{s}}{\Old{I^s} \setminus (D^s \setminus A^{s-1}), \Old{I^s} \cup A^{s-1}} \\
& \oplus & \mapply{\rstrat{\Pi}{s}}{(\Old{I^s} \setminus D^s) \cup A^s \appargs A^s \setminus D^{s-1}, D^{s-1} \setminus A^{s-1}}
\end{array}
\end{align}
\end{claim}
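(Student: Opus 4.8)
The plan is to derive both~(\ref{nonrecursiveidaequals}) and~(\ref{recursiveidaequals}) from one generic splitting of the rule applications over the `new' stratum-$s$ materialisation, in the style of the proof of Claim~\ref{claim-oldequals}. First I would use property~(\ref{insertionproperty}) for~$s$ (established in Claim~\ref{claiminewininsertion} and the claim preceding it) to record that ${(\Old{I^s} \setminus D^s) \cup A^s = \New{I^s}}$. Then I would invoke the routine partition of rule instances that is immediate from definition~(\ref{eq:ruleinstancepn}): for all datasets $J$, $P$, $N$ with ${P \subseteq J}$ and ${N \cap J = \emptyset}$, an instance $r'$ of a rule $r$ lies in $\ins{r}{J}$ iff it lies in exactly one of ${\ins{r}{J \appargs P, N}}$ and ${\ins{r}{J \setminus P, J \cup N}}$ (here ${P \subseteq J}$ is used for membership in $\ins{r}{J}$, and ${N \cap J = \emptyset}$ lets one turn `${\nbody{r'} \cap N = \emptyset}$' into `${\nbody{r'} \cap (J \cup N) = \emptyset}$'). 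Hence, for ${\Pi' \in \{ \nrstrat{\Pi}{s}, \rstrat{\Pi}{s} \}}$,
\begin{displaymath}
    \mapply{\Pi'}{\New{I^s}} = \mapply{\Pi'}{\New{I^s} \appargs P, N} \oplus \mapply{\Pi'}{\New{I^s} \setminus P, \New{I^s} \cup N},
\end{displaymath}
and instantiating ${P = A^s \setminus D^{s-1}}$ and ${N = D^{s-1} \setminus A^{s-1}}$ makes the first summand literally the `affected' multiset appearing on the right-hand sides of~(\ref{nonrecursiveidaequals}) and~(\ref{recursiveidaequals}).

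Next I would discharge the side conditions of the partition. The inclusion ${P \subseteq \New{I^s}}$ is trivial since ${A^s \setminus D^{s-1} \subseteq A^s \subseteq \New{I^s}}$. For ${N \cap \New{I^s} = \emptyset}$ I would use that $D$ and $A$ only grow, so ${D^{s-1} \subseteq D^s}$ and hence $D^{s-1}$ is disjoint from ${\Old{I^s} \setminus D^s}$; and, as in the proof of Claim~\ref{claim-oldequals}, the facts $A$ gains while processing stratum~$s$ all lie in $\Out{s}$ whereas ${D^{s-1} \subseteq \Old{I^{s-1}} \subseteq \Out{<s}}$ by the right-hand inclusion of~(\ref{deletionproperty}) for~${s-1}$, so ${A^s \cap D^{s-1} = A^{s-1} \cap D^{s-1}}$ and therefore ${(D^{s-1} \setminus A^{s-1}) \cap A^s = \emptyset}$.

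The remaining and, I expect, the hardest step is to identify the `unaffected' summand ${\mapply{\Pi'}{\New{I^s} \setminus P, \New{I^s} \cup N}}$ with ${\mapply{\Pi'}{\Old{I^s} \setminus (D^s \setminus A^{s-1}), \Old{I^s} \cup A^{s-1}}}$. Because the positive body atoms of a rule in $\rstrat{\Pi}{s}$ reference predicates of stratum at most~$s$ and its negative body atoms predicates of stratum below~$s$ (and a fortiori for $\nrstrat{\Pi}{s}$), it suffices to prove: (a) ${\New{I^s} \setminus (A^s \setminus D^{s-1}) = \Old{I^s} \setminus (D^s \setminus A^{s-1})}$ as sets; and (b) ${(\New{I^s} \cup (D^{s-1} \setminus A^{s-1})) \cap \Out{<s} = (\Old{I^s} \cup A^{s-1}) \cap \Out{<s}}$. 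For~(b) I would reduce both sides to ${\Old{I^{s-1}} \cup A^{s-1}}$, using ${\New{I^s} \cap \Out{<s} = \New{I^{s-1}} = (\Old{I^{s-1}} \setminus D^{s-1}) \cup A^{s-1}}$ (property~(\ref{insertionproperty}) for~${s-1}$), ${D^{s-1} \cup A^{s-1} \subseteq \Out{<s}}$, ${D^{s-1} \subseteq \Old{I^{s-1}}}$, and ${\Old{I^s} \cap \Out{<s} = \Old{I^{s-1}}}$.

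Claim~(a) is where the real bookkeeping lies and is the main obstacle. Rewriting ${\New{I^s} = (\Old{I^s} \setminus D^s) \cup A^s}$, its left-hand side becomes ${(\Old{I^s} \setminus D^s) \cup (A^s \cap D^{s-1})}$ once one observes that ${\Old{I^s} \cap A^s \subseteq D^s}$ (property~(\ref{ioldcapaind}) for~$s$) and ${D^{s-1} \subseteq D^s}$, so that removing ${A^s \setminus D^{s-1}}$ from ${\Old{I^s} \setminus D^s}$ removes nothing; and the right-hand side is directly ${(\Old{I^s} \setminus D^s) \cup (\Old{I^s} \cap A^{s-1})}$. It then remains to chain ${A^s \cap D^{s-1} = A^{s-1} \cap D^{s-1} = \Old{I^{s-1}} \cap A^{s-1} = \Old{I^s} \cap A^{s-1}}$, where the first equality uses ${D^{s-1} \cap \Out{s} = \emptyset}$ together with the stratum membership of the new $A$-facts, the second uses ${D^{s-1} \subseteq \Old{I^{s-1}}}$ and property~(\ref{ioldcapaind}) for~${s-1}$, and the third uses ${A^{s-1} \subseteq \Out{<s}}$ with ${\Old{I^s} \cap \Out{<s} = \Old{I^{s-1}}}$. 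Once~(a) and~(b) are in place, both~(\ref{nonrecursiveidaequals}) and~(\ref{recursiveidaequals}) follow simultaneously, the only difference between the $\nrstrat{\Pi}{s}$ and $\rstrat{\Pi}{s}$ cases being which strata the body atoms may mention.
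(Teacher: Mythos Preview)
Your argument is correct and reaches the same conclusion as the paper, but the decomposition is genuinely different. The paper works directly at the rule-instance level: it shows for each ${r \in \strat{\Pi}{s}}$ that the two instance sets on the right of~(\ref{nonrecursiveidaequals})/(\ref{recursiveidaequals}) are disjoint and that their union equals ${\ins{r}{(\Old{I^s} \setminus D^s) \cup A^s}}$, by case analysis on where the positive and negative body atoms of an instance land relative to $A^s$, $A^{s-1}$, $D^s$, $D^{s-1}$. You instead abstract out a generic partition lemma---${\ins{r}{J}}$ splits as ${\ins{r}{J \appargs P,N} \uplus \ins{r}{J\setminus P, J\cup N}}$ whenever ${P\subseteq J}$ and ${N\cap J=\emptyset}$---apply it with ${J=(\Old{I^s}\setminus D^s)\cup A^s}$, and then reduce everything to the two set identities~(a) and~(b). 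The verifications of~(a) and~(b) you sketch go through exactly as stated; in particular the chain ${A^s\cap D^{s-1}=A^{s-1}\cap D^{s-1}=\Old{I^{s-1}}\cap A^{s-1}=\Old{I^s}\cap A^{s-1}}$ is the heart of the matter and uses precisely the inductive facts (\ref{deletionproperty}) and (\ref{ioldcapaind}) for $s-1$ together with the stratum membership of $D^{s-1}$ and $A^{s-1}$.

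Your route is more modular: the partition lemma is a one-line observation that could equally have been invoked for Claim~\ref{claim-oldequals}, and the problem-specific content is cleanly isolated in~(a) and~(b). The paper's route avoids introducing an abstraction and handles disjointness and coverage in one pass, at the cost of repeating similar case splits. Two minor remarks: your appeal to property~(\ref{insertionproperty}) for $s$ is purely cosmetic (you only use it to write $\New{I^s}$ for ${(\Old{I^s}\setminus D^s)\cup A^s}$; the computation of ${J\cap\Out{<s}}$ in~(b) goes through directly from ${D^s\cap\Out{<s}=D^{s-1}}$ and ${A^s\cap\Out{<s}=A^{s-1}}$ without it), and the parenthetical about ``${N\cap J=\emptyset}$ lets one turn\ldots'' is slightly off---the implication from ${\nbody{r'}\cap J=\emptyset}$ and ${\nbody{r'}\cap N=\emptyset}$ to ${\nbody{r'}\cap(J\cup N)=\emptyset}$ needs no hypothesis on $N\cap J$; the condition ${N\cap J=\emptyset}$ is only the well-formedness precondition for the $\appargs$ notation.
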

\begin{proof}
By the definition of $\mapply{\Pi}{\ipos, \ineg \appargs P, N}$ it is sufficient to show that for each rule $r \in \strat{\Pi}{s}$, properties~\eqref{instidanonrepetition} and \eqref{instidaequals} hold.
\begin{align}
&\begin{array}{@{}l@{\;}l@{\;}l@{}}
\ins{r}{\Old{I^s} \setminus (D^s \setminus A^{s-1}), \Old{I^s} \cup A^{s-1}} \cap \ins{r}{(\Old{I^s} \setminus D^s) \cup A^s \appargs A^s \setminus D^{s-1}, D^{s-1} \setminus A^{s-1}} = \emptyset
\end{array} & \label{instidanonrepetition} \\
&\begin{array}{@{}l@{\;}l@{\;}l@{}}
\ins{r}{(\Old{I^s} \setminus D^s) \cup A^s} & = & \ins{r}{\Old{I^s} \setminus (D^s \setminus A^{s-1}), \Old{I^s} \cup A^{s-1}} \\
& \cup & \ins{r}{(\Old{I^s} \setminus D^s) \cup A^s \appargs A^s \setminus D^{s-1}, D^{s-1} \setminus A^{s-1}}
\end{array} & \label{instidaequals}
\end{align}
To prove \eqref{instidanonrepetition}, consider an arbitrary rule instance $r' \in \ins{r}{(\Old{I^s} \setminus D^s) \cup A^s \appargs A^s \setminus D^{s-1}, D^{s-1} \setminus A^{s-1}}$. By definition~\eqref{eq:ruleinstancepn} we have $\pbody{r'} \cap (A^s \setminus D^{s-1}) \neq \emptyset$ or $\nbody{r'} \cap (D^{s-1} \setminus A^{s-1}) \neq \emptyset$; now we examine these two cases separately.

For the first case, let $F$ be an arbitrary fact in $\pbody{r'} \cap (A^s \setminus D^{s-1})$. Now if $F \in A^{s-1} \setminus D^{s-1}$ holds, then the induction assumption for \eqref{ioldcapaind} ensures $F \not\in \Old{I^{s-1}}$, which in turn implies $F \not \in \Old{I^s} \setminus (D^s \setminus A^{s-1})$; if $F \in A^s \setminus A^{s-1}$ holds, then in the same way as in the proof of claim~\ref{claimforioldcapaind} we have $F \not \in (\Old{I^s} \setminus D^s) \cup A^{s-1}$, which implies $F \not \in \Old{I^s} \setminus (D^s \setminus A^{s-1})$; either way, we have $\pbody{r'} \not\subseteq \Old{I^s} \setminus (D^s \setminus A^{s-1})$, so $r' \not \in \ins{r}{\Old{I^s} \setminus (D^s \setminus A^{s-1}), \Old{I^s} \cup A^{s-1}}$ holds.

For the second case, $\nbody{r'} \cap (D^{s-1} \setminus A^{s-1}) \neq \emptyset$ and the induction assumption for property~\eqref{deletionproperty} imply $\nbody{r'} \cap \Old{I^{s-1}} \neq \emptyset$, which clearly implies $\nbody{r'} \cap (\Old{I^s} \cup A^{s-1}) \neq \emptyset$; thus, by definition~\eqref{eq:ruleinstancepn} we have $r' \not \in \ins{r}{\Old{I^s} \setminus (D^s \setminus A^{s-1}), \Old{I^s} \cup A^{s-1}}$; this completes our proof for \eqref{instidanonrepetition}.

Next we prove the $\supseteq$ direction of property~\eqref{instidaequals}. Consider an arbitrary rule instance $r'$ contained in the right-hand side of \eqref{instidaequals}: if $r' \in \ins{r}{(\Old{I^s} \setminus D^s) \cup A^s \appargs A^s \setminus D^{s-1}, D^{s-1} \setminus A^{s-1}}$ holds, then by definition~\eqref{eq:ruleinstancepn} we clearly have $r' \in \ins{r}{(\Old{I^s} \setminus D^s) \cup A^s}$; if we have $r' \in \ins{r}{\Old{I^s} \setminus (D^s \setminus A^{s-1}), \Old{I^s} \cup A^{s-1}}$, then $\pbody{r'} \subseteq \Old{I^s} \setminus (D^s \setminus A^{s-1})$ implies $\pbody{r'} \subseteq (\Old{I^s} \setminus D^s) \cup A^s$; moreover, $\nbody{r'} \cap (\Old{I^s} \cup A^{s-1}) = \emptyset$ and $\nbody{r'} \subseteq \Out{<s}$ jointly imply $\nbody{r'} \cap ((\Old{I^s} \setminus D^s) \cup A^s) = \nbody{r'} \cap ((\Old{I^{s-1}} \setminus D^{s-1}) \cup A^{s-1}) = \emptyset$; therefore, $r' \in \ins{r}{(\Old{I^s} \setminus D^s) \cup A^s}$ holds by definition~\eqref{eq:ruleinstancepn}.

Finally, for the $\subseteq$ direction of property~\eqref{instidaequals}, consider arbitrary $r' \in \ins{r}{(\Old{I^s} \setminus D^s) \cup A^s}$. If $\pbody{r'} \cap (A^s \setminus D^{s-1}) \neq \emptyset$ or $\nbody{r'} \cap (D^{s-1} \setminus A^{s-1}) \neq \emptyset$ holds, then we clearly have $r' \in \ins{r}{(\Old{I^s} \setminus D^s) \cup A^s \appargs A^s \setminus D^{s-1}, D^{s-1} \setminus A^{s-1}}$ by definition~\eqref{eq:ruleinstancepn}. Otherwise, let $F$ be an arbitrary fact in $\pbody{r'}$ and let $G$ be an arbitrary fact in $\nbody{r'}$, then we have $F \in ((\Old{I^s} \setminus D^s) \cup A^s) \setminus (A^s \setminus D^{s-1})$ and $G \not\in (\Old{I^s} \setminus D^s) \cup A^s \cup (D^{s-1} \setminus A^{s-1})$. We next show that $F \in \Old{I^s} \setminus (D^s \setminus A^{s-1})$ and $G \not\in \Old{I^s} \cup A^{s-1}$ hold.

If $F \in (\Old{I^s} \setminus D^s) \setminus (A^s \setminus D^{s-1})$ holds, then $F \in \Old{I^s} \setminus (D^s \setminus A^{s-1})$ trivially holds; if $F \in A^s \setminus (A^s \setminus D^{s-1})$ holds, then we have $F \in A^s \cap D^{s-1} = A^{s-1} \cap D^{s-1} = A^{s-1} \cap D^s$, which in turn implies $F \not \in D^s \setminus A^{s-1}$; moreover, $F \in D^s$ and property~\eqref{deletionproperty} imply $F \in \Old{I^s}$; therefore, $F \in \Old{I^s} \setminus (D^s \setminus A^{s-1})$ holds, as required.

$G \not \in A^s$ and $G \not \in D^{s-1} \setminus A^{s-1}$ jointly imply $G \not\in D^{s-1}$; but then, $G \in \Out{<s}$ implies $G \not \in D^s$, which together with $G \not \in \Old{I^s} \setminus D^s$ ensures $G \not \in \Old{I^s}$; thus, $G \not \in \Old{I^s} \cup A^{s-1}$ holds, as required.

$F$ and $G$ are chosen arbitrarily, so we have $\pbody{r'} \in \Old{I^s} \setminus (D^s \setminus A^{s-1})$ and $\nbody{r'} \not\in \Old{I^s} \cup A^{s-1}$; by definition~\eqref{eq:ruleinstancepn} we have $r' \in \ins{r}{\Old{I^s} \setminus (D^s \setminus A^{s-1}), \Old{I^s} \cup A^{s-1}}$. This completes our proof for \eqref{instidaequals}.
\end{proof}

\begin{claim}
Property~\eqref{nonrecursivecntincrement} holds.
\end{claim}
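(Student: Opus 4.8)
The plan is to reduce the claim to the multiset identity~\eqref{nonrecursiveidaequals} already established in Claim~\ref{claimforidaequals}, together with the observation that a rule in $\nrstrat{\Pi}{s}$ can only inspect the part of a dataset that lies in the strata strictly below $s$. The main obstacle, as elsewhere in this proof, is the strata bookkeeping: pinning down exactly on which strata the sets $\Old{I^s}$, $D^s$, and $A^s$ may differ from $\Old{I}$, $D^{s-1}$, and $A^{s-1}$, and checking that the ``affected'' annotation sets restrict correctly to $\Out{<s}$. Everything else is routine once~\eqref{nonrecursiveidaequals} is in hand.

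First I would read off what $\Add{\Cnr[F]} - \Del{\Cnr[F]}$ is according to Algorithm~\ref{alg:dredcnt}. Between the moment $\Call{Overdelete}{}$ finishes for stratum index $s$ (which fixes $\Del{\Cnr[F]}$) and the moment $\Call{Insert}{}$ finishes for $s$ (which fixes $\Add{\Cnr[F]}$), the nonrecursive counter of any fact is modified only in lines~\ref{alg:dredcnt:ins:incnrloop}--\ref{alg:dredcnt:ins:incnrloop:end}; moreover, when those lines are executed we have $D = D^s$ and $A = A^{s-1}$, since $\Call{Overdelete}{}$ does not touch $A$, the insertion loop has not yet run, and $I$ is still $\Old{I}$. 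Hence each distinct occurrence of $F$ in $(E^+ \cap \Out{s}) \oplus \mapply{\nrstrat{\Pi}{s}}{(\Old{I} \setminus D^s) \cup A^{s-1} \appargs A^{s-1} \setminus D^s, D^s \setminus A^{s-1}}$ increments $\Cnr[F]$ exactly once and nothing else changes it, so $\Add{\Cnr[F]} - \Del{\Cnr[F]} = \mathsf{Occ}(F, (E^+ \cap \Out{s}) \oplus \mapply{\nrstrat{\Pi}{s}}{(\Old{I} \setminus D^s) \cup A^{s-1} \appargs A^{s-1} \setminus D^s, D^s \setminus A^{s-1}})$.

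Next I would rewrite the right-hand side of~\eqref{nonrecursivecntincrement}. By~\eqref{nonrecursiveidaequals}, the multiset $\mapply{\nrstrat{\Pi}{s}}{\Old{I^s} \setminus (D^s \setminus A^{s-1}), \Old{I^s} \cup A^{s-1}}$ is a submultiset of $\mapply{\nrstrat{\Pi}{s}}{(\Old{I^s} \setminus D^s) \cup A^s}$, so the $\ominus$ in~\eqref{nonrecursivecntincrement} is well-defined and equals the complementary summand $\mapply{\nrstrat{\Pi}{s}}{(\Old{I^s} \setminus D^s) \cup A^s \appargs A^s \setminus D^{s-1}, D^{s-1} \setminus A^{s-1}}$. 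Combining this with the previous paragraph and cancelling the common summand $E^+ \cap \Out{s}$, the claim reduces to the multiset equality $\mapply{\nrstrat{\Pi}{s}}{(\Old{I} \setminus D^s) \cup A^{s-1} \appargs A^{s-1} \setminus D^s, D^s \setminus A^{s-1}} = \mapply{\nrstrat{\Pi}{s}}{(\Old{I^s} \setminus D^s) \cup A^s \appargs A^s \setminus D^{s-1}, D^{s-1} \setminus A^{s-1}}$.

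To prove this equality I would use that every $r \in \nrstrat{\Pi}{s}$ has all its positive body atoms over predicates in strata $<s$ (by stratification together with non-recursiveness) and all its negative body atoms over predicates in strata $<s$ (by stratification), so $\ins{r}{\ipos, \ineg \appargs P, N}$ depends only on $\ipos \cap \Out{<s}$, $\ineg \cap \Out{<s}$, $P \cap \Out{<s}$, and $N \cap \Out{<s}$. Then, using the bookkeeping already exploited in earlier claims---namely $D^{s-1}, A^{s-1} \subseteq \Out{<s}$, $D^s \setminus D^{s-1} \subseteq \Out{s}$, $A^s \setminus A^{s-1} \subseteq \Out{s}$ (because $\Call{Overdelete}{}$ and $\Call{Insert}{}$ in stratum $t$ add only facts from $\Out{t}$), and $\Old{I} \cap \Out{<s} = \Old{I^s} \cap \Out{<s} = \Old{I^{s-1}}$---I would compute $((\Old{I} \setminus D^s) \cup A^{s-1}) \cap \Out{<s} = ((\Old{I^s} \setminus D^s) \cup A^s) \cap \Out{<s} = (\Old{I^{s-1}} \setminus D^{s-1}) \cup A^{s-1}$, and likewise $(A^{s-1} \setminus D^s) \cap \Out{<s} = (A^s \setminus D^{s-1}) \cap \Out{<s} = A^{s-1} \setminus D^{s-1}$ and $(D^s \setminus A^{s-1}) \cap \Out{<s} = (D^{s-1} \setminus A^{s-1}) \cap \Out{<s} = D^{s-1} \setminus A^{s-1}$. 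Consequently, for each $r \in \nrstrat{\Pi}{s}$ the instance sets appearing on the two sides of the displayed equality both coincide with $\ins{r}{(\Old{I^{s-1}} \setminus D^{s-1}) \cup A^{s-1} \appargs A^{s-1} \setminus D^{s-1}, D^{s-1} \setminus A^{s-1}}$, so the two multisets are equal and the claim follows.
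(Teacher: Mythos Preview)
Your proposal is correct and follows essentially the same route as the paper: both use Claim~\ref{claimforidaequals} to reduce the right-hand side of~\eqref{nonrecursivecntincrement} to the ``affected'' multiset, and then identify that multiset with what lines~\ref{alg:dredcnt:ins:incnrloop}--\ref{alg:dredcnt:ins:incnrloop:end} actually compute by exploiting that nonrecursive rules in stratum $s$ only read $\Out{<s}$. The paper's proof is much terser---it simply asserts the final identification is ``ensured by'' the algorithm lines---whereas you spell out the strata bookkeeping (restricting $\ipos$, $\ineg$, $P$, $N$ to $\Out{<s}$ and using $D^s \cap \Out{<s} = D^{s-1}$, $A^s \cap \Out{<s} = A^{s-1}$, $\Old{I} \cap \Out{<s} = \Old{I^{s-1}}$) that makes this identification rigorous.
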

\begin{proof}
Due to claim~\ref{claimforidaequals} it is now sufficient to show that for each $F \in ((\Old{I^s} \setminus D^s) \cup A^s) \cap \Old{I^s} \cap \Out{s}$, the following property holds.
\begin{align}
\Add{\Cnr[F]}-\Del{\Cnr[F]} = \mathsf{Occ}(F, (E^+ \cap \Out{s}) \oplus \mapply{\nrstrat{\Pi}{s}}{(\Old{I^s} \setminus D^s) \cup A^s \appargs A^s \setminus D^{s-1}, D^{s-1} \setminus A^{s-1}})
\end{align}
$\nrstrat{\Pi}{s}$ contains only nonrecursive rules, so it is sufficient to prove
\begin{align}
\Add{\Cnr[F]}-\Del{\Cnr[F]} = \mathsf{Occ}(F, (E^+ \cap \Out{s}) \oplus \mapply{\nrstrat{\Pi}{s}}{(\Old{I^s} \setminus D^s) \cup A^{s-1} \appargs A^{s-1} \setminus D^{s-1}, D^{s-1} \setminus A^{s-1}}), \label{nonrecursivecntincrementequivalent}
\end{align}
which is ensured by line~\ref{alg:dredcnt:ins:incnrloop} and line~\ref{alg:dredcnt:ins:incnrloop:end} of the algorithm.
\end{proof}

\begin{claim}
Property~\eqref{nonrecursivenew} holds.
\end{claim}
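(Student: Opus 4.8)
The plan is to exploit that $F \notin \Old{I^s}$ together with $F \in \Out{s}$ makes $F$ a genuinely new fact of stratum $s$: since $\Old{I^s}$ is a fixpoint of $\strat{\Pi}{s}$ and the rules of $\strat{\Pi}{s}$ derive only facts of $\Out{s}$, no fact of $\Out{s}$ outside $\Old{I^s}$ can occur in $\Old{I}$; hence $F \notin \Old{I}$, $F \notin \Old{E}$, and $F \notin E^-$, so $\Old{\Cnr[F]} = 0$ by compatibility of $\Cnr$ with $\Pi$, $\lambda$, and $\Old{E}$. I would then show $\Cnr[F]$ is never decremented: the only decrements of $\Cnr$ occur in lines~\ref{alg:dredcnt:del:decnrloop}--\ref{alg:dredcnt:del:decnrloop:end} of \Call{Overdelete}{}, which process the multiset $(E^- \cap \Out{s}) \oplus \mapply{\nrstrat{\Pi}{s}}{\Old{I} \appargs D \setminus A, A \setminus D}$, and every element of this multiset lies in $\Old{I}$ (because $E^- \subseteq \Old{E} \subseteq \Old{I}$ and every head of an instance in $\ins{r}{\Old{I}}$ with $r \in \nrstrat{\Pi}{s}$ lies in $\Old{I}$, as $\Old{I}$ is a fixpoint for $\strat{\Pi}{s}$). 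Since $F \notin \Old{I}$, this gives $\Del{\Cnr[F]} = 0$; moreover, since the rules of $\strat{\Pi}{s'}$ touch only counters of facts in $\Out{s'}$ and the strata are pairwise disjoint, $\Cnr[F]$ is untouched in every stratum $s' \neq s$, so $\Add{\Cnr[F]}$ equals exactly the number of increments performed by \Call{Insert}{} at stratum $s$.

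Next I would track those increments. In \Call{Insert}{}, $\Cnr$ is modified only in lines~\ref{alg:dredcnt:ins:incnrloop}--\ref{alg:dredcnt:ins:incnrloop:end} (the remaining loops touch only $\Cr$), and at that point $D = D^s$ and $A = A^{s-1}$, so $\Cnr[F]$ is incremented exactly $\mathsf{Occ}\big(F, (E^+ \cap \Out{s}) \oplus \mapply{\nrstrat{\Pi}{s}}{(\Old{I} \setminus D^s) \cup A^{s-1} \appargs A^{s-1} \setminus D^s, D^s \setminus A^{s-1}}\big)$ times. Using the same stratification bookkeeping as in the proof of property~\eqref{nonrecursivecntincrement} — the nonrecursive rules of $\strat{\Pi}{s}$ match every positive and negative body atom strictly below stratum $s$, where $\Old{I} \setminus D^s$ agrees with $\Old{I^s} \setminus D^s$ and where $D^s$ agrees with $D^{s-1}$ — this number equals $\mathsf{Occ}\big(F, (E^+ \cap \Out{s}) \oplus \mapply{\nrstrat{\Pi}{s}}{(\Old{I^s} \setminus D^s) \cup A^{s-1} \appargs A^{s-1} \setminus D^{s-1}, D^{s-1} \setminus A^{s-1}}\big)$, which is therefore the value of $\Add{\Cnr[F]}$.

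Finally I would reshape the right-hand side into the statement. Because $F \notin \Old{E}$, we have $\mathsf{Occ}(F, (\Old{E} \setminus E^-) \cup E^+) = \mathsf{Occ}(F, E^+ \cap \Out{s})$, so it suffices to show that $F$ occurs the same number of times in $\mapply{\nrstrat{\Pi}{s}}{(\Old{I^s} \setminus D^s) \cup A^{s-1} \appargs A^{s-1} \setminus D^{s-1}, D^{s-1} \setminus A^{s-1}}$ as in $\mapply{\nrstrat{\Pi}{s}}{\New{I^s}}$; combined with $\New{I^s} = (\Old{I^s} \setminus D^s) \cup A^s$ from property~\eqref{insertionproperty}, this yields the claim. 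The $\supseteq$-direction of the corresponding sets of instances is routine. For the other direction, let $r \in \nrstrat{\Pi}{s}$ and $r' \in \ins{r}{\New{I^s}}$ with $\head{r'} = F$; all body atoms of $r'$ lie strictly below $s$, where $\New{I^s}$, $\New{I^{s-1}} = (\Old{I^{s-1}} \setminus D^{s-1}) \cup A^{s-1}$ (induction assumption for~\eqref{insertionproperty} at $s-1$), and $(\Old{I^s} \setminus D^s) \cup A^{s-1}$ all coincide, so $r' \in \ins{r}{(\Old{I^s} \setminus D^s) \cup A^{s-1}}$; and since $F \notin \Old{I^s}$ and $\Old{I^s}$ is obtained from $\Old{I^{s-1}}$ as a fixpoint of $\strat{\Pi}{s}$, the instance $r'$ cannot fire on $\Old{I^{s-1}}$, so — by the same case analysis as in the left-hand inclusion of~\eqref{deletionproperty} and the $\supseteq$-direction of~\eqref{insertionproperty}, using the induction assumptions for~\eqref{deletionproperty} and~\eqref{ioldcapaind} — either a positive body atom of $r'$ lies in $A^{s-1} \setminus D^{s-1}$ or a negative body atom lies in $D^{s-1} \setminus A^{s-1}$. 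Hence $r'$ is an affected instance, which gives the required multiset equality. The only delicate point is this instance-correspondence argument; it hinges on the fixpoint property of $\Old{I^{s-1}}$ and on nonrecursive stratum-$s$ rules inspecting only strata below $s$, while everything else is the standard lower-strata bookkeeping reused from the earlier claims.
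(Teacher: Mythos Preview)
Your proof is correct and follows essentially the same outline as the paper: establish $\Old{\Cnr[F]}=\Del{\Cnr[F]}=0$ from $F\notin\Old{I^s}$, identify $\Add{\Cnr[F]}$ with the number of increments performed in lines~\ref{alg:dredcnt:ins:incnrloop}--\ref{alg:dredcnt:ins:incnrloop:end}, and then match that count against the target multiset. The one difference worth noting is in the last step. The paper invokes the already-proved decomposition~\eqref{nonrecursiveidaequals} from Claim~\ref{claimforidaequals} and observes that, since $F\notin\mapply{\nrstrat{\Pi}{s}}{\Old{I^s}}$, the first summand contributes nothing to $F$, so the increment multiset and $\mapply{\nrstrat{\Pi}{s}}{(\Old{I^s}\setminus D^s)\cup A^s}$ agree on $F$. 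You instead re-derive this equality directly by an instance-level case analysis (any nonrecursive instance with head $F$ firing on $\New{I^{s-1}}$ cannot fire on $\Old{I^{s-1}}$, hence is affected). Your argument is self-contained and makes the role of the fixpoint property of $\Old{I^{s-1}}$ explicit; the paper's is shorter because it reuses Claim~\ref{claimforidaequals}, which was needed anyway for property~\eqref{nonrecursivecntincrement}.
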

\begin{proof}
$F \not \in \Old{I^s} \cap \Out{s}$ ensures $\Del{\Cnr[F]} = \Old{\Cnr[F]} = 0$. Moreover, $F \not \in \mapply{\nrstrat{\Pi}{s}}{\Old{I^s}}$ and property~\eqref{nonrecursiveoldequals} jointly imply $F \not \in \mapply{\nrstrat{\Pi}{s}}{\Old{I^s} \setminus (D^s \setminus A^{s-1}), \Old{I^s} \cup A^{s-1}}$. Now if ${F \not\in (E^+ \cap \Out{s}) \oplus \mapply{\nrstrat{\Pi}{s}}{(\Old{I^s} \setminus D^s) \cup A^{s-1} \appargs A^{s-1} \setminus D^{s-1}, D^{s-1} \setminus A^{s-1}}}$, then   by property~\eqref{nonrecursiveidaequals} and $F \not \in (\Old{E} \setminus E^-) \cap \Out{s}$ we clearly have $\mathsf{Occ}(F, ((\Old{E} \setminus E^-) \cup E^+) \oplus \mapply{\nrstrat{\Pi}{s}}{(\Old{I^s} \setminus D^s) \cup A^s}) = 0$; moreover, line~\ref{alg:dredcnt:ins:incnrloop} ensures that the nonrecursive count for $F$ is not incremented in this case, so the left-hand side of \eqref{nonrecursivenew} equals zero as well. Otherwise, each occurrence of $F$ in the multiset $(E^+ \cap \Out{s}) \oplus \mapply{\nrstrat{\Pi}{s}}{(\Old{I^s} \setminus D^s) \cup A^{s-1} \appargs A^{s-1} \setminus D^{s-1}, D^{s-1} \setminus A^{s-1}}$ results in incrementing the nonrecursive count of $F$ by one; together with $F \not \in (\Old{E} \setminus E^-) \cap \Out{s}$ and $F \not \in \mapply{\nrstrat{\Pi}{s}}{\Old{I^s} \setminus (D^s \setminus A^{s-1}), \Old{I^s} \cup A^{s-1}}$ this ensures the correctness of the property.
\end{proof}

\begin{claim}
Property~\eqref{recursivecntincrement} holds.
\end{claim}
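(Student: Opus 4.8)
The plan is to combine the multiset-splitting idea from the proof of Property~\eqref{nonrecursivecntincrement} with the inner induction over loop iterations from the proof of Property~\eqref{recursivecntdecrement}. First I would appeal to Claim~\ref{claimforidaequals}: property~\eqref{recursiveidaequals} writes $\mapply{\rstrat{\Pi}{s}}{(\Old{I^s} \setminus D^s) \cup A^s}$ as the multiset union of $\mapply{\rstrat{\Pi}{s}}{\Old{I^s} \setminus (D^s \setminus A^{s-1}), \Old{I^s} \cup A^{s-1}}$ and $\mapply{\rstrat{\Pi}{s}}{(\Old{I^s} \setminus D^s) \cup A^s \appargs A^s \setminus D^{s-1}, D^{s-1} \setminus A^{s-1}}$, so the right-hand side of \eqref{recursivecntincrement} equals $\mathsf{Occ}(F, \mapply{\rstrat{\Pi}{s}}{(\Old{I^s} \setminus D^s) \cup A^s \appargs A^s \setminus D^{s-1}, D^{s-1} \setminus A^{s-1}})$. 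Hence it suffices to prove, for each $F \in ((\Old{I^s} \setminus D^s) \cup A^s) \cap \Old{I^s} \cap \Out{s}$, that
\[ \Add{\Cr[F]} - \Del{\Cr[F]} = \mathsf{Occ}\bigl(F, \mapply{\rstrat{\Pi}{s}}{(\Old{I^s} \setminus D^s) \cup A^s \appargs A^s \setminus D^{s-1}, D^{s-1} \setminus A^{s-1}}\bigr). \]

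In contrast to the nonrecursive counter, the recursive counter of $F$ is incremented both in the pre-loop block of \Call{Insert}{} (lines~\ref{alg:dredcnt:ins:incrloop}--\ref{alg:dredcnt:ins:incrloop:end}) and repeatedly inside the fixpoint loop (lines~\ref{alg:dredcnt:ins:incrsecondloop}--\ref{alg:dredcnt:ins:NA-update}), so I would argue, exactly as in the proof of Property~\eqref{recursivecntdecrement}, by an inner induction over the iterations of the loop in lines~\ref{alg:dredcnt:ins:loop}--\ref{alg:dredcnt:ins:loop:end}. Let $A^s_0 = A^{s-1}$ and let $A^s_1, \dots, A^s_k$ be the successive values of $A$ after each execution of line~\ref{alg:dredcnt:ins:A-update}, so that $A^s_k = A^s$; the loop terminates because the sets $\Delta_A$ are nonempty, pairwise disjoint, and contained in the finite set $\New{I^s}$, using the already-established $\subseteq$ direction of property~\eqref{insertionproperty}. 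Let $\Add{\Cr[F]}^j$ be the value of $\Cr[F]$ immediately after the pre-loop block (when $j = 0$) or after the counter updates in lines~\ref{alg:dredcnt:ins:incrsecondloop}--\ref{alg:dredcnt:ins:NA-update} of the corresponding iteration (when $1 \leq j \leq k$), so that $\Add{\Cr[F]}^k = \Add{\Cr[F]}$. I would then show by induction on $j$ that
\[ \Add{\Cr[F]}^j - \Del{\Cr[F]} = \mathsf{Occ}\bigl(F, \mapply{\rstrat{\Pi}{s}}{(\Old{I^s} \setminus D^s) \cup A^s_j \appargs A^s_j \setminus D^{s-1}, D^{s-1} \setminus A^{s-1}}\bigr), \]
and the case $j = k$ then yields the displayed equation above.

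For the base case $j = 0$, the pre-loop block increments $\Cr[F]$ once per occurrence of $F$ in $\mapply{\rstrat{\Pi}{s}}{(\Old{I^s} \setminus D^s) \cup A^{s-1} \appargs A^{s-1} \setminus D^s, D^s \setminus A^{s-1}}$; as in the proof of Claim~\ref{claim-oldequals} we have $A^{s-1} \setminus D^s = A^{s-1} \setminus D^{s-1}$, and since every negative body atom of a rule in $\rstrat{\Pi}{s}$ belongs to $\Out{<s}$ while $(D^s \setminus D^{s-1}) \cap \Out{<s} = \emptyset$, this multiset coincides with $\mapply{\rstrat{\Pi}{s}}{(\Old{I^s} \setminus D^s) \cup A^{s-1} \appargs A^{s-1} \setminus D^{s-1}, D^{s-1} \setminus A^{s-1}}$, which is the right-hand side at $j = 0$. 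For the inductive step I would combine the induction hypothesis with the fact that line~\ref{alg:dredcnt:ins:incrsecondloop} increments $\Cr[F]$ exactly $\mathsf{Occ}(F, \mapply{\rstrat{\Pi}{s}}{(\Old{I^s} \setminus D^s) \cup A^s_j \appargs A^s_j \setminus A^s_{j-1}})$ times in the $j$th iteration; then, mirroring equations~\eqref{cntdecrementinduction} and~\eqref{instanceequivalence}, it remains to show that for every $r \in \rstrat{\Pi}{s}$ the sets $\ins{r}{(\Old{I^s} \setminus D^s) \cup A^s_{j-1} \appargs A^s_{j-1} \setminus D^{s-1}, D^{s-1} \setminus A^{s-1}}$ and $\ins{r}{(\Old{I^s} \setminus D^s) \cup A^s_j \appargs A^s_j \setminus A^s_{j-1}}$ are disjoint and their union equals $\ins{r}{(\Old{I^s} \setminus D^s) \cup A^s_j \appargs A^s_j \setminus D^{s-1}, D^{s-1} \setminus A^{s-1}}$. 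Disjointness and the $\subseteq$ inclusion are routine from $A^s_{j-1} \subseteq A^s_j$, from $(A^s_j \setminus A^s_{j-1}) \cap \Out{<s} = \emptyset$ together with $D^{s-1} \subseteq \Out{<s}$, and from the fact that the facts added to $A$ in the $j$th iteration are disjoint from $(\Old{I^s} \setminus D^s) \cup A^s_{j-1}$; for the $\supseteq$ inclusion I would split on whether an instance $r''$ has a positive body atom in $A^s_j \setminus A^s_{j-1}$ (then $r''$ lies in the second set) or not (then $\pbody{r''} \subseteq (\Old{I^s} \setminus D^s) \cup A^s_{j-1}$, any atom of $\pbody{r''}$ lying in $A^s_j \setminus D^{s-1}$ in fact lies in $A^s_{j-1} \setminus D^{s-1}$, and the negative condition over $\Old{I^s} \cup A^s_{j-1}$ agrees with that over $\Old{I^s} \cup A^s_j$ because both sets agree on $\Out{<s}$, so $r''$ lies in the first set).

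I expect the main obstacle to be the set-algebra bookkeeping in the inductive step --- keeping the $P$ and $N$ arguments of the various invocations of $\ins{\cdot}{\cdot}$ aligned as $A$ grows, and justifying that the choice of the ``negative'' dataset is immaterial because every negative body atom of a recursive rule of stratum $s$ comes from a strictly lower stratum and the $\Out{<s}$-parts of $A^{s-1}$, $A^s_0, \dots, A^s_k$, and $A^s$ all coincide. Conceptually, however, nothing new is required beyond the arguments already used for Claim~\ref{claim-oldequals}, Claim~\ref{claimforidaequals}, and Property~\eqref{recursivecntdecrement}.
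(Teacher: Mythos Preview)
Your proposal is correct and follows essentially the same route as the paper: reduce via property~\eqref{recursiveidaequals} from Claim~\ref{claimforidaequals}, then run an inner induction over the iterations of the \Call{Insert}{} loop with the invariant that $\Add{\Cr[F]}^j - \Del{\Cr[F]}$ equals the multiplicity of $F$ in $\mapply{\rstrat{\Pi}{s}}{(\Old{I^s} \setminus D^s) \cup A^s_j \appargs A^s_j \setminus D^{s-1},\,\cdot\,}$, and discharge the inductive step via an instance-level partition identity analogous to~\eqref{instanceequivalence}. The only cosmetic differences are your indexing convention (you take snapshots \emph{after} line~\ref{alg:dredcnt:ins:A-update} and start at $j=0$, whereas the paper takes them at the \emph{beginning} of each iteration starting at $i=1$) and your choice of $N$-argument $D^{s-1}\setminus A^{s-1}$ rather than the paper's $D^{s-1}\setminus A^s_i$; these coincide because $D^{s-1}\subseteq\Out{<s}$ and all $A^s_j$ agree with $A^{s-1}$ on $\Out{<s}$, and indeed you make the disjointness of the two instance sets in the inductive step explicit, which the paper leaves implicit.
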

\begin{proof}
Line~\ref{alg:dredcnt:ins:DeltaA} and line~\ref{alg:dredcnt:ins:A-update} ensure that $\Delta_A$ used in line~\ref{alg:dredcnt:ins:incrsecondloop} is different between iterations of the loop in lines~\ref{alg:dredcnt:ins:loop}--\ref{alg:dredcnt:ins:loop:end}, so the rle instances considered in line~\ref{alg:dredcnt:ins:incrsecondloop} are different between iterations. All these rule instances are in $\bigcup_{r' \in \rstrat{\Pi}{s}}\ins{r'}{(\Old{I^s} \setminus D^s) \cup A^s}$, which is equal to $\bigcup_{r' \in \rstrat{\Pi}{s}}\ins{r'}{\New{I^s}}$ by property~\eqref{insertionproperty}. Therefore, the loop will terminate. Now let $T$ be the total number of iterations; moreover, for each $1 \leq i \leq T$, let $A^s_i$ be the value of $A$ at the beginning of the $i$th iteration of the loop, and let $\Add{\Cr[F]}^i$ be the value of $\Cr[F]$ for each $F \in ((\Old{I^s} \setminus D^s) \cup A^s) \cap \Old{I^s} \cap \Out{s}$ at the same time point. We next prove by induction on $i$ that~\eqref{deltainccntinduction} holds for $1 \leq i \leq T$; then~\eqref{deltainccntinduction} for $i=T$ and property~\eqref{recursiveidaequals} ensure the correctness of the claim.
\begin{align}
\Add{\Cr[F]}^i-\Del{\Cr[F]} = \mathsf{Occ}(F, \mapply{\rstrat{\Pi}{s}}{(\Old{I^s} \setminus D^s) \cup A^s_i \appargs A^s_i \setminus D^{s-1}, D^{s-1} \setminus A^s_i}) \label{deltainccntinduction}
\end{align}

For the base case, we have $A^s_1 = A^{s-1}$. Consider arbitrary $F \in ((\Old{I^s} \setminus D^s) \cup A^s) \cap \Old{I^s} \cap \Out{s}$. Lines~\ref{alg:dredcnt:ins:incrloop} and \ref{alg:dredcnt:ins:incrloop:end} ensure that each occurrence of $F$ in $\mapply{\rstrat{\Pi}{s}}{(\Old{I^s} \setminus D^s) \cup A^{s-1} \appargs A^{s-1} \setminus D^s, D^s \setminus A^{s-1}}$ results in incrementing the corresponding recursive count by one. But then, stratification ensures $\mapply{\rstrat{\Pi}{s}}{(\Old{I^s} \setminus D^s) \cup A^{s-1} \appargs A^{s-1} \setminus D^s, D^s \setminus A^{s-1}} = \mapply{\rstrat{\Pi}{s}}{(\Old{I^s} \setminus D^s) \cup A^{s-1} \appargs A^{s-1} \setminus D^{s-1}, D^{s-1} \setminus A^{s-1}}$, so \eqref{deltainccntinduction} holds for $i=1$, as required.

For the inductive step, assume that~\eqref{deltainccntinduction} holds for $i-1$ where $1 < i \leq T$, and consider arbitrary $F \in ((\Old{I^s} \setminus D^s) \cup A^s) \cap \Old{I^s} \cap \Out{s}$. Lines~\ref{alg:dredcnt:ins:DeltaA}, \ref{alg:dredcnt:ins:A-update}, and \ref{alg:dredcnt:ins:incrsecondloop} jointly imply~\eqref{recursivecntincrementinduction}.
\begin{align}
\Add{\Cr[F]}^i - \Add{\Cr[F]}^{i-1} = \mathsf{Occ}(F, \mapply{\rstrat{\Pi}{s}}{(\Old{I^s} \setminus D^s) \cup A^s_i \appargs A^s_i \setminus A^s_{i-1}}) \label{recursivecntincrementinduction}
\end{align}
We now show that the following holds.
\begin{align}
\begin{array}{@{}r@{\;}r@{\;}r@{}} \label{recursivecntincrementoccurrence}
\mathsf{Occ}(F, \mapply{\rstrat{\Pi}{s}}{(\Old{I^s} \setminus D^s) \cup A^s_{i-1} \appargs A^s_{i-1} \setminus D^{s-1}, D^{s-1} \setminus A^s_{i-1}} + \mathsf{Occ}(F, \mapply{\rstrat{\Pi}{s}}{(\Old{I^s} \setminus D^s) \cup A^s_i \appargs A^s_i \setminus A^s_{i-1}})& \\
= \mathsf{Occ}(F, \mapply{\rstrat{\Pi}{s}}{(\Old{I^s} \setminus D^s) \cup A^s_i \appargs A^s_i \setminus D^{s-1}, D^{s-1} \setminus A^s_i}) &
\end{array}
\end{align}
To this end, it is sufficient to show that for each $r \in \rstrat{\Pi}{s}$, property~\eqref{recursivecntincrementinductioninstancelevel} holds.
\begin{align}
\begin{array}{@{}r@{\;}r@{\;}r@{}} \label{recursivecntincrementinductioninstancelevel}
\ins{r}{(\Old{I^s} \setminus D^s) \cup A^s_{i-1} \appargs A^s_{i-1} \setminus D^{s-1}, D^{s-1} \setminus A^s_{i-1}} \cup \ins{r}{(\Old{I^s} \setminus D^s) \cup A^s_i \appargs A^s_i \setminus A^s_{i-1}} & \\
= \ins{r}{(\Old{I^s} \setminus D^s) \cup A^s_i \appargs A^s_i \setminus D^{s-1}, D^{s-1} \setminus A^s_i}
\end{array}
\end{align}
$(A^s_{i-1} \setminus D^{s-1}) \cup (A^s_i \setminus A^s_{i-1}) = A^s_i \setminus D^{s-1}$ and $D^{s-1} \setminus A^s_{i-1} = D^{s-1} \setminus A^s_i \subseteq \Out{<s}$ ensure that the $\subseteq$ direction of \eqref{recursivecntincrementinductioninstancelevel} holds. To see that the $\supseteq$ direction holds as well, consider arbitrary $r' \in \ins{r}{(\Old{I^s} \setminus D^s) \cup A^s_i \appargs A^s_i \setminus D^{s-1}, D^{s-1} \setminus A^s_i}$. If $\pbody{r'} \cap (A^s \setminus A^{s-1}) \neq \emptyset$, then we clearly have $r' \in \ins{r}{(\Old{I^s} \setminus D^s) \cup A^s_i \appargs A^s_i \setminus A^s_{i-1}}$. If $\pbody{r'} \cap (A^s_i \setminus A^s_{i-1}) = \emptyset$, then definition~\eqref{eq:ruleinstancepn} ensures that we have $\pbody{r'} \subseteq ((\Old{I^s} \setminus D^s) \cup A^s_i) \setminus (A^s_i \setminus A^s_{i-1}) = (\Old{I^s} \setminus D^s) \cup A^s_{i-1}$. There are two possibilities here: if $\pbody{r'} \cap (A^s_i \setminus D^{s-1}) \neq \emptyset$, then $\pbody{r'} \cap (A^s_i \setminus A^s_{i-1}) = \emptyset$ implies $\pbody{r'} \cap (A^s_{i-1} \setminus D^{s-1}) \neq \emptyset$; if $\nbody{r'} \cap (D^{s-1} \setminus A^s_i) \neq \emptyset$, then clearly $\nbody{r'} \cap (D^{s-1} \setminus A^s_{i-1}) \neq \emptyset$ holds as well; either way, we have $F \in \ins{r}{(\Old{I^s} \setminus D^s) \cup A^s_{i-1} \appargs A^s_{i-1} \setminus D^{s-1}, D^{s-1} \setminus A^s_{i-1}}$. Therefore, the $\supseteq$ direction of property~\eqref{recursivecntincrementinductioninstancelevel} holds. But then, property~\eqref{recursivecntincrementoccurrence} holds, which together with \eqref{recursivecntincrementinduction} and the induction assumption for~\eqref{deltainccntinduction} ensures that~\eqref{deltainccntinduction} holds for $i$ as well.
\end{proof}

\begin{claim}
Property~\eqref{recursivenew} holds.
\end{claim}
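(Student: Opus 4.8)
The plan is to follow the argument used for Property~\eqref{nonrecursivenew}, transposed to the recursive counters, together with the bookkeeping already carried out for Property~\eqref{recursivecntincrement}. Fix a fact $F \in (((\Old{I^s} \setminus D^s) \cup A^s) \setminus \Old{I^s}) \cap \Out{s}$; since $F \in \Out{s}$ and $F \notin \Old{I^s}$, fact $F$ does not belong to $\Old{I}$, so compatibility of the input counters gives that $\Cr[F] = 0$ holds at the start of the run. First I would record that $\Del{\Cr[F]} = 0$ and that $\mathsf{Occ}(F, \mapply{\rstrat{\Pi}{s}}{\Old{I^s} \setminus (D^s \setminus A^{s-1}), \Old{I^s} \cup A^{s-1}}) = 0$: by the reasoning in the proof of Claim~\ref{claimrighthanddeletionproperty}, every rule instance considered during \Call{Overdelete}{} for stratum $s$, as well as every instance contributing to $\mapply{\rstrat{\Pi}{s}}{\Old{I^s} \setminus (D^s \setminus A^{s-1}), \Old{I^s} \cup A^{s-1}}$, lies in $\ins{r}{\Old{I^s}}$ for some ${r \in \strat{\Pi}{s}}$ and hence derives only facts in $\Old{I^s}$; so the recursive counter of $F$ is never decremented and no such instance has head $F$.

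Next I would observe that the induction on the insertion loop used to establish~\eqref{deltainccntinduction} in the proof of Property~\eqref{recursivecntincrement} nowhere relies on $F$ belonging to $\Old{I^s}$: the per-iteration identity~\eqref{recursivecntincrementinduction}, the multiset identity~\eqref{recursivecntincrementoccurrence}, and the rule-instance identity~\eqref{recursivecntincrementinductioninstancelevel} all hold for an arbitrary $F$, and in the base case the value of $\Cr[F]$ at the start of \Call{Insert}{} is exactly the value left by \Call{Overdelete}{}, which we have just seen is $0$. Re-running that induction for our $F$ therefore gives, at the final iteration (where $A = A^s$), that ${\Add{\Cr[F]} = \mathsf{Occ}(F, \mapply{\rstrat{\Pi}{s}}{(\Old{I^s} \setminus D^s) \cup A^s \appargs A^s \setminus D^{s-1}, D^{s-1} \setminus A^s})}$, and ${D^{s-1} \setminus A^s = D^{s-1} \setminus A^{s-1}}$ since $D^{s-1}$ and $A^s \setminus A^{s-1}$ lie in disjoint strata.

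Finally I would invoke Property~\eqref{recursiveidaequals} of Claim~\ref{claimforidaequals}, which writes $\mapply{\rstrat{\Pi}{s}}{(\Old{I^s} \setminus D^s) \cup A^s}$ as the multiset union of $\mapply{\rstrat{\Pi}{s}}{\Old{I^s} \setminus (D^s \setminus A^{s-1}), \Old{I^s} \cup A^{s-1}}$ and $\mapply{\rstrat{\Pi}{s}}{(\Old{I^s} \setminus D^s) \cup A^s \appargs A^s \setminus D^{s-1}, D^{s-1} \setminus A^{s-1}}$. Fact $F$ has multiplicity $0$ in the first summand by the first paragraph, so its multiplicity in $\mapply{\rstrat{\Pi}{s}}{(\Old{I^s} \setminus D^s) \cup A^s}$ equals its multiplicity in the second summand, which by the previous paragraph is $\Add{\Cr[F]}$; this is exactly equation~\eqref{recursivenew}.

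I expect the main obstacle to be the middle step: making precise that the induction behind Property~\eqref{recursivecntincrement} transfers verbatim to a freshly derived fact $F \notin \Old{I^s}$. One must check that its base case correctly identifies the pre-loop value of $\Cr[F]$ with the post-overdeletion value, that loop termination (hence the identity $A = A^s$ at the end) is unaffected, and that each ``$F \in \Old{I^s}$'' appearing in that earlier argument was only a restriction of scope and not an essential hypothesis. Everything else is routine arithmetic with multiset multiplicities together with Claims~\ref{claimrighthanddeletionproperty} and~\ref{claimforidaequals}.
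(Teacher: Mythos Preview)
Your proposal is correct and follows essentially the same route as the paper: use $F\notin\Old{I^s}$ to obtain $\Del{\Cr[F]}=0$ and $\mathsf{Occ}(F,\mapply{\rstrat{\Pi}{s}}{\Old{I^s}\setminus(D^s\setminus A^{s-1}),\Old{I^s}\cup A^{s-1}})=0$, reuse the insertion-loop induction from the proof of Property~\eqref{recursivecntincrement} to compute $\Add{\Cr[F]}$, and then apply Property~\eqref{recursiveidaequals}. Your write-up is in fact more explicit than the paper's (which just says ``in the same way as in the proof for the previous claim''), including the small bookkeeping step $D^{s-1}\setminus A^s = D^{s-1}\setminus A^{s-1}$.
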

\begin{proof}
$F \not \in \Old{I^s}$ implies $\Del{\Cr[F]} = 0$, so we have $\Add{\Cr[F]} = \mathsf{Occ}(F, \mapply{\rstrat{\Pi}{s}}{(\Old{I^s} \setminus D^s) \cup A^s \appargs A^s \setminus D^{s-1}, D^{s-1} \setminus A^{s-1}})$ in the same way as in the proof for the previous claim. Moreover, $F \not\in \Old{I^s}$ ensures $\mathsf{Occ}(F, \mapply{\rstrat{\Pi}{s}}{\Old{I^s} \setminus (D^s \setminus A^{s-1}), \Old{I^s} \cup A^{s-1}}) = 0$. Therefore, by property~\eqref{recursiveidaequals} we have $\Add{\Cr[F]}  = \mathsf{Occ}(F, \mapply{\rstrat{\Pi}{s}}{(\Old{I^s} \setminus D^s) \cup A^s})$, as required.
\end{proof}

}{}

\end{document}